\newtheorem{lemma}{Lemma}
\newtheorem{theorem}{Theorem}
\newtheorem{proposition}{Proposition}
\newtheorem{definition}{Definition}
\newtheorem{assumption}{Assumption}
\newcommand{\be}{\begin{equation}}
\newcommand{\ee}{\end{equation}}
\newcommand{\prob}{\mathbb Pr}
\newcommand{\E}{\mathbb E}
\newcommand{\Real}[1]{ { {\mathbb R}^{#1} } }
\newcommand{\T}{\mathrm{T}}
\newcommand{\thbs}{\theta^\prime}
\newcommand{\indic}[1]{{\mathbbm{1}\left(#1\right)}}
\newcommand{\sign}{\mbox{sign}}
\def\argmin{\operatornamewithlimits{arg\,min}}
\newcommand{\labitem}[2]{%
\def\@itemlabel{\textbf{#1}}
\item
\def\@currentlabel{#1}\label{#2}}
\begin{document}
\title{\vspace*{0mm}Signed-Perturbed Sums Estimation of ARX Systems:\\ Exact Coverage and Strong Consistency\\[2mm] \sc{Extended Version}\thanks{The work of A.\ Car{\`e} was supported
in part by the Australian Research Council
(ARC) under Grant DP130104028 and in part by the European Research Consortium for Informatics and
Mathematics (ERCIM). The work of E.\ Weyer was supported
in part by the Australian Research Council
(ARC) under Grant DP130104028. The work of B.\ Cs.\ Cs\'aji was supported in part by the European Union within the framework of the National Laboratory for Autonomous Systems, RRF-2.3.1-21-2022-00002, and by the TKP2021-NKTA-01 grant of the National Research, Development and Innovation Office (NRDIO), Hungary. The work of M.\ C.\ Campi was supported in part by the   Ministry of University and Research (MUR), PRIN 2022 Project N.2022RRNAEX (CUP: D53D23001440006).}}

\date{}

\author{Algo Car{\`e}\thanks{Dept.\ of Information Engineering, University of Brescia, IT 
		(\href{mailto:algo.care@unibs.it}{\color{black}\texttt{algo.care@unibs.it}}, \href{mailto:marco.campi@unibs.it}{\color{black}\texttt{marco.campi@unibs.it}}).}
\and Erik Weyer\thanks{Dept.\ of Electrical and Electronic Engineering, The University of Melbourne, AU
 		(\href{mailto:ewey@unimelb.edu.au}{\color{black}\texttt{ewey@unimelb.edu.au}}).}
\and  Bal{\'a}zs Cs.\ Cs{\'a}ji\thanks{Inst.\ for Computer Science and Control (SZTAKI), Hungarian Research Network (HUN-REN), Budapest, HU;\vspace{-0.3mm} Dept.\ of Probability Theory and Statistics, 
		E{\"o}tv{\"o}s Lor{\'a}nd University (ELTE), Budapest, HU
		(\href{mailto:csaji@sztaki.hu}{\color{black}\texttt{csaji@sztaki.hu}}).}
\and Marco C.\ Campi\footnotemark[2]}

\maketitle

\begin{abstract}
Sign-Perturbed Sums (SPS) is a system identification method that constructs confidence regions for the unknown system parameters. In this paper, we study
SPS for ARX systems, and establish that the confidence regions are guaranteed to include the true model parameter with exact, user-chosen, probability under mild statistical assumptions, a property that  holds true for any 
finite number of observed input-output data. 
Furthermore, we prove the strong consistency of the method, that is, as the number of data points increases, the confidence region gets smaller and smaller  and will  asymptotically almost surely exclude any parameter value different from the true one. In addition, we also show that, asymptotically, the SPS region  is included in an ellipsoid which is  marginally larger  than the  confidence ellipsoid obtained from the asymptotic theory of system
identification.
The results are theoretically proven and illustrated in a simulation example.
\vspace{10mm}
\end{abstract}

\newpage
\section{Introduction}
Estimating parameters of unknown systems based on noisy observations is a classical problem in  system identification, as well as signal processing, machine learning and statistics. Standard solutions such as the method of least squares (LS) or, more generally, prediction error methods provide
{\em point estimates}. In many situations (for
example when the safety, stability or quality of a process
has to be guaranteed), a point estimate needs to be complemented by a {\em confidence region} that certifies the accuracy of the estimate and serves as a basis for ensuring robustness. If the noise is known to belong to a given bounded set, set membership approaches can be used to compute the region of the parameter values that are consistent with the observed data, see e.g., \cite{milanese2013bounding,MILANESETaragna2005,
KiefferWalter2011,MILANESENovara2011,Quincampoix2004,Coutinho2009,
KARIMSHOUSHTARINovara2020}. The need for deterministic  priors  on the noise can be relaxed by working in a probabilistic framework, see e.g., \cite{HANEBECK1999,Dabbene2014}. However, traditional methods for construction of confidence regions in a probabilistic setting rely on approximations based on asymptotic results and are valid only if the number of observed data tends to infinity, see e.g., \cite{Ljung1999}. In spite of the well-known fact that evaluating finite-sample estimates based on asymtptic results can lead to misleading conclusions  \cite{garatti2004assessing},  the study of the finite-sample properties of system identification algorithms has remained a niche research topic until recent times.

In this regard, a notable (and now-expanding) literature has investigated the connection between certain characteristics of the system model at hand and the rates at which the system parameters can be learnt from data. Seminal works in this line, which addressed in particular the learning of {\em finite impulse response} (FIR) and {\em autoregressive exogenous} (ARX) models, are \cite{Weyer1996,Goldenshluger1998,weyer1999finite, Weyer2000auto, Vidyasagar2003}; nowadays, several finite-sample studies for various classes of linear \cite{pereira2010learning,shah2012linear,hardt2018gradient,oymak2019non,
PappasCDC2019,TsiamisPappas2021,sarkar2021finite,oymak2022} and nonlinear \cite{vidyasagar2006learning,foster2020learning,sattar2022non,ManiaJordanRecht2022}  systems are available, also in connection to relevant control frameworks,  \cite{abbasi2011regret,boczar2018finite,dean2020sample,FattahiMatniSojoudi2020},    see \cite{tsiamis2023statistical} for a recent survey and more references.  While confidence regions for the unknown parameters are easily obtained as  valuable side products of the investigations mentioned above, these regions are typically  conservative as they have rigid shapes, parametrised by some known characteristics of the system or of the noise, and their validity relies on uniform bounds.  On the other hand, the goal of producing nonconservative confidence regions by exploiting the observed dataset along more flexible approaches was pursued by another, complementary research effort, a product of which is the {\em Sign-Perturbed Sums} (SPS) algorithm, which forms the subject of this paper.

SPS was introduced in \cite{Csaji2012a} with the aim of constructing finite-sample regions that include the unknown parameter with an {\em exact}, user-chosen probability in a quasi distribution-free set-up. The reader is referred to \cite{CareFiniteSample2018} for a discussion on the mutual relation between SPS and other  finite-sample methods, such as the bootstrap-style Perturbed Dataset Methods of \cite{kolumban2015perturbed} and the
 Leave-Out Sign-Dominant Correlation Regions (LSCR) method, a (more conservative) predecessor of SPS that was introduced in \cite{Campi2005} and then extended to quite general classes of systems, and applied in a variety of contexts, see e.g., \cite{Dalai2007, Granichin2012,KonstGranichin2016,han2018}.
 
  SPS was studied from a computational point of view in \cite{KiefferWalter2013} and extended to a distributed set-up in \cite{Zambianchi18}. Applications of SPS can be found in several  domains, ranging from mechanical engineering \cite{VolkovaGran2017,volkova2018possibility} and technical physics \cite{evstifeev2019strength,granichin2021randomized,volkov2022randomized}, to wireless sensor networks \cite{Calisti2017} and social sciences \cite{Trapitsin2018}.
 Moreover, the SPS idea constitutes a core technology of several recent algorithms, including techniques for state estimation \cite{Polterauer2015}, for the identification of state-space systems \cite{baggio2022finite,Szentpeteri2023}, error-in-variables systems \cite{MORAVEJKHORASANI2020Automatica,MORAVEJKHORASANI2020TAC}, and for kernel-based estimation  \cite{csaji2019distribution,Baggio2022}.

From the strictly theoretical point of view, SPS was  studied in \cite{SPSPaper2}  for linear regression models 
where the regressors are independent of the noise, which is, in particular, the case for open-loop FIR systems. In that setting, it was shown that SPS provides exact confidence regions for the parameter vector and it guarantees the inclusion of the least-squares estimate (LSE) in the confidence region.  
The main assumptions on the noise  in \cite{SPSPaper2} are that it forms an independent sequence and that its distribution is symmetric about zero; however, the distribution is otherwise unknown and it can change, even in each time-step.

At the beginning of this paper, we extend the SPS method to autoregressive exogenous (ARX) systems and show that it has the same finite-sample properties as SPS for FIR systems. In the rest of the paper, we develop an asymptotic analysis of the extended SPS method. Although the characterising property of SPS is that of providing exact, finite-sample guarantees, its asymptotic properties are also of interest because they shed light on the capability of SPS to exploit the information 
carried by a growing amount of data. For example, they  play a role in the important problem of detecting model misspecifications, see    \cite{care2021undermodelling}. The asymptotic analysis of the SPS algorithm of \cite{SPSPaper2} was carried out in \cite{AutomaticaSPS2017}, but that analysis does not apply to the ARX case because of the existing correlation between the regressor vector and the system output. 
In this paper we show that also SPS for ARX systems is strongly consistent, in the sense that the confidence region shrinks around the true parameter and, asymptotically, all parameter values different from the true one will be excluded. Moreover,
the asymptotic size and shape of the SPS confidence region is shown to be included in a marginally inflated version of the confidence ellipsoids obtained using asymptotic system identification theory.\\

\noindent{\em Structure of the paper\\}
\noindent The paper is organized as follows. In the next section we introduce the problem setting and recall the least-squares estimate for ARX models. Then, in Section \ref{sps-method}, the SPS method for ARX systems is  presented along with its fundamental finite-sample properties (Theorem \ref{th:exact-confidence}). The asymptotic results are provided in Section \ref{Section:AsymptoticResults} (Theorems \ref{theorem-consistency} and \ref{theorem-asymptotic-shape}).  A simulation example illustrating the theoretical properties is given in Section \ref{simulations}, and conclusions are drawn in Section \ref{conclusions}. The proofs of the theorems are all postponed to Appendices.

A preliminary version of the SPS algorithm for ARX systems was presented in \cite{Csaji2012a}, where a theorem on its finite-sample guarantees was proven under slightly stronger assumptions than those of Theorem \ref{th:exact-confidence} in this paper. The Strong Consistency Theorem (Theorem \ref{theorem-consistency}) and the Asymptotic Shape Theorem (Theorem \ref{theorem-asymptotic-shape}) are stated and proven in this paper for the first time.

\section{Problem Setting}
\subsection{Data generating system and problem formulation}
The data generating ARX system is given by
\begin{equation}
 Y_t+a^*_1Y_{t-1}+\cdots a_{n_a}^*Y_{t-n_a}= b_1^*U_{t-1}+\cdots b_{n_b}^*U_{t-n_b}+N_t,
\vspace{1mm}
\label{ARXsystem}
\end{equation}
where $Y_t\in\Real{}$ is the output, $U_t\in\Real{}$ the input and $N_t\in\Real{}$ the noise at time $t$.
Equation \eqref{ARXsystem} can be written in linear regression form as 
\begin{eqnarray}
Y_		t&\,=\,&\varphi_t^\mathrm{T}\theta^*+N_t, \label{Regression} \\
\varphi_t&\,\triangleq\,&[\,-Y_{t-1}, \ldots, -Y_{t-n_a}, U_{t-1}, \ldots, U_{t-n_b}\,]^\T ,\\
\theta^*&\,\triangleq\,&[\,a_1^*, \ldots, a_{n_a}^*, b_1^*, \ldots, b_{n_b}^*\,]^\T .
\end{eqnarray}

{\bf Aim:} Construct a confidence region with a user-chosen coverage probability $p$ for the true parameter $\theta^\ast$ from a finite sample of  size $n$, that is, from the regressors $\varphi_1,\ldots,\varphi_n$ and the outputs $Y_1,\ldots,Y_n$.\\

We make the following two assumptions.

\begin{assumption}\label{ass:knownorders}
$\theta^\ast$ is a deterministic  vector, and the orders $n_a$ and $n_b$ are known.
\end{assumption}

\begin{assumption}\label{ass:NtbasicANDUtbasic}
The initial conditions {\em(}$Y_0,\ldots, Y_{1-n_a}$ and $U_0,\ldots,U_{1-n_b}$ in $\varphi_1${\em)} and the input sequence $U_1,\ldots,U_n$ are deterministic, and the stochastic noise sequence $N_1,\ldots,N_n$ is symmetrically distributed about zero {\em(}that is, for every $s_t\in\{1,-1\}$, $t=1,2,\ldots,n$, the joint probability distribution of $(s_1 N_1,\ldots,s_n N_n)$ is the same as that of  $(N_1,\ldots,N_n)${\em)}, and otherwise generic.
\end{assumption}
Assumption \ref{ass:NtbasicANDUtbasic} implies that $\E[N_t]=0$ and that noise samples are uncorrelated, i.e., $\E[N_tN_{t-\tau}]=0,\forall \tau\neq 0$, when the expected values exist. However, independence of $\{N_t\}$ is {\em not} assumed (e.g., the value of $|N_{t+1}|$ can be a function of the past values $|N_t|,|N_{t-1}|,\ldots$). Moreover, the marginal distribution of $N_t$ can be time-varying (that is, the noise is not necessarily identically distributed). The assumption that the input is deterministic corresponds to an open-loop configuration. We also note that the results remain valid with some additional generality when $\{U_t\}$ is stochastic and the assumption that $N_1,\ldots,N_n$ is symmetrically distributed about zero holds conditionally on $\{U_t\}$.

\subsection{Least-squares estimate (LSE)}
\label{least-squares}
Let $\theta$ be a generic parameter
\begin{equation}
 \theta=[\,a_1, \ldots, a_{n_a}, b_1, \ldots, b_{n_b}\,]^\T .
\end{equation}
and let $d=n_a+n_b$ be the number of elements in $\theta$. Let the {\em predictors} be given by
$$\hat{Y}_{t}(\theta)\,\triangleq\, \varphi_t^\T \theta,$$
and the {\em prediction errors} by
\begin{equation}\hat{N}_t(\theta)\,\triangleq\,Y_t - \hat{Y}_{t}(\theta)=Y_{t}-\varphi_t^\T \theta.\label{prederr}\end{equation}
The LSE is found by minimising the sum of the squared prediction errors, that is,
\begin{equation}
\label{sumofsquaredpe}
\hat{\theta}_n\,\triangleq\,
\argmin_{\theta \in \mathbb{R}^d}\sum_{t=1}^n{\hat{N}}^2_t(\theta) =
\argmin_{\theta \in \mathbb{R}^d}\sum_{t=1}^n(Y_t-\varphi_t^\T \theta)^2.
\end{equation}The solution can be found by solving the  normal equation,
\begin{equation}
\label{normal-equation}
  \sum_{t=1}^n \varphi_t\, {\hat{N}}_t(\theta) =\sum_{t=1}^n
  \varphi_t(Y_t-\varphi_t^\T \theta)=0,
\end{equation}
which, when $\sum_{t=1}^n \varphi_t\varphi_t^\T $ is invertible, has the (unique) solution
\begin{equation}
\label{eq:LSsolutions}
\hat{\theta}_n\,=\, \bigg(\sum_{t=1}^n \varphi_t\varphi_t^\T \bigg)^{\!-1}\bigg( \sum_{t=1}^n \varphi_tY_t\bigg).
\end{equation}

\section{Construction of an Exact Confidence Region}
\label{sps-method}

Before presenting the SPS method for ARX systems, we first recall the construction of the confidence region for FIR systems when $n_a=0$ and the regressors $\varphi_t=[U_{t-1},\ldots, U_{t-n_b}]^\T $ are independent of the noise sequence (this is the case dealt with in \cite{SPSPaper2}).

\subsection{SPS when the regressors are independent of the noise}
The fundamental step of the SPS algorithm consists in generating $m-1$ {\em sign-perturbed sums} by randomly perturbing the signs of the prediction errors in the normal equation \eqref{normal-equation}, that is, for $ i=1,\ldots, m-1$,  we define
\begin{eqnarray*}
H_i(\theta)& = & \sum_{t=1}^n \alpha_{i,t} \varphi_t(Y_t-\varphi_t^\T \theta)  \\ &=& \sum_{t=1}^n \alpha_{i,t}\varphi_t\varphi_t^\T  \tilde{\theta}+ \sum_{t=1}^n\alpha_{i,t} \varphi_t N_t, \ \ \
\end{eqnarray*}
where $\tilde{\theta}=\theta^\ast-\theta$, and $\{\alpha_{i,t}\}$ are random signs, i.e.,  i.i.d. random variables that take on the values $\pm 1$  with probability 1/2 each.
For a given $\theta$, the {\em reference sum} is instead defined as
$$
H_0(\theta) = \sum_{t=1}^n \varphi_t(Y_t-\varphi_t^\T \theta) = \sum_{t=1}^n \varphi_t\varphi_t^\T  \tilde{\theta}+ \sum_{t=1}^n \varphi_t N_t.
$$
 For $\theta = \theta^*$, these sums can be simplified to
\begin{eqnarray*}
H_0(\theta^*)  &\!\!\!\!=\!\!\!\!& \sum_{t=1}^n \varphi_t N_t,\\
H_i(\theta^*)  &\!\!\!\!=\!\!\!\!& \sum_{t=1}^n \alpha_{i,t} \varphi_t N_t=\sum_{t=1}^n \pm \varphi_t N_t,
\end{eqnarray*}
where in the last equation we have written  $\pm$ instead of $\alpha_{i,t}$ for intuitive understanding. The crucial observation is that, since the regressors are independent of the noise, and the noise is jointly symmetric, it follows that 
$H_0(\theta^*)$ and $H_i(\theta^*)$ have the {\em same distribution}, and there is 
no reason why  $\|H_{0}(\theta^*)\|^2$ ($\triangleq H_{0}(\theta^*)^\T  H_{0}(\theta^*)$) should be bigger or smaller than any other  $\|H_i(\theta^*)\|^2,\ i=1,\ldots,m-1$. In fact, in \cite{SPSPaper2} it was proven that the probability that $\|H_{0}(\theta^*)\|^2$ is the $k\hspace{0.2mm}$th largest one in the ordering of the $m$ values $\{\|H_i(\theta^*)\|^2\}_{i=0}^{m-1}$ is exactly $1/m$, and the probability that it is among the  $q$ largest ones is $q\cdot \frac{1}{m}$. The SPS region with confidence $1-\frac{q}{m}$ was then defined in \cite{SPSPaper2} as the set of $\theta$'s such that $\|H_0(\theta)\|^2$ is  {\em not} among the $q$th largest values in the ordering of $\{\|H_i(\theta^*)\|^2\}_{i=0}^{m-1}$. 

Another crucial observation is the following. For ``{\em large enough}'' $\|\tilde{\theta}\|$, we will have that
$$
\bigg\|\sum_{t=1}^n \varphi_t\varphi_t^\T  \tilde{\theta} + \sum_{t=1}^n \varphi_t N_t\,\bigg\|^2 > \bigg\|\sum_{t=1}^n \pm \varphi_t\varphi_t^\T  \tilde{\theta} + \sum_{t=1}^n \pm \varphi_t N_t\,\bigg\|^2,
$$
with ``{\em high probability}'' since $\sum_{t=1}^n \varphi_t\varphi_t^\T  \tilde{\theta}$ on the left-hand side  increases faster than $\sum_{t=1}^n \pm \varphi_t\varphi_t^\T  \tilde{\theta}$ on the right-hand side. Hence, for $\|\tilde{\theta}\|$ large enough, $\|H_0(\theta)\|^2$ dominates in the ordering of $\{\|H_i(\theta)\|^2\}_{i=0}^{m-1}$, and values  away from $\theta^*$ will therefore be excluded from the confidence region, see \cite{AutomaticaSPS2017} for a detailed analysis. 

\subsection{Main idea behind SPS for ARX systems}
In the ARX case, the idea illustrated above cannot be applied directly since the distribution of the unperturbed sequence $\{\varphi_t N_t\}$ is different from the distribution of the perturbed one, $\{ \alpha_{i,t} \varphi_t N_t \}$, because  $\varphi_t$ depends on the unperturbed noise $\{N_t\}$. Therefore, the distribution of $H_0(\theta^\ast)$ is different from that of $H_i(\theta^\ast)$, $i=1,\ldots,m-1$.  A key idea in the SPS algorithm for ARX systems is to generate regressors, denoted $\bar{\varphi}_{i,t}(\theta)$,  such that $\{\varphi_t N_t\}$ and $\{\alpha_{i,t}\bar{\varphi}_{i,t}(\theta^* )N_t \}$ have the same distribution. 
The elements of $\bar{\varphi}_{i,t}(\theta)$
include, instead of the observed outputs,  the outputs of the system corresponding to the parameter $\theta$ fed with the perturbed noise $\{\alpha_{i,t} N_t(\theta)\}$. 
Thus, the perturbed output sequence $\bar{Y}_{i,1}(\theta),\ldots,\bar{Y}_{i,n}(\theta)$ is generated for every $\theta=[\,a_1, \ldots, a_{n_a}, b_1, \ldots, b_{n_b}\,]^\T$ according to equation
\begin{equation}
 \bar{Y}_{i,t}(\theta)+a_1 \bar{Y}_{i,t-1}(\theta)+\cdots a_{n_a}\bar{Y}_{i,{t-n_a}}(\theta)\triangleq b_1 U_{t-1}+\cdots b_{n_b}U_{t-n_b}+\alpha_{i,t}\hat{N}_t(\theta),
\vspace{1mm}
\label{perturbedSystemLong}
\end{equation}
where $\hat{N}_t(\theta)$ is given by (\ref{prederr}), and the initial conditions for $\bar{Y}_{i,t}(\theta)$ are $\bar{Y}_{i,t}(\theta)\triangleq Y_t$ for $1-n_a\leq t\leq 0$.
The re-generated regressor is then given as
\begin{equation}
\label{Regressorbar}
 \bar{\varphi}_{i,t}(\theta) \triangleq [-\bar{Y}_{i,t-1}(\theta),\ldots,-\bar{Y}_{i,t-n_a}(\theta),U_{t-1},\ldots,U_{t-n_b} ]^\T.
\end{equation} Using this perturbed regressor, the analogue of functions $H_0(\theta)$ and $H_i(\theta)$ defined above can be constructed, and we denote them $S_0(\theta)$ and $S_i(\theta)$ to avoid confusion. Moreover, for $\theta=\theta^*$, $S_0(\theta^*)$ and $S_i(\theta^*)$ have the same ordering property as $H_0(\theta^*)$ and $H_i(\theta^*)$ for FIR systems.

\subsubsection{SPS for ARX systems}
The SPS method for ARX systems is now detailed in two distinct parts. The first, which is called ``initialisation'', sets the main global parameters of SPS and generates the random objects needed for the construction. In the initialisation, the user provides the desired confidence probability $p$. The second part evaluates an indicator function which decides whether or not a particular parameter value $\theta$  is included in the confidence region.

The pseudocode for the initialisation and the indicator function is  given in Tables \ref{inittab} and  \ref{indtab}, respectively. Note that in point 4 of Table \ref{indtab}, in the computation of $S_0(\theta)$ and $S_i(\theta)$, the vectors $\frac{1}{n}\sum\limits_{t=1}^{n}{\, \varphi_t {\hat{N}}_t(\theta)}$ and $\frac{1}{n}\sum\limits_{t=1}^{n}{\, \alpha_{i,t} \, \bar{\varphi}_{i,t}(\theta){\hat{N}}_t(\theta)}$ have been premultiplied by the matrices
$R^{-\frac{1}{2}}_n=(\frac{1}{n}\sum\limits_{t=1}^n\varphi_t \varphi_t^\T )^{-\frac{1}{2}}$ and $R^{-\frac{1}{2}}_{i,n}(\theta)= (\frac{1}{n}\sum\limits_{t=1}^n\bar{\varphi}_{i,t}(\theta) \bar{\varphi}_{i,t}^\T (\theta))^{-\frac{1}{2}}$. The reason is that this results in better shaped confidence regions, as discussed in Section \ref{sec:asympshape}.
The permutation $\pi$ in point 3 in the initialisation (Table \ref{inittab}) is only used in the indicator function  (Table \ref{indtab}) to  decide which function $||S_i(\theta)||^2$ or $||S_j(\theta)||^2$ is the ``{\em larger}'' if $||S_i(\theta)||^2$ and $||S_j(\theta)||^2$ take on the same value. More precisely, given $m$ real numbers $\{Z_i\}$, $i = 0,\ldots,m-1$, we define a strict total order $\succ_{\pi}$ by
\begin{align}Z_k \succ_{\pi} Z_j \hspace{2mm}\hbox{if and only if}\hspace{25mm}\nonumber \\
\left(\,Z_k > Z_j\,\right) \hspace{2mm}\hbox{or}\hspace{2mm} \left(\,Z_k = Z_j \hspace{2mm}\hbox{and}\hspace{2mm} \pi(k) > \pi(j)\,\right).
\label{def:strictTotalOrder}
\end{align}

The $p$-level {\em SPS confidence region} with $p=1-\frac{q}{m}$ is given as
\begin{equation*}
	\widehat{\Theta}_n \, \triangleq \, \left\{\, \theta \in \mathbb{R}^d\, :\, \text{SPS-INDICATOR}(\,\theta\,) = 1\, \right\}.
\end{equation*}

Observe that the LS estimate, $\hat{\theta}_{n}$, has by definition the property that $S_0(\hat{\theta}_{n})=0$. Therefore, the LSE is included in the SPS confidence region, except for the very unlikely situation in which $m-q$ other $S_i(\theta)$ functions (besides $S_0(\theta)$) are null at $\hat{\theta}_n$ and ranked smaller than $S_0(\theta)$ by $\pi$.\footnote{It is worth mentioning some substantial differences between the construction here proposed and the one of \cite{volpe2015sign}. In  \cite{volpe2015sign}, extra data (the so-called instrumental variables) are assumed to be available to the user, and are required to be correlated with $\{\varphi_t\}$ but independent of the noise $\{N_t\}$. Under this  condition, the construction of \cite{volpe2015sign} delivers guaranteed regions around the  instrumental-variable estimate. On the other hand, the algorithm proposed in this paper does not require any extra data besides the regressors and the system outputs, and is purely LSE-based. }

{\renewcommand{\arraystretch}{1.3}
\begin{table}[H]
\normalsize
\vspace*{6mm}
\begin{center}
\begin{tabular}{|rlll|}
\hline
\multicolumn{4}{|c|}{\scshape Pseudocode: SPS-Initialization} \\
\hline \hline 1. & \multicolumn{3}{l|}{Given a (rational) confidence probability $p \in (0,1)$,} \\
 & \multicolumn{3}{l|}{set integers $m > q > 0$ such that $p = 1 - q/m$;}\\
2. & \multicolumn{3}{l|}{Generate  $n\cdot(m-1)$  i.i.d.\ random signs $\{\alpha_{i,t}\}$ with} \\
& \multicolumn{3}{c|}{$\mathbb{P}(\alpha_{i,t} = 1)\, = \,\mathbb{P}(\alpha_{i,t} = -1) \,=\, \frac{1}{2}$,}\\
& \multicolumn{3}{l|}{for $i \in \{1, \dots, m-1\}$ and $t \in \{1, \dots, n\}$;}\\
3. & \multicolumn{3}{l|}{Generate a random permutation $\pi$ of the set}\\
& \multicolumn{3}{l|}{$\{0, \dots, m-1\}$, where each of the $m!$ possible}\\
& \multicolumn{3}{l|}{permutations has the same probability $1/(m!)$}\\
& \multicolumn{3}{l|}{to be selected.}\\
\hline
\end{tabular}
\end{center}
\caption{}\label{inittab}
\vspace*{12mm}
\end{table}}
{\renewcommand{\arraystretch}{1.3}
\begin{table}[H]
\normalsize
\begin{center}
\begin{tabular}{|rlll|}
\hline
\multicolumn{4}{|c|}{\scshape Pseudocode: SPS-Indicator\,(\,$\theta$\,)}\\
\hline \hline 1. & \multicolumn{3}{l|}{For the given $\theta$, compute the prediction errors }\\
& \multicolumn{3}{l|} {for $t \in \{1, \dots, n\}$} \\
 & \multicolumn{3}{c|}{$\hat{N}_t(\theta)\,\triangleq\, Y_t - \varphi_t^\T \theta$;}\\
2. & \multicolumn{3}{l|}{Build $m-1$ sequences of sign-perturbed} \\
 & \multicolumn{3}{l|}{ prediction errors $(\alpha_{i,t}\,\hat{N}_t(\theta)), t=1,\ldots,n$} \\
3. & \multicolumn{3}{l|}{Construct $m-1$ perturbed output trajectories}\\
 & \multicolumn{3}{l|}{$\bar{Y}_{i,1}(\theta),\ldots,\bar{Y}_{i,n}(\theta)$,\;\; $i = 1, \dots, m-1$,} \\
& \multicolumn{3}{l|}{according to equation \eqref{perturbedSystemLong} with}\\
& \multicolumn{3}{l|}{$\bar{Y}_{i,t}(\theta)\triangleq Y_t$ for $1-n_a\leq t\leq 0$.}\\
&\multicolumn{3}{l|}{Form $\bar{\varphi}_{i,t}(\theta)$ according to \eqref{Regressorbar}.}\\
4. & \multicolumn{3}{l|}{Evaluate} \\
& \multicolumn{3}{l|}{\hspace{15mm}$S_0(\theta) \triangleq R_n^{-\frac{1}{2}} \frac{1}{n}\sum\limits_{t=1}^{n}{\, \varphi_t {\hat{N}}_t(\theta)}$,}\\
& \multicolumn{3}{l|}{\hspace{15mm}$ S_i(\theta) \triangleq R_{i,n}^{-\frac{1}{2}}(\theta) \frac{1}{n}\sum\limits_{t=1}^{n}{\, \alpha_{i,t} \, \bar{\varphi}_{i,t}(\theta){\hat{N}}_t(\theta)}$,}\\
& \multicolumn{3}{l|}{for $i \in \{1, \dots, m-1 \}$, where}\\
& \multicolumn{3}{l|}{\hspace{15mm}$R_n\, \triangleq\, \frac{1}{n}\sum\limits_{t=1}^n\varphi_t \varphi_t^\T $,}\\
& \multicolumn{3}{l|}{\hspace{15mm}$R_{i,n}(\theta)\, \triangleq\, \frac{1}{n}\sum\limits_{t=1}^n\bar{\varphi}_{i,t}(\theta) \bar{\varphi}_{i,t}^\T (\theta)$,}\\
& \multicolumn{3}{l|}{ and $\phantom{.}^{-\frac{1}{2}}$ denotes the inverse (or pseudoinverse) of}\\
& \multicolumn{3}{l|}{ the principal square root matrix.}\\
5. & \multicolumn{3}{l|}{Order scalars $\{\|S_i(\theta)\|^2\}$ according to $\succ_{\pi}$ (see \eqref{def:strictTotalOrder});}\\
6. & \multicolumn{3}{l|}{Compute the rank $\mathcal{R}(\theta)$ of $\|S_0(\theta)\|^2$ in the ordering,} \\
& \multicolumn{3}{l|} {where $\mathcal{R}(\theta) = 1$ if $\|S_0(\theta)\|^2$ is the smallest in the} \\
& \multicolumn{3}{l|} {ordering, $\mathcal{R}(\theta) = 2$ if $\|S_0(\theta)\|^2$ is the second} \\
& \multicolumn{3}{l|} {smallest, and so on.}\\
6. & \multicolumn{3}{l|}{Return $1$ if $\mathcal{R}(\theta) \leq m-q$, otherwise return $0$.}\\
\hline
\end{tabular}
\end{center}
\caption{}\label{indtab}
\end{table}}

\section{Exact Confidence}
Like its FIR counterpart, the SPS algorithm for ARX system generates confidence regions that have {\em exact} confidence probabilities for any {\em finite} number of data points.
The following theorem holds.
\medskip
\begin{theorem}
\label{th:exact-confidence}
Under Assumptions \ref{ass:knownorders} and \ref{ass:NtbasicANDUtbasic}, the confidence region constructed by the SPS algorithm in Table \ref{inittab} and  \ref{indtab} has the property that $\prob\{\theta^\ast\in \widehat{\Theta}_n\}=1-\frac{q}{m}$
\end{theorem}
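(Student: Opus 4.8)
The plan is to show that, evaluated at $\theta=\theta^\ast$, the $m$ scalars $\|S_0(\theta^\ast)\|^2,\ldots,\|S_{m-1}(\theta^\ast)\|^2$ are exchangeable, and then to invoke the same rank argument as in the FIR case: for exchangeable reals whose ties are broken by an independent uniformly random permutation $\pi$, the rank $\mathcal{R}(\theta^\ast)$ of the reference term is uniform on $\{1,\ldots,m\}$, so that $\prob\{\theta^\ast\in\widehat\Theta_n\}=\prob\{\mathcal R(\theta^\ast)\le m-q\}=(m-q)/m=1-q/m$. The entire difficulty is therefore concentrated in establishing exchangeability despite the correlation between $\varphi_t$ and $N_t$.

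First I would note that at $\theta=\theta^\ast$ the prediction errors collapse to the true noise, $\hat N_t(\theta^\ast)=Y_t-\varphi_t^\T\theta^\ast=N_t$, and that the reference term is the $i=0$ instance of the perturbed construction once we set $\alpha_{0,t}\equiv 1$: feeding \eqref{perturbedSystemLong} with parameter $\theta^\ast$ and noise $\alpha_{0,t}N_t=N_t$ under the initial conditions $\bar Y_{0,t}(\theta^\ast)=Y_t$ reproduces the observed trajectory, so $\bar\varphi_{0,t}(\theta^\ast)=\varphi_t$, $R_{0,n}(\theta^\ast)=R_n$, and $S_0(\theta^\ast)$ has the same form as each $S_i(\theta^\ast)$. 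The crux is then a functional identity. Writing $\sigma_t\triangleq\sign(N_t)$ and defining the \emph{effective signs} $\beta_{i,t}\triangleq\alpha_{i,t}\sigma_t$, one has $\alpha_{i,t}N_t=\beta_{i,t}|N_t|$, so the $i$th perturbed output obeys the ARX recursion driven by $\beta_{i,t}|N_t|$ with deterministic initial conditions and inputs. By induction on $t$ this renders $\bar Y_{i,t}(\theta^\ast)$, and hence $\bar\varphi_{i,t}(\theta^\ast)$, $R_{i,n}(\theta^\ast)$ and finally $S_i(\theta^\ast)$, one and the same deterministic function $g$ of the pattern $(\beta_{i,1},\ldots,\beta_{i,n})$, once we condition on $|N_1|,\ldots,|N_n|$ and the deterministic data; in particular $S_i(\theta^\ast)=g(\beta_{i,1},\ldots,\beta_{i,n})$ for every $i=0,\ldots,m-1$ with the \emph{same} $g$.

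Exchangeability then follows from a conditioning argument. Conditioning on $|N_1|,\ldots,|N_n|$, Assumption \ref{ass:NtbasicANDUtbasic} forces the sign vector $\sigma=(\sigma_1,\ldots,\sigma_n)$ to be uniform on $\{-1,+1\}^n$, because symmetry about zero makes the conditional law of the signs invariant under coordinatewise multiplication by any fixed sign pattern, and the uniform law is the only such distribution on $\{-1,+1\}^n$. Since the external signs $\{\alpha_{i,t}\}_{i\ge 1}$ are independent of the noise and i.i.d.\ uniform, the vectors $\beta_0=\sigma$ and $\beta_i=(\alpha_{i,1}\sigma_1,\ldots,\alpha_{i,n}\sigma_n)$, $i\ge 1$, are, conditionally on $|N|$, jointly i.i.d.\ uniform on $\{-1,+1\}^n$: given $\sigma$ the $\beta_i$ with $i\ge 1$ are i.i.d.\ uniform and independent of $\beta_0=\sigma$, which is itself uniform. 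Applying the common map $g$, the vector $(\|S_0(\theta^\ast)\|^2,\ldots,\|S_{m-1}(\theta^\ast)\|^2)$ is conditionally i.i.d., hence exchangeable, and exchangeability is preserved after integrating over $|N|$.

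With exchangeability in hand, I would conclude exactly as in \cite{SPSPaper2}: the independent uniform permutation $\pi$ entering the strict order $\succ_{\pi}$ of \eqref{def:strictTotalOrder} breaks ties symmetrically, so $\mathcal R(\theta^\ast)$ is uniform on $\{1,\ldots,m\}$ even when several $\|S_i(\theta^\ast)\|^2$ coincide, whence $\prob\{\mathcal R(\theta^\ast)\le m-q\}=1-q/m$. The step I expect to be the genuine obstacle is the functional identity of the second paragraph, namely verifying that the \emph{normalised} statistic $S_i(\theta^\ast)$, including the data-dependent whitening matrix $R_{i,n}^{-\frac12}(\theta^\ast)$, is literally the same function of the effective signs for every $i$. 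This requires a careful induction matching the perturbed recursion \eqref{perturbedSystemLong} against the data-generating equation \eqref{ARXsystem}, and checking that the shared deterministic initial conditions make the correspondence exact rather than merely distributional.
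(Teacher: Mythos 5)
Your proposal is correct and takes essentially the same route as the paper's own proof: the paper likewise reduces $S_0(\theta^\ast)$ to the $i=0$ instance with $\alpha_{0,t}\equiv 1$, conditions on $\{|N_t|\}$, forms the effective signs $\gamma_{i,t}=\alpha_{i,t}\sigma_t(N_t)$ (your $\beta_{i,t}$), notes that every $S_i(\theta^\ast)$ is the \emph{same} function of its sign pattern, and concludes that the $m$ statistics are conditionally i.i.d.\ (hence ``uniformly ordered'' in the paper's terminology, your exchangeability), finishing with the rank argument under the tie-breaking permutation $\pi$. The one detail you gloss over is the event $N_t=0$: there $\sign(N_t)=0$, so your claim that the conditional law of $\sigma$ is uniform on $\{-1,+1\}^n$ fails when the noise has an atom at zero, which Assumption \ref{ass:NtbasicANDUtbasic} does not exclude; the paper patches this by setting $\sigma_t(N_t)$ equal to an independent fair coin $b_t$ whenever $N_t=0$ (harmless, since that coordinate is multiplied by $|N_t|=0$), and your induction needs the same one-line fix.
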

\medskip
The proof of Theorem \ref{th:exact-confidence} can be found in Appendix \ref{proof:exact-confidence}. The simulation examples in Section \ref{simulations} also demonstrate that, when the noise is stationary, the SPS confidence regions compare in size with the heuristic  confidence regions obtained by applying the asymptotic system identification theory. However, unlike the asymptotic regions, the SPS regions are theoretically guaranteed for any finite $n$, and also maintain their guaranteed validity  with nonstationary noise patterns.

\section{Asymptotic Results}
\label{Section:AsymptoticResults}
In addition to the probability of containing the true parameter, another important aspect is the size and the shape of the SPS confidence regions. In this  section,  under some additional mild assumptions,  we prove i) a Strong Consistency theorem which guarantees that the SPS confidence sets get smaller and smaller as the number of data points gets larger, and ii) an Asymptotic Shape Theorem stating that,  as both $n$ and $m$ tend to infinity, the confidence sets are included in a marginally inflated version of those produced by the asymptotic system identification theory.\\
These two theorems will be proved under the basic Assumptions \ref{ass:knownorders} and \ref{ass:NtbasicANDUtbasic}, plus some  assumptions that are now introduced and discussed. We first discuss the common identifiability assumptions under which both the strong consistency and the asymptotic shape results are proven. Then, we isolate and discuss an additional simplifying assumption that is only used to prove the asymptotic shape theorem.
\subsection{Assumptions}
Let us define  $A(z^{-1};\theta)= 1+a_1z^{-1}+\cdots +a_{n_a}z^{-n_a}$ and 
$B(z^{-1};\theta)= b_1z^{-1}+\cdots +b_{n_b}z^{-n_b} $, with $z^{-1}$ the delay operator. In this way, \eqref{ARXsystem} can be compactly rewritten as $$A(z^{-1};\theta^\ast)Y_t\triangleq B(z^{-1};\theta)U_t+N_t,$$ and \eqref{perturbedSystemLong} as  $$A(z^{-1};\theta)\bar{Y}_{i,t}(\theta)\triangleq B(z^{-1};\theta)U_t+\alpha_{i,t}\hat{N}_t(\theta).$$
 
The following one is a standard assumption for identifiability of the  ``true'' parameter.
\begin{assumption}[coprimeness]\label{ass:coprime}
The polynomials $A(z^{-1};\theta^\ast),B(z^{-1};\theta^\ast)$ are coprime.
\end{assumption}

The set of values of $\theta$ that are allowed to be included in the confidence region is normally limited by {\em a priori} knowledge on the system and, in general, it will be a proper subset of $\Real{d}$.  Although occasionally it can be left implicit, in this paper the subset of values of $\theta$ will be denoted by $\Theta_c$ and always assumed to be a compact set.
\begin{assumption}[uniform-stability]\label{ass:us}  The families of filters $\{\frac{1}{A(z^{-1};\theta)}:\theta\in\Theta_c\}$ and $\{\frac{B(z^{-1};\theta)}{A(z^{-1};\theta)}:\theta\in\Theta_c\}$ are uniformly stable. 
\end{assumption}
We briefly recall the definition of uniform stability, see \cite{Ljung1999} for details. First, $\frac{1}{A(z^{-1};\theta)}$ and $\frac{B(z^{-1};\theta)}{A(z^{-1};\theta)}$ must be stable for every $\theta\in\Theta_c$. Then, we can define the coefficients $h_0(\theta),h_1(\theta),\ldots$ from relation $\sum_{t=0}^\infty h_t(\theta) z^{-t} = \frac{1}{A(z^{-1};\theta)}$, and $g_1(\theta),g_2(\theta),\ldots$ from relation $\sum_{t=1}^\infty g_t(\theta) z^{-t} = \frac{B(z^{-1};\theta)}{A(z^{-1};\theta)}$. Uniform stability means that
$$\sup_{\theta\in\Theta_c} |h_t(\theta)|\leq \bar{h}_t    \qquad\text{and}\qquad   \sup_{\theta\in\Theta_c} |g_t(\theta)|\leq \bar{g}_t, \;\; \forall t, $$
for some $\bar{h}_t$ and $\bar{g}_t$ such that $$ \sum_{t=0}^\infty  \bar{h}_t<\infty \qquad\text{and}\qquad   \sum_{t=1}^\infty \bar{g}_t<\infty.$$ Basically, Assumption \ref{ass:us} excludes that the dynamics of the system can be arbitrarily slow and that the static gain can be arbitrarily large. 

The following type of conditions are standard  for consistency analysis.
\begin{assumption}[independent noise, moment growth rate]\label{ass:independentNoise}
$\{N_t\}$ is a sequence of independent random variables. 
Moreover, the limit 
\begin{equation}
\label{eq:noise_finiteSecMomLimit}\lim_{n
\rightarrow \infty}\frac{1}{n}\sum_{t=1}^n \E[N^2_t]
\end{equation}
exists 
 and
\begin{equation}
\label{asseq:boundedENt8}
\limsup_{n
\rightarrow \infty}\frac{1}{n}\sum_{t=1}^n \E[N^8_t]<\infty.
\end{equation}

\end{assumption}

We will assume that the input sequence is  persistently exciting (p.e.). Precisely, following  \cite{ljung1971characterization}, we say that  the input sequence is persistently exciting of order $n_a+n_b$ if the limits $m=\lim_{n\rightarrow\infty}\frac{1}{n}\sum_{t=1}^n U_t$ (we call $m$ the mean) and $c_{U,k}=\lim_{n\rightarrow\infty}\frac{1}{n}\sum_{t=1}^n  (U_t-m)(U_{t-k}-m)$ exist and are finite for every $k$, and the matrix $$ \begin{bmatrix}
c_{U,0} & \ldots & c_{U,n_a+n_b-1}\\
\vdots & \ddots&  \vdots\\
c_{U,{n_a+n_b-1}}  &\ldots & c_{U,0}\\
\end{bmatrix} $$ is positive definite.

\begin{assumption}[persistent excitation and limited growth rate]\label{ass:pe}
The sequence $\{U_t\}$ is persistently  exciting of order $n_a+n_b$. Moreover, 
\begin{equation}
\label{asseq:boundedUt4}
\limsup_{n
\rightarrow \infty}\frac{1}{n}\sum_{t=1}^n U^4_t<\infty.
\end{equation}
\end{assumption}

The reader may be interested in comparing the  realisation-wise condition \eqref{asseq:boundedUt4} and the process-wise condition  \eqref{asseq:boundedENt8}: in this regard, it is worth noting that the process-wise condition \eqref{asseq:boundedENt8} implies that 
\begin{equation}
\label{asseq:boundedNt4}
\limsup_{n
\rightarrow \infty}\frac{1}{n}\sum_{t=1}^n N^4_t<\infty
\end{equation}
holds with probability 1.

In all that follows, we will consider $Y_t$ to be the output of system \eqref{ARXsystem} with zero initial conditions and causal input signals ($N_t$, $U_t$ are zero for $t\leq 0$).

\subsubsection{Extra assumption for the asymptotic shape}
Remarkably, stationarity  of the noise terms has {\em not} been assumed so far.  Present assumptions allow us to prove the strong consistency result. However, to show that the confidence sets of SPS asymptotically have the same shape as the standard confidence ellipsoids, we will work under the following additional simplifying assumption.

\begin{assumption}[i.i.d. noise sequence]\label{ass:zmiidnoise}
The sequence $\{N_t\}$ is made up of independent and identically distributed random variables.
\end{assumption}
We will denote $\E[N_t^2]$ by $\sigma^2$.

\subsection{Strong consistency}
Our first  result shows that SPS is {\em strongly consistent}, in the sense that the confidence sets shrink around the true parameter as the sample size increases, and eventually exclude any other parameters $\thbs \neq \theta^\ast$.

In the following theorem, $B_{\varepsilon}(\theta^*)$ denotes the closed Euclidean norm-ball centred at $\theta^*$ with radius $\varepsilon > 0$, i.e.
$$B_{\varepsilon}(\theta^*)\, \triangleq\, \{\,\theta \in \mathbb{R}^d : \|\,\theta - \theta^*\| \leq \varepsilon\,\}.$$
Theorem \ref{theorem-consistency} states that the confidence regions $\widehat{\Theta}_n$ will eventually be included in any given norm-ball centred at the true parameter, $\theta^*$.

\medskip
\begin{theorem}[Strong Consistency]\label{theorem-consistency} Under the assumptions  \ref{ass:knownorders}, \ref{ass:NtbasicANDUtbasic}, \ref{ass:coprime}, \ref{ass:us}, \ref{ass:independentNoise}, for all $\varepsilon > 0$
$$\prob \left[ \bigcup_{\bar{n}=1}^\infty \bigcap_{n=\bar{n}}^\infty \left\{\widehat{\Theta}_n \subseteq B_{\varepsilon}(\theta^*)     \right\}\right]=1.$$
\end{theorem}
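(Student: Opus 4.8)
The plan is to show that, almost surely, every parameter lying in the compact ``annulus'' $K_\varepsilon \triangleq \Theta_c \cap \{\theta : \|\theta-\theta^*\|\geq\varepsilon\}$ is eventually excluded from $\widehat{\Theta}_n$. Since $\widehat{\Theta}_n\subseteq\Theta_c$ by construction, this is equivalent to the claimed inclusion $\widehat{\Theta}_n\subseteq B_\varepsilon(\theta^*)$. A parameter $\theta$ is excluded as soon as $\mathcal{R}(\theta)=m$, i.e. as soon as $\|S_0(\theta)\|^2$ is the strict maximum among $\{\|S_i(\theta)\|^2\}_{i=0}^{m-1}$; hence it suffices to prove that, on an event of probability one, one eventually has
\[
\inf_{\theta\in K_\varepsilon}\|S_0(\theta)\|^2 \;>\; \max_{1\le i\le m-1}\ \sup_{\theta\in K_\varepsilon}\|S_i(\theta)\|^2 .
\]

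For the reference sum I would use $\hat{N}_t(\theta)=\varphi_t^\T\tilde{\theta}+N_t$ (with $\tilde{\theta}=\theta^*-\theta$) to write $S_0(\theta)=R_n^{1/2}\tilde{\theta} + R_n^{-1/2}w_n$, where $w_n\triangleq\frac1n\sum_{t=1}^n\varphi_tN_t$. First I would establish the two a.s.\ limits $R_n\to R^*$ with $R^*\succ0$ and $w_n\to0$: the former follows from a law of large numbers for $\varphi_t\varphi_t^\T$ together with Assumptions \ref{ass:coprime}--\ref{ass:pe} (coprimeness and persistent excitation of order $n_a+n_b$ make the limit positive definite, while uniform stability and the moment bounds guarantee the convergence); the latter holds because, as $N_t$ has zero mean and is independent of the $\mathcal{F}_{t-1}\triangleq\sigma(N_1,\dots,N_{t-1})$-measurable regressor $\varphi_t$, the sequence $\{\varphi_tN_t\}$ is a martingale difference sequence to which the martingale strong law applies, the moment assumptions \eqref{asseq:boundedENt8} and \eqref{asseq:boundedUt4} supplying the required integrability. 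Consequently $\|S_0(\theta)\|^2=\tilde{\theta}^\T R_n\tilde{\theta}+2\,w_n^\T\tilde{\theta}+w_n^\T R_n^{-1}w_n\to\tilde{\theta}^\T R^*\tilde{\theta}$ uniformly over the compact $\Theta_c$, so that $\liminf_n\inf_{\theta\in K_\varepsilon}\|S_0(\theta)\|^2\ge \lambda_{\min}(R^*)\varepsilon^2>0$ almost surely.

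The core of the argument is the companion estimate $\sup_{\theta\in K_\varepsilon}\|S_i(\theta)\|^2\to0$ a.s.\ for each $i$. Writing $S_i(\theta)=R_{i,n}^{-1/2}(\theta)\,v_{i,n}(\theta)$ with $v_{i,n}(\theta)\triangleq\frac1n\sum_{t=1}^n\alpha_{i,t}\bar{\varphi}_{i,t}(\theta)\hat{N}_t(\theta)$, the decisive observation is that $\{\alpha_{i,t}\bar{\varphi}_{i,t}(\theta)\hat{N}_t(\theta)\}_t$ is a martingale difference sequence with respect to $\mathcal{F}_t\triangleq\sigma(N_1,\dots,N_t,\alpha_{i,1},\dots,\alpha_{i,t})$: by construction $\bar{\varphi}_{i,t}(\theta)$ is a function of the perturbed residuals $\{\alpha_{i,s}\hat{N}_s(\theta)\}_{s<t}$ and of the deterministic inputs, hence is $\mathcal{F}_{t-1}$-measurable, while $\E[\alpha_{i,t}\hat{N}_t(\theta)\mid\mathcal{F}_{t-1}]=0$ because the fresh sign $\alpha_{i,t}$ has zero mean and is independent both of $\mathcal{F}_{t-1}$ and of $N_t$ (whereas $\varphi_t^\T\tilde{\theta}$ is $\mathcal{F}_{t-1}$-measurable and $N_t$ has zero mean and is independent of $\mathcal{F}_{t-1}$). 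Thus $v_{i,n}(\theta)\to0$ a.s.\ pointwise in $\theta$. I would then upgrade this to uniform convergence over $\Theta_c$ by combining pointwise convergence on a countable dense set with an equicontinuity estimate: uniform stability (Assumption \ref{ass:us}) yields $\theta$-uniform, summable bounds on the impulse responses of $1/A(z^{-1};\theta)$ and $B(z^{-1};\theta)/A(z^{-1};\theta)$ and of their $\theta$-derivatives, which make $\theta\mapsto\bar{\varphi}_{i,t}(\theta)$ and $\theta\mapsto\hat{N}_t(\theta)$ Lipschitz with a modulus controlled, uniformly in $t$ and $n$, by the moment bounds \eqref{asseq:boundedENt8}--\eqref{asseq:boundedUt4}. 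Finally, persistent excitation of the input together with coprimeness gives a uniform-in-$\theta$ a.s.\ lower bound $\inf_{\theta\in\Theta_c}\lambda_{\min}(R_{i,n}(\theta))\ge\underline{\lambda}>0$ eventually, whence $\sup_{\theta\in\Theta_c}\|S_i(\theta)\|^2\le\underline{\lambda}^{-1}\sup_{\theta\in\Theta_c}\|v_{i,n}(\theta)\|^2\to0$.

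Intersecting the finitely many a.s.\ events above, one eventually has $\inf_{K_\varepsilon}\|S_0\|^2>\max_i\sup_{K_\varepsilon}\|S_i\|^2$, so every $\theta\in K_\varepsilon$ satisfies $\mathcal{R}(\theta)=m>m-q$ and is excluded; this yields $\widehat{\Theta}_n\subseteq B_\varepsilon(\theta^*)$ for all large $n$ on an event of probability one, which is exactly the asserted statement. I expect the main obstacle to be the uniform-in-$\theta$ control required in the third paragraph: because $\bar{\varphi}_{i,t}(\theta)$ is obtained by filtering the $\theta$- and noise-dependent perturbed residuals through the $\theta$-dependent filter $1/A(z^{-1};\theta)$, proving the uniform martingale strong law for $v_{i,n}(\theta)$ and the uniform lower bound on $\lambda_{\min}(R_{i,n}(\theta))$ is delicate, and it is precisely here that uniform stability and the higher-order moment assumptions are genuinely needed.
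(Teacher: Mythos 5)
Your proposal is correct and follows essentially the same route as the paper's proof: the reference function is bounded below outside $B_{\varepsilon}(\theta^*)$ via $R_n \rightarrow \bar{R}_\ast \succ 0$ and $\frac{1}{n}\sum_{t=1}^n \varphi_t N_t \rightarrow 0$, the perturbed functions are driven to zero uniformly over $\Theta_c$ via a sign-induced martingale strong law for $\frac{1}{n}\sum_{t=1}^n \alpha_{i,t}\bar{\varphi}_{i,t}(\theta)\hat{N}_t(\theta)$ combined with uniform eigenvalue bounds on $R_{i,n}(\theta)$ and an equicontinuity-plus-compactness argument, which is precisely the content of the paper's Lemmas \ref{LemmaGoodR} and \ref{Lemma:supphibarNbar}. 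The only deviations are technical variants within the same architecture: you run the martingale argument on the joint filtration $\sigma(N_1,\ldots,N_t,\alpha_{i,1},\ldots,\alpha_{i,t})$ whereas the paper conditions on $\{N_t\}$ and works in the sign randomness alone (Doob plus Kronecker), and you obtain the impulse-response equicontinuity from derivative bounds under uniform stability whereas the paper derives it from a Fourier-coefficient argument (Proposition \ref{prop:supDeltaH}).
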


A detailed proof of the theorem is provided in Appendix \ref{proof:consistency}, preceded by an outline.

\subsection{Asymptotic shape}\label{sec:asympshape}
In this section we study the shape of the SPS confidence regions when $n$ and $m$ tend to infinity, and we show that the SPS confidence region is 
included in a marginally inflated version of the confidence ellipsoids obtained using asymptotic system identification theory.
Before we present our results, the confidence ellipsoids based on the asymptotic system identification theory are briefly reviewed, see e.g., \cite{Ljung1999} for a more detailed presentation.

\subsubsection{Confidence ellipsoids based on  the asymptotic system identification theory}
For $\{N_t\}$ zero mean and i.i.d., and under some other technical assumptions, $\sqrt{n}\,({\hat{\theta}_n} - \theta^*)$ can be proved to converge in distribution to the Gaussian distribution with zero mean and covariance matrix $\sigma^2\, {\bar{R}_\ast}^{-1}$, where $\sigma^2$ is the variance of $N_t$, and ${\bar{R}_\ast}=\lim_{n\rightarrow\infty}R_n$. As a consequence, $\frac{n}{\sigma^2} ({\hat{\theta}_n} - \theta^*)^\T \, {\bar{R}_\ast}\, ({\hat{\theta}_n} - \theta^*)$ converges in distribution to the $\chi^2$ distribution with $\mbox{dim}(\theta^*) = d$ degrees of freedom. This result was derived under uniform stability and input boundedness conditions in \cite[Chapters 8-9]{Ljung1999}.

An approximate confidence region can then be obtained by replacing matrix ${\bar{R}_\ast}$ with its finite-sample estimate $R_n$,
\begin{equation*}
\widetilde{\Theta}_n\, \triangleq \, \bigg\{\, \theta:\, (\theta - {\hat{\theta}_n})^\T \, R_n\, (\theta - {\hat{\theta}_n}) \, \leq \, \frac{\mu {\sigma}^2}{n} \,\bigg\},
\end{equation*}
where the probability that $\theta^*$ is in the confidence region $\widetilde{\Theta}_n$ is {\em approximately} $F_{\chi^2}(\mu)$, where $F_{\chi^2}$ is the cumulative distribution function of the $\chi^2$ distribution with $d$ degrees of freedom. In the limit as $n$ tends to infinity, $\theta^*$ is contained in the set $\widetilde{\Theta}_n$ with probability $F_{\chi^2}(\mu)$. In practice, also $\sigma^2$ is often replaced with its estimate
\begin{equation*}
\widehat{\sigma}^2_n \, \triangleq \, \frac{1}{n-d}\, \sum_{t=1}^{n} (y_t-\varphi_t^T\hat{\theta}_n)^2.
\end{equation*}
In the wake of \cite{AutomaticaSPS2017}, the theorem on the asymptotic shape of $p$-level SPS regions is here given in terms of a relaxed version of the asymptotic confidence ellipsoids, which are defined as
\begin{equation*}
\widetilde{\Theta}_n(\eta) \, \triangleq \, \left\{\theta\in\Theta_c : (\theta-\hat{\theta}_{n})^\T  R_n (\theta-\hat{\theta}_{n})\leq \frac{\mu\,\sigma^2 + \eta}{n}\right\},
\end{equation*}
where $\mu$ is such that $F_{\chi^2}(\mu)=p$, and $\eta> 0$ is a margin.
In the theorem, both $n$ and $m$ (recall that $m-1$ is the number of sign-perturbed sums) go to infinity, and we use the notation $\widehat{\Theta}_{n,m}$ for the SPS region to indicate explicitly the dependence on $n$ and $m$. $\widehat{\Theta}_{n,m}$ is constructed with the parameter $q$ equal to 
$q_m=\lfloor(1-p)\,m\rfloor$, where $\lfloor(1-p)\,m\rfloor$ is the largest integer less than or equal to $ (1-p)m$, so that Theorem \ref{th:exact-confidence} gives a confidence probability of $p_m \triangleq 1 - \frac{q_m}{m}$, and    $p_m \rightarrow p$ from above as $m\rightarrow\infty$.
 \medskip
\begin{theorem}[Asymptotic Shape]
\label{theorem-asymptotic-shape}
Under Assumptions  \ref{ass:knownorders}, \ref{ass:NtbasicANDUtbasic}, \ref{ass:coprime},
\ref{ass:us}, \ref{ass:independentNoise}, \ref{ass:pe} and \ref{ass:zmiidnoise}, 
there   exists a doubly-indexed set of random variables $\{\varepsilon_{n,m}\}$ such that  $$\lim_{m\rightarrow\infty}\lim_{n\rightarrow\infty}\varepsilon_{n,m} = 0 \;\;\mbox{w.p.1},$$ and, for every $m,n$, it holds that
$$
\widehat{\Theta}_{n,m}\subseteq \widetilde{\Theta}_n(\varepsilon_{n,m}).
$$
\end{theorem}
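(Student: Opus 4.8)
The plan is to reduce the set inclusion to a statement about the \emph{outer radius} of the SPS region in the metric induced by $R_n$, and then to identify that radius asymptotically through the order statistics of the perturbed sums. First I would record the exact quadratic identity $n\|S_0(\theta)\|^2 = n(\theta-\hat{\theta}_n)^\T R_n(\theta-\hat{\theta}_n)$. This follows because $\hat{N}_t(\theta)=N_t+\varphi_t^\T(\theta^\ast-\theta)$ gives $\tfrac1n\sum_t\varphi_t\hat{N}_t(\theta)=R_n(\theta^\ast-\theta)+\tfrac1n\sum_t\varphi_t N_t$, and the normal equation \eqref{normal-equation} at $\hat{\theta}_n$ lets me eliminate the noise term, yielding $S_0(\theta)=R_n^{1/2}(\hat{\theta}_n-\theta)$. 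Consequently $\theta\in\widetilde{\Theta}_n(\eta)$ iff $\theta\in\Theta_c$ and $n\|S_0(\theta)\|^2\le\mu\sigma^2+\eta$. It is therefore natural to \emph{define} $\varepsilon_{n,m}\triangleq\max\{0,\ \sup_{\theta\in\widehat{\Theta}_{n,m}}n\|S_0(\theta)\|^2-\mu\sigma^2\}$, which makes $\widehat{\Theta}_{n,m}\subseteq\widetilde{\Theta}_n(\varepsilon_{n,m})$ hold by construction; the entire content of the theorem then becomes the double-limit claim $\limsup_{m}\limsup_{n}\big(\sup_{\theta\in\widehat{\Theta}_{n,m}}n\|S_0(\theta)\|^2\big)\le\mu\sigma^2$ w.p.1.

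Next I would analyse the perturbed sums on the natural scale, writing $\theta=\hat{\theta}_n+\tfrac1{\sqrt n}R_n^{-1/2}u$ so that $n\|S_0(\theta)\|^2=\|u\|^2$. Substituting $\hat{N}_t(\theta)$ into $S_i(\theta)$ splits it into a noise part $R_{i,n}^{-1/2}(\theta)\tfrac1{\sqrt n}\sum_t\alpha_{i,t}\bar{\varphi}_{i,t}(\theta)N_t$ and a drift part $R_{i,n}^{-1/2}(\theta)\big(\tfrac1n\sum_t\alpha_{i,t}\bar{\varphi}_{i,t}(\theta)\varphi_t^\T\big)(\sqrt n\,\tilde{\theta})$. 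The key observation is that $\alpha_{i,t}\bar{\varphi}_{i,t}(\theta)\varphi_t^\T$ is a martingale difference, since $\alpha_{i,t}$ is independent of the past that generates $\bar{\varphi}_{i,t}(\theta)$ and $\varphi_t$; a uniform (in $\theta$) strong law then gives $\tfrac1n\sum_t\alpha_{i,t}\bar{\varphi}_{i,t}(\theta)\varphi_t^\T\to0$ w.p.1, where uniform stability (Assumption \ref{ass:us}) together with the moment bounds \eqref{asseq:boundedENt8} and \eqref{asseq:boundedUt4} supplies the equicontinuity and uniform moment control. Moreover $R_{i,n}(\theta)\to\bar{R}_\ast$ near $\theta^\ast$, because $\alpha_{i,t}N_t$ and $N_t$ share the same white second-order statistics. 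Hence on any fixed compact set $\{\|u\|\le R\}$ the drift is negligible and, by a martingale CLT applied to the noise part (using the i.i.d.\ Assumption \ref{ass:zmiidnoise} to obtain variance $\sigma^2$), $n\|S_i(\theta)\|^2$ converges uniformly in $u$ to i.i.d.\ copies $W_1,\dots,W_{m-1}$ of $\sigma^2\chi^2_d$ that do not depend on $u$.

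The final step converts this into a radius. A point lies in the region only if $n\|S_0(\theta)\|^2$ is not among the $q_m$ largest values, i.e. $\|u\|^2\le V_{(q_m)}(\theta)$, the $q_m$-th largest perturbed value. For fixed $m$ and $n\to\infty$, the $q_m$-th largest of the $W_i$ converges w.p.1 to $\sigma^2$ times the $q_m$-th top order statistic of $m-1$ i.i.d.\ $\chi^2_d$ samples; since $q_m/m\to1-p$, Glivenko--Cantelli forces this quantile to converge, as $m\to\infty$, to $\sigma^2$ times the $p$-quantile of $\chi^2_d$, which equals $\mu\sigma^2$ by $F_{\chi^2}(\mu)=p$. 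To upgrade this compact-set analysis to a genuine bound on the outer radius I would argue in two regimes: on $\{\|u\|\le R\}$ the order-statistic analysis pins the boundary near $\sqrt{\mu\sigma^2}$, while for $\|u\|>R$ (still inside the shrinking ball provided by the Strong Consistency Theorem \ref{theorem-consistency} and using persistent excitation, Assumption \ref{ass:pe}, to keep $R_{i,n}(\theta)$ nonsingular) the vanishing drift forces $n\|S_i(\theta)\|^2=o(\|u\|^2)$ uniformly, so $S_0$ dominates every $S_i$ and all such $\theta$ are excluded; choosing $R^2$ slightly above $\mu\sigma^2$ confines the whole region to $\{\|u\|^2\le R^2\}$ and closes the double limit.

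I expect the main obstacle to be precisely the \emph{uniformity in $\theta$} of these limit theorems for the regenerated regressors: establishing a uniform strong law for the sign-weighted cross term $\tfrac1n\sum_t\alpha_{i,t}\bar{\varphi}_{i,t}(\theta)\varphi_t^\T$ and a uniform CLT for the noise part over a $1/\sqrt n$-neighbourhood, all while $\bar{\varphi}_{i,t}(\theta)$ is itself a uniformly stable filtered function of both $\theta$ and the perturbed noise $\alpha_{i,t}\hat{N}_t(\theta)$. Ruling out far-flung components of the region not covered by the compact-set approximation, and checking that $\varepsilon_{n,m}$ so defined is a bona fide random variable enjoying the claimed almost-sure double limit, is the delicate bookkeeping that binds the estimates together.
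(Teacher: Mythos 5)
Your proposal follows, in its essentials, the same route as the paper's proof: the identity $n\|S_0(\theta)\|^2=n(\theta-\hat{\theta}_n)^\T R_n(\theta-\hat{\theta}_n)$, the split of each perturbed sum into a noise part and a drift part (the paper's $\gamma_{i,n}(\theta)$ and $\Gamma_{i,n}(\theta)$), sign-based martingale arguments plus a martingale CLT giving independent $\sigma^2\chi^2_d$ limits across $i$ (Lemmas \ref{Lemma:Deltagammasgotozero} and \ref{Lemma:asymptoticallynormal}), localisation via the Strong Consistency Theorem \ref{theorem-consistency}, and the order-statistic/Glivenko--Cantelli endgame. Your definition of $\varepsilon_{n,m}$ as the outer-radius excess, instead of the paper's $\varepsilon_{n,m}=(\widehat{\mu}_{n,m}-\mu)\sigma^2$ built from the $q_m$-th largest element of an upper-bound vector, is a harmless repackaging.

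The genuine gap is your handling of the regime $\|u\|>R$. Writing $P_i(\theta)=n\|S_i(\theta)\|^2$, the claim that the vanishing drift forces $P_i(\theta)=o(\|u\|^2)$ uniformly, ``so $S_0$ dominates every $S_i$'', is false: the noise part of $P_i(\theta)$ does not vanish, it converges to $W_i\sim\sigma^2\chi^2_d$, which is of exactly the same order as $\|u\|^2\approx\mu\sigma^2$ on the boundary you care about, and $\max_{i\leq m-1}W_i$ grows without bound (like $2\sigma^2\log m$) as $m\rightarrow\infty$. Hence, with $R^2$ only ``slightly above $\mu\sigma^2$'', for large $m$ there are, with probability close to one, points with $\|u\|>R$ at which $S_0$ does not dominate every $S_i$. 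Such points are in fact excluded from $\widehat{\Theta}_{n,m}$, but only through the order-statistic criterion ($P_0$ exceeding the $q_m$-th largest of the $P_i$), which your argument establishes nowhere outside $\{\|u\|\leq R\}$: the annulus between $R$ and $\max_i \sqrt{W_i}$ is covered by neither of your two regimes, and it is exactly where the boundary of the region lives. The repair — and it is what the paper does — is to observe that over $\theta\in\widehat{\Theta}_{n,m}$ the drift is not merely $o(\|u\|^2)$ but $o_P(1)$ outright: it equals $\bigl[\sqrt{n}\,\Gamma_{i,n}(\theta)\bigr](\theta^\ast-\theta)$, where $\sqrt{n}\,\Gamma_{i,n}(\theta)$ has uniformly bounded second moment (after Lemma \ref{Lemma:Deltagammasgotozero}) and $\sup_{\theta\in\widehat{\Theta}_{n,m}}\|\theta^\ast-\theta\|\rightarrow 0$ a.s.\ by Theorem \ref{theorem-consistency}; this is the content of \eqref{q1tozero} and \eqref{q3tozero} in Lemma \ref{Lemma:AsShStuffgoToZero}. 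Consequently $\sup_{\theta\in\widehat{\Theta}_{n,m}}P_i(\theta)$ is bounded by the noise term plus vanishing corrections (the paper's $\bar{\cal U}_{n,m}$), the order-statistic comparison holds for every $\theta$ in the region simultaneously, and no compact/far splitting — and no domination of all the $S_i$ — is ever needed. As written, your proof does not close; with this substitution it becomes the paper's.
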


\subsubsection{Comment on the proof}
The proof is deferred to Appendix \ref{proof:asymptoticshapeOutline}.  Although from a bird's eye point of view the proof follows the same line as the proof in \cite{AutomaticaSPS2017}, from a technical point of view a crucial difference is that, while  in \cite{AutomaticaSPS2017} functions $\|S_0(\theta)\|^2$ and $\|S_i(\theta)\|^2$, $i=1,\ldots,m-1$, were all quadratic functions, in the present setting functions $\|S_i(\theta)\|^2$, $i=1,\ldots,m-1$, are ratios of polynomials whose orders increase linearly with $n$, so that the uniform evaluation of these functions over $\theta$ as $n\rightarrow\infty$ requires much more attention. Nevertheless, results similar to those in \cite{AutomaticaSPS2017} can be obtained by exploiting i) the linearity of the system, ii) the continuity (to be defined in a suitable sense, and proven to hold) of the sequence $\{\bar{Y}_{i,t}(\theta)\}$ with respect to small variations of $\theta$, and iii) Theorem \ref{theorem-consistency}, which guarantees that, as $n$ grows, the effects of $\{\bar{Y}_{i,t}(\theta)\}$ have to be evaluated in a smaller and smaller neighbourhood of $\theta^\ast$. Another caveat is that independence conditions that were satisfied in the FIR case are lost in the ARX case. Nonetheless, similar results can be proven thanks to random sign-perturbations, which play a crucial role in making theorems on martingales applicable to our setting.

\section{Numerical Example}
\label{simulations}
Consider the following simple data generating system
\[
  Y_t\,=\,-a^\ast Y_{t-1} + b^\ast U_{t-1}+N_t,
\]
with zero initial conditions. $a^\ast=-0.7$ and $b^\ast=1$ are the true system parameters and $\{N_t\}$ is a sequence of i.i.d.\ Laplacian random variables with zero mean and variance 0.1. The input signal was generated according to
\[
 U_t\,=\,0.75\,U_{t-1}+V_t,
\]
where $\{V_t\}$ was a sequence of i.i.d. Gaussian random variables with zero mean and variance $1$. The predictor is given by
\[
\widehat{Y}_t(\theta)\,=\,-a_1Y_{t-1} + b U_{t-1} = \varphi_t^\T  \theta,
\]
where $\theta=[\,a \; b\,]^\T $ is the model parameter, and $\varphi_t = [\,- Y_{t-1} \ U_{t-1} \,]^\T $ is the  regressor at time $t$.

A 95\,\% confidence region for $\theta^\ast=[a^\ast \ b^\ast]^\T $ based on $n=40$ data points, namely  $(\varphi_t,Y_t)$, $t=1,\ldots,40$, 
was constructed by choosing $m=100$ and leaving out those values of $\theta$ for which $\|S_0(\theta)\|$ was among the 5 largest values of $\|S_0(\theta)\|,\|S_1(\theta)\|,\ldots,$ $\|S_{99}(\theta)\|$. 

\begin{figure}[htbp]
  \centering
\includegraphics[
  width=11cm,
  height=8.8cm,
  keepaspectratio,
]{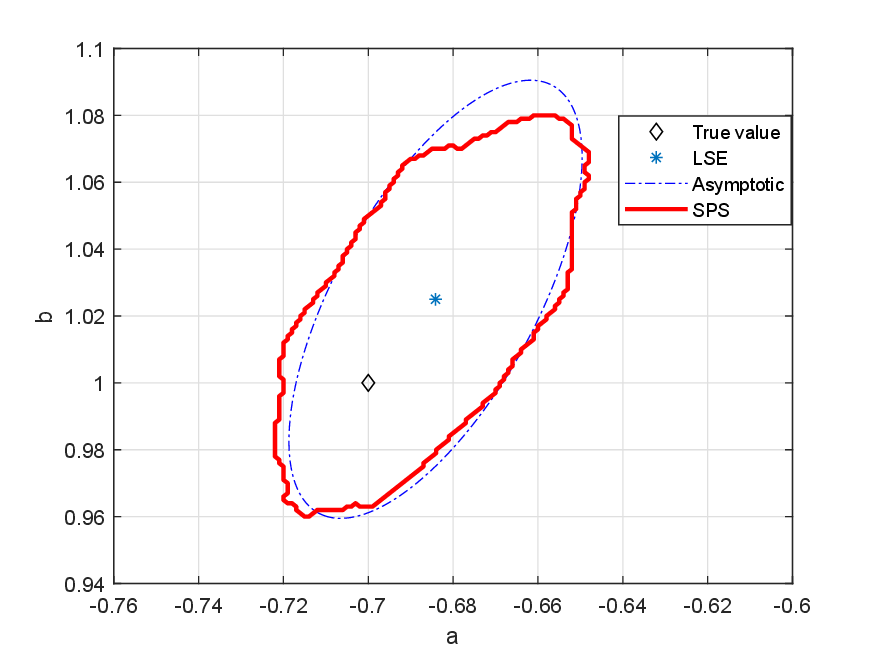}\caption{$95$\% confidence regions, $n=40$, $m=100$.}\label{Fig:n20}
\end{figure}
\begin{figure}[htbp]
  \centering
\includegraphics[
  width=11cm,
  height=8.8cm,
  keepaspectratio,
]{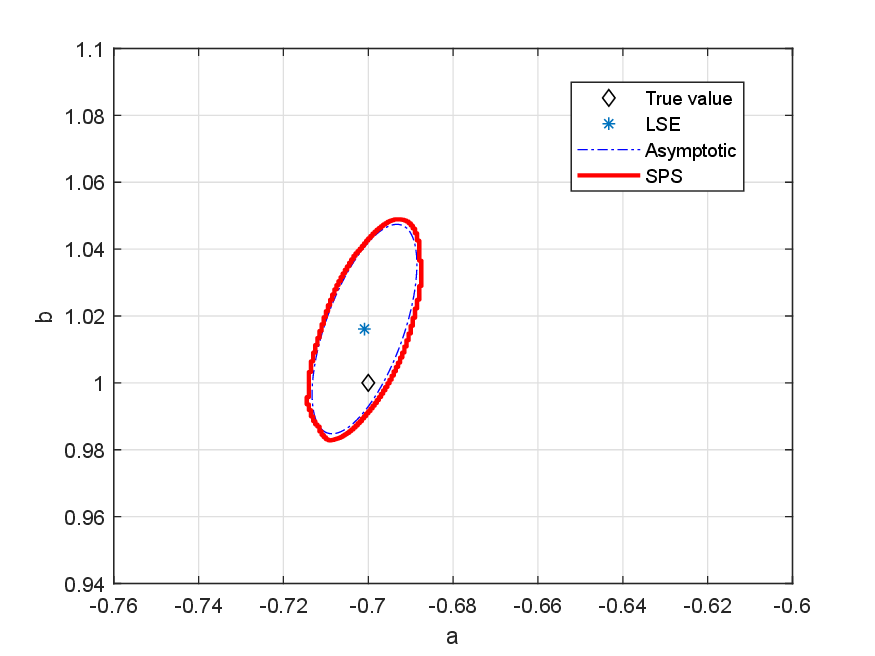}\caption{$95$\% confidence regions, $n=400$, $m=100$.}\label{Fig:n400}
\end{figure}

\begin{figure}[htbp]
  \centering
 \includegraphics[
  width=11cm,
  height=8.8cm,
  keepaspectratio,
]{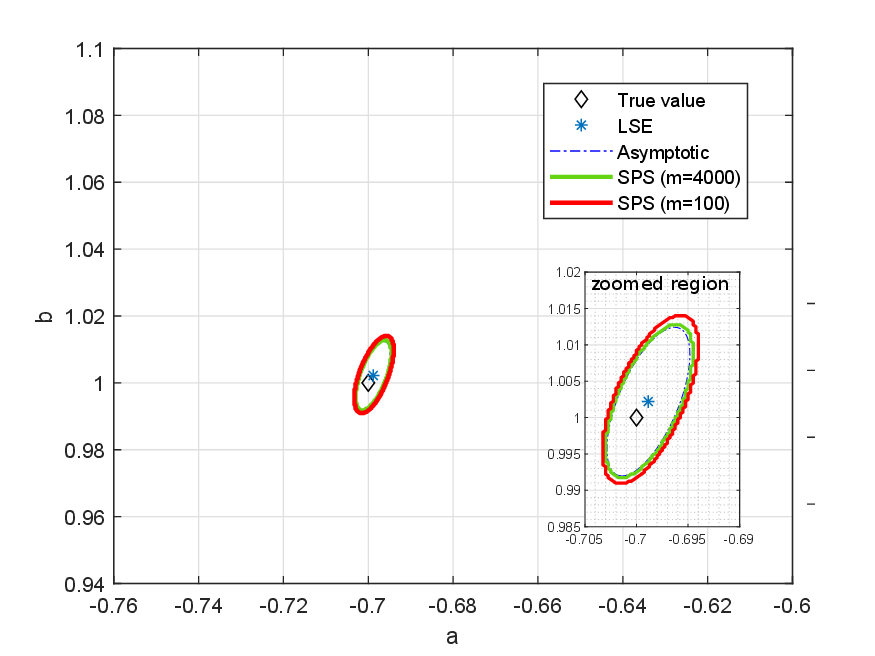}\caption{$95$\% confidence regions, $n=4000$, $m=4000$ and $m=100$.}\label{Fig:n4000}
\end{figure}

The SPS confidence region is shown in Figure \ref{Fig:n20} together 
with the approximate confidence ellipsoid based on asymptotic system identification theory (with the noise variance estimated as $\widehat{\sigma}^2=\frac{1}{38}\sum_{t=1}^{40}(Y_t-\varphi_t^\T \hat{\theta}_{n})^2)$.

It can be observed that the non-asymptotic SPS region is similar in size and shape to the asymptotic confidence region, but it has the advantage that it is guaranteed to contain the true parameter with exact probability $95\%$.

In agreement with Theorem \ref{theorem-consistency}, the size of the region decreases when $n$ is increased, see Figures \ref{Fig:n400} and \ref{Fig:n4000}. In Figure \ref{Fig:n4000}, $m$ is also increased to $4000$, and one can observe that there is very little difference between the SPS region and the  asymptotic confidence ellipsoid  demonstrating the  result established in Theorem \ref{theorem-asymptotic-shape}.

\section{Concluding Remarks and Open Problems}
\label{conclusions}
In this paper, we have presented  the SPS method for ARX systems. SPS delivers confidence regions around the least-squares estimate that contain with exact, user-chosen,  probability the true system parameter under mild assumptions on the data generation mechanism. These regions are built from a finite (and possibly small) sample of input-output data.  Besides the exact finite-sample guarantees, we have proven under additional and rather mild assumptions that the method is  strongly consistent and that the confidence regions are included in a slightly enlarged version of the approximate confidence ellipsoids obtained using the asymptotic theory.

Finally, we want to mention an important further direction of research. While  the SPS regions have many desirable features, the exact calculation of the regions is computationally demanding. For FIR systems, an effective ellipsoidal outer approximation of the confidence regions can be practically computed by using convex programming techniques (\cite{SPSPaper2}, see also \cite{KiefferWalter2013}).  Obtaining similar results  for ARX system is an ongoing challenge of great practical importance. On the other hand, the SPS algorithm presented in this paper lends itself nicely to problems where the indicator function has to be evaluated for a finite, moderate set of values of $\theta$, which is the case in certain  change detection or hypothesis testing  problems. Moreover, when the dimension of the parameter vector 
$\theta$ is small, the SPS region can be computed by checking whether points on a fine grid of the parameter space belong to the confidence set.

\appendix
\section{Proofs}
\label{AppendixProofs}
\subsection{Proof of Theorem \ref{th:exact-confidence}: exact confidence}
\label{proof:exact-confidence}
We begin with a definition and two lemmas taken from \cite{SPSPaper2}.

\medskip
\begin{definition}
\label{unif-order}
{\em Let $Z_1, \dots, Z_{k}$ be a finite collection of random variables and $\succ_{t.o.}$ a strict total order. If for all permutations $i_1, \dots, i_{k}$ of indices $1,\dots, k$ we have
\begin{equation*}
\mathbb{P}(Z_{i_k} \succ_{t.o.} Z_{i_{k-1}} \succ \dots \succ_{t.o.} Z_{i_{1}}) = \frac{1}{k!},
\end{equation*}
then we call $\{Z_i\}$ uniformly ordered w.r.t.\ order $\succ_{t.o.}$}.
\end{definition}

\medskip
\begin{lemma}
\label{lemma-random-signs} {\em Let $\alpha, \beta_1, \dots, \beta_k$ be i.i.d.\ random signs,
then the random variables $\alpha, \alpha \cdot \beta_1, \dots, \alpha \cdot \beta_k$ are i.i.d.\ random signs.}
\end{lemma}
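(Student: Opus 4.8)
The plan is to reduce the claim to a statement about a joint distribution and then verify it by a direct computation. For $\pm 1$-valued random variables, being i.i.d.\ random signs is exactly equivalent to the joint law being uniform on the discrete cube $\{-1,+1\}^{k+1}$, so it suffices to show the latter for the transformed collection. I would set $\gamma_0 \triangleq \alpha$ and $\gamma_j \triangleq \alpha\beta_j$ for $j=1,\dots,k$, fix an arbitrary target pattern $(s_0,s_1,\dots,s_k)\in\{-1,+1\}^{k+1}$, and compute $\mathbb{P}(\gamma_0=s_0,\dots,\gamma_k=s_k)$. The key observation is that, since $s_0\in\{-1,+1\}$ satisfies $s_0^{-1}=s_0$, the event $\{\gamma_0=s_0,\dots,\gamma_k=s_k\}$ coincides exactly with $\{\alpha=s_0,\ \beta_1=s_0 s_1,\dots,\ \beta_k=s_0 s_k\}$: imposing $\alpha=s_0$ and $\alpha\beta_j=s_j$ forces $\beta_j=s_j/\alpha=s_0 s_j$. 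This latter event prescribes a single value for each of the independent uniform signs $\alpha,\beta_1,\dots,\beta_k$, so by hypothesis its probability equals $(1/2)^{k+1}$. As the target pattern was arbitrary, every one of the $2^{k+1}$ sign patterns of $(\gamma_0,\dots,\gamma_k)$ receives probability $2^{-(k+1)}$, which is precisely the joint law of $k+1$ i.i.d.\ random signs.

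Equivalently, and perhaps more transparently, I would note that the coordinate map $T(a,b_1,\dots,b_k)\triangleq(a,\,a b_1,\dots,a b_k)$ is a bijection of the finite cube $\{-1,+1\}^{k+1}$ onto itself; in fact it is an involution, since $a\cdot(a b_j)=b_j$ gives $T\circ T=\mathrm{id}$. By the hypothesis of the lemma the vector $(\alpha,\beta_1,\dots,\beta_k)$ is uniformly distributed on this cube, and the pushforward of the uniform distribution under any bijection of a finite set is again uniform. Hence $T(\alpha,\beta_1,\dots,\beta_k)=(\alpha,\alpha\beta_1,\dots,\alpha\beta_k)$ is uniform, which is the claim. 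The per-pattern computation of the first paragraph is just this bijection written out coordinate-wise.

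I do not expect a genuine obstacle here: the statement is a finite, discrete probability fact and the argument is essentially combinatorial. The only point requiring care is confirming that the coordinate change is a true bijection of $\{-1,+1\}^{k+1}$---equivalently, that for each fixed $s_0$ the reindexing $s_j\mapsto s_0 s_j$ is invertible---so that no sign pattern is double counted or missed; once that bookkeeping is in place the equality of probabilities is immediate. This elementary lemma is the combinatorial engine behind the exact-coverage argument, where it justifies composing the fresh perturbation signs with the signs of the noise, so when invoking it later I would make sure that the signs being perturbed are independent of whatever has already been conditioned upon.
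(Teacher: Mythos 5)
Your proof is correct. Note that the paper itself does not prove this lemma at all: it is stated as one of two lemmas ``taken from \cite{SPSPaper2}'', so there is no in-paper argument to compare against. Your argument---the per-pattern computation $\mathbb{P}(\alpha=s_0,\ \alpha\beta_j=s_j,\ j=1,\dots,k)=\mathbb{P}(\alpha=s_0,\ \beta_j=s_0s_j,\ j=1,\dots,k)=2^{-(k+1)}$, equivalently the observation that $(a,b_1,\dots,b_k)\mapsto(a,ab_1,\dots,ab_k)$ is an involution of $\{-1,+1\}^{k+1}$ and therefore preserves the uniform law---is the standard, complete proof of this fact, and your closing remark about only invoking it on signs independent of what has been conditioned upon matches exactly how the paper uses the lemma (conditioning on $\{|N_t|\}$) in the proof of Theorem \ref{th:exact-confidence}.
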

\smallskip

The following lemma highlights an important property of the $\succ_{\pi}$ relation that was introduced in Section \ref{sps-method}.
\medskip
\begin{lemma}
\label{lemma-iid-case}
{\em Let $Z_1, \dots, Z_{k}$ be real-valued, i.i.d. random variables. Then, they are uniformly ordered w.r.t.\ $\succ_{\pi}$.}
\end{lemma}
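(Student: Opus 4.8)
The plan is to show that the random ranking of $Z_1, \dots, Z_k$ induced by $\succ_{\pi}$ is uniformly distributed over the $k!$ possible orderings, which is exactly the uniform-ordering property of Definition \ref{unif-order}. I would rely on two structural facts: the permutation $\pi$ is drawn uniformly over all $k!$ permutations of the index set and independently of $(Z_1, \dots, Z_k)$ (as prescribed in the initialisation, Table \ref{inittab}), and the $Z_i$, being i.i.d., are exchangeable. Because $\succ_{\pi}$ is a \emph{strict} total order, for every realisation exactly one ordering occurs; hence the $k!$ events in Definition \ref{unif-order} partition the sample space, and it suffices to prove that they are equiprobable.

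First I would fix an arbitrary permutation $\tau$ of $\{1, \dots, k\}$ and introduce the relabelled quantities $Z'_i \triangleq Z_{\tau(i)}$ and $\pi' \triangleq \pi \circ \tau$. Exchangeability gives $(Z'_1, \dots, Z'_k) \stackrel{d}{=} (Z_1, \dots, Z_k)$; uniformity of $\pi$ gives $\pi' \stackrel{d}{=} \pi$, since right-composition by the fixed $\tau$ merely permutes the $k!$ equally likely values of $\pi$; and, as $\pi$ is independent of the $Z_i$, the pair $(Z', \pi')$ has the same joint law as $(Z, \pi)$.

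The key computation is to verify that the tie-breaking rule is transported consistently under this relabelling: reading off \eqref{def:strictTotalOrder}, one checks that $Z'_b \succ_{\pi'} Z'_a$ holds if and only if $Z_{\tau(b)} \succ_{\pi} Z_{\tau(a)}$, \emph{including} in the case of equal values, where the comparison of $\pi'(b) = \pi(\tau(b))$ with $\pi'(a) = \pi(\tau(a))$ reproduces the original tie-break. Therefore the ordering that $\succ_{\pi'}$ assigns to $(Z'_1, \dots, Z'_k)$ is exactly the $\tau$-relabelling of the ordering that $\succ_{\pi}$ assigns to $(Z_1, \dots, Z_k)$. Combined with $(Z', \pi') \stackrel{d}{=} (Z, \pi)$, this shows that the induced ranking has a distribution invariant under composition with every $\tau$; an invariant distribution on a finite group of permutations is necessarily uniform, so each of the $k!$ orderings has probability $1/k!$, proving the lemma.

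I do not anticipate a deep obstacle. The only step demanding genuine care is the tie-case bookkeeping in the key computation above: one must confirm that attaching the permutation to the variables via $\pi' = \pi \circ \tau$ (rather than, say, $\tau \circ \pi$) is the choice that makes the relabelled strict order coincide with the original, so that the distributional identity transfers cleanly to the ranking. Once this identity is secured, the passage from invariance to uniformity is immediate.
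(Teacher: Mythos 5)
Your proof is correct. Note that this paper does not actually prove Lemma \ref{lemma-iid-case}: it is stated as imported from the earlier SPS work \cite{SPSPaper2}, whose argument rests on the same two ingredients you use, namely exchangeability of the i.i.d.\ sample and the independent, uniformly distributed tie-breaking permutation $\pi$. Your execution via relabelling --- checking that with $Z'_i = Z_{\tau(i)}$ and $\pi' = \pi \circ \tau$ the relation $Z'_b \succ_{\pi'} Z'_a$ coincides with $Z_{\tau(b)} \succ_{\pi} Z_{\tau(a)}$ (ties included), so that the induced ranking is distributionally invariant under every relabelling $\tau$ and hence uniform on the $k!$ orderings --- is complete and sound; in particular you correctly observed that strictness of the total order makes the $k!$ ordering events a partition, which is what lets invariance imply equiprobability.
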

We are now ready to prove  Theorem \ref{th:exact-confidence}.\\
By construction, the parameter $\theta^*$ is in the confidence region if 
$\|S_0(\theta^*)\|^2$ takes one of the positions $1, \dots, m-q$ in the ascending order (w.r.t. $\succ_{\pi}$) of the variables $\{\|S_i(\theta^*)\|^2\}_{i=0}^{m-1}$. We will prove that  $\{\|S_i(\theta^*)\|^2\}_{i=0}^{m-1}$ are {\em uniformly ordered}, hence $\|S_0(\theta^*)\|^2$ takes each position in the ordering with probability $1/m$, thus its rank is at most $m-q$ with probability $1-q/m$.

Note that all the  functions $S_i(\theta^\ast)$ depend on the sequence $\{\alpha_{i,t}N_{t}\}$ via the same function for all $i$, which we denote as $S(\alpha_{i,1}N_1,\ldots,\ldots,\alpha_{i,n}N_n)\triangleq S_i(\theta^\ast)$. This is true also for $S_0(\theta^\ast)$; in fact, recalling that $\alpha_{0,t} \triangleq 1$, $t \in \{1, \dots, n\}$,  it holds that $\alpha_{0,t}\hat{N}_t(\theta^\ast)=\alpha_{0,t}N_t=N_t$, so $\bar{Y}_{0,t}=Y_t$ and $\bar{\varphi}_{0,t}=\varphi_t$.

Let $b_1,b_2,\ldots$ be a sequence of random signs independent of $\{N_t\}$ and $\{\alpha_{i,t}\}$, and define $\sigma_{t}(N_t)$ as\vspace{-2mm}
$$ \sigma_{t}(N_t)= \begin{dcases} \sign(N_t)&\mbox{if }N_t \neq 0\\
b_t &\mbox{if }N_t =0
 \end{dcases}\vspace{1mm}$$
Clearly, $N_t=\sigma_{t}(N_t)\cdot|N_t|$ for every value of $N_t$. By Assumption \ref{ass:NtbasicANDUtbasic}, $\{|N_t|\}$ and $\{\sigma_{t}(N_t)\}$ are independent, and $\{\sigma_t(N_t)\}$ is an i.i.d. sequence. Now, we will work conditioning on $\{|N_t|\}$, by exploiting the independence of $\{|N_t|\}$ from all the other random elements: let us fix a realisation of $\{|N_t|\}$, and call it $\{v_t\}$ (all the other random elements are distributed according to their marginal distribution). Then, for all $i$ and $t$, we introduce $\gamma_{i,t}=\alpha_{i,t}\sigma_{t}(N_t)$.  $\{\alpha_{i,t}\}$, $i=1,\ldots,m-1$ are i.i.d. random signs independent of the other random elements and of $\sigma_{t}(N_1),\ldots,\sigma_{t}(N_n)$ in particular. Using Lemma \ref{lemma-random-signs}, $\gamma_{i,t}$, $i=0,\ldots,m-1$, $t=1,\ldots,n$, are i.i.d. random signs. Thus, $S_i(\theta^\ast)$ can be equivalently expressed as
$S_i(\theta^\ast)= Z_i$, where $Z_i\triangleq S(\gamma_{i,1}v_1,\ldots,\gamma_{i,n}v_n)$. Since  $Z_i$'s are obtained by applying the same function to different realisations of an i.i.d. sample, they are also uniformly ordered with respect to $\succ_{\pi}$ (Lemma \ref{lemma-iid-case}).  Thus, the uniform ordering property has been proven for a fixed realisation of $\{|N_t|\}$.  As the realisation of $\{|N_t|\}$ was arbitrary, the uniform ordering property  of  $\{\|S_i(\theta^*)\|^2\}_{i=0}^{m-1}$  holds unconditionally, and the theorem follows.

\subsection{Proof of Theorem \ref{theorem-consistency}}
\label{proof:consistency}

\subsubsection{Outline of the proof}
We define $\hat{\theta}_{i,n}(\theta)$ as the value of  $\hat{\theta}\in \Real{d}$ that minimises \begin{equation}\label{LSConditionPERTURBED}
\sum_{i=1}^n  (\bar{Y}_{i,t}(\theta)-\bar{\varphi}_{i,t}(\theta)^\T \hat{\theta})^2,\end{equation} i.e., as the least-squares estimate if the output sequence were $\bar{Y}_{i,1}(\theta),\ldots,\bar{Y}_{i,n}(\theta)$, cf. \eqref{sumofsquaredpe}.\footnote{In the terminology of \cite{kolumban2015perturbed} this is the minimiser of the cost function corresponding to the ``perturbed dataset''.} $\hat{\theta}_{i,n}(\theta)$ satisfies\vspace{-3mm}
\begin{equation}
\label{LSCONDITION}
\frac{1}{n} \sum_{t=1}^n  \bar{\varphi}_{i,t}(\theta) \bar{\varphi}_{i,t}(\theta)^\T   (\hat{\theta}_{i,n}(\theta) - \theta)   = \frac{1}{n} \sum_{t=1}^n  \alpha_{i,t}\hat{N}_{t}(\theta)\bar{\varphi}_{i,t}(\theta).
\end{equation}
Assuming  $\hat{\theta}_{i,n}(\theta)$ is unique (we will show that this is the case for $n$ large enough), it is straightforward to check that
$\|S_i(\theta) \|^2$ can be written as 
\begin{equation}
\label{SiEXPR}
\|S_i(\theta) \|^2=\|R_{i,n}(\theta)^{\frac{1}{2}}(\theta-\hat{\theta}_{i,n}(\theta))  \|^2, \qquad i=1,\ldots,m-1,
\end{equation}
where $R_{i,n}(\theta)=\frac{1}{n}\sum_{t=1}^n\bar{\varphi}_{i,t}(\theta)\bar{\varphi}_{i,t}^\T (\theta)$ (as defined in Table \ref{indtab}). Similarly,  $\|S_0(\theta) \|^2$ can be rewritten as
\vspace{-3mm}
\begin{equation}
\label{S0EXPR}\|S_0(\theta) \|^2=\|R_{n}^{\frac{1}{2}}(\theta-\hat{\theta}_{n})  \|^2,
\end{equation}
where $\hat{\theta}_n$ is the least-squares estimate  \eqref{eq:LSsolutions}, and $R_n = \frac{1}{n}\sum\limits_{t=1}^n\varphi_t \varphi_t^\T $ (as defined in Table \ref{indtab}). First, we prove that $\hat{\theta}_n\rightarrow\theta^\ast$ with probability 1, and hence that $\|S_0(\theta)\|^2$  
eventually stays away from zero outside a ball centred at $\theta^\ast$. The second step is proving the uniform convergence of $\hat{\theta}_{i,n}(\theta)$ to $\theta$.\footnote{This requires some caution because $\bar{Y}_{i,1}(\theta),\ldots,\bar{Y}_{i,n}(\theta)$ is the output of the  non-standard system \eqref{perturbedSystemLong},
where $U_t$ affects  future noise terms through $\hat{N}_{t+1}(\theta),\hat{N}_{t+2}(\theta),\ldots$, and the expected value of $\hat{N}^2_{t+1}(\theta)$ given the past is not uniformly bounded. Thus, traditional consistency results such as those in \cite{ljung1976consistency}, although quite general and inclusive of closed-loop set-ups, do not apply to this setting.} To do this, we first prove that $R_{i,n}(\theta)$, $i=1,\ldots,m-1$, converge uniformly in $\Theta_c$ to a matrix function $\bar{R}(\theta)$ that is positive definite, with eigenvalues that are uniformly bounded away from $0$ and from $\infty$. Second, we  show that the right hand side of \eqref{LSCONDITION} goes to zero uniformly in $\Theta_c$ with probability 1.\footnote{This step is carried out by using martingale arguments that are inspired by the proof in \cite{ljung1976consistency}, together with a suitable  ``conditioning trick''.}  
Combining these two facts, we will conclude that $\hat{\theta}_{i,n}(\theta)$  converges to $\theta$, and
$\|S_i(\theta)\|^2\rightarrow 0$ uniformly.  This implies that, for $n$ large enough, $\|S_i(\theta)\|^2$, $i=1,\ldots,m-1$ are smaller than $\|S_0(\theta)\|^2$ for all the values of $\theta\in\Theta_c$ outside a small ball centred at $\theta^\ast$, so that such values of $\theta$ are excluded from the confidence region.

\subsubsection{Proof}

The following two lemmas are the key results to prove the theorem. In the statements of these two lemmas, the assumptions of  Theorem \ref{theorem-consistency} are left implicit.  Their proofs are in Appendix  \ref{ProofsOfTheTwoMainLemmasForStrongConsistency}. 

\begin{lemma}\label{LemmaGoodR} The limit matrix $${\bar{R}_\ast} \triangleq \lim_{n\rightarrow\infty}\frac{1}{n}\sum_{t=1}^n\varphi_t\varphi_t^\T $$ exists and is finite w.p.1.
Moreover, there exists a matrix $\bar{R}(\theta)$, independent of $\{N_t\}$ and $\{\alpha_{i,t}\}$, function of $\theta\in\Theta_c$,  such that $\lim_{n\rightarrow \infty}\sup_{\theta\in\Theta_c}\|R_{i,n}(\theta)-\bar{R}(\theta)\|=0$, $i=1,\ldots,m-1$, with probability 1. $\bar{R}(\theta)$ is continuous in $\theta\in\Theta_c$, $\bar{R}(\theta^\ast)={\bar{R}_\ast}$, and there exist  $\rho_1,\rho_2>0$ such that $I\rho_1 \prec \bar{R}(\theta) \prec I\rho_2$ for all $\theta\in\Theta_c$.\footnote{The symbol ``$\prec$'' denotes the Loewner partial
ordering, i.e., given two matrices $A$ and $B$, $A\prec  B \iff B-A$ is positive definite.}
\end{lemma}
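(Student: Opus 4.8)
The plan is to make the stochastic content of the regressors explicit and then run a two-stage strong law, first over the signs and then over the noise. First I would invoke Assumption~\ref{ass:us}: for every $\theta\in\Theta_c$ the filters $1/A(z^{-1};\theta)$ and $B(z^{-1};\theta)/A(z^{-1};\theta)$ have uniformly summable impulse responses $\{h_t(\theta)\}$, $\{g_t(\theta)\}$. Combining \eqref{prederr} with \eqref{ARXsystem} rewrites the prediction error as a filtered signal, $\hat N_t(\theta)=\frac{A(z^{-1};\theta)B(z^{-1};\theta^\ast)-B(z^{-1};\theta)A(z^{-1};\theta^\ast)}{A(z^{-1};\theta^\ast)}U_t+\frac{A(z^{-1};\theta)}{A(z^{-1};\theta^\ast)}N_t$, and feeding $\alpha_{i,t}\hat N_t(\theta)$ through $1/A(z^{-1};\theta)$ in \eqref{perturbedSystemLong} yields $\bar Y_{i,t}(\theta)=d_t(\theta)+w_{i,t}(\theta)$, where $d_t(\theta)=\frac{B(z^{-1};\theta)}{A(z^{-1};\theta)}U_t$ is deterministic and $w_{i,t}(\theta)=\sum_{s\ge0}h_s(\theta)\alpha_{i,t-s}\hat N_{t-s}(\theta)$ is the sign-modulated, noise-driven part. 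The same decomposition with $\alpha\equiv1$ applies to $Y_t$ and $\varphi_t$, so the assertion $\bar R_\ast=\lim_n R_n$ falls out as the special case $i=0$.

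Next I would prove pointwise (in $\theta$) convergence of the entries of $R_{i,n}(\theta)=\frac1n\sum_t\bar\varphi_{i,t}(\theta)\bar\varphi_{i,t}^\T(\theta)$ using the conditioning trick flagged in the outline. Freezing a realisation of $\{N_t\}$ makes $\hat N_t(\theta)$ a fixed sequence, so $w_{i,t}(\theta)$ becomes a moving average of the i.i.d.\ signs with deterministic coefficients. Expanding a generic entry $\frac1n\sum_t\bar Y_{i,t-j}(\theta)\bar Y_{i,t-k}(\theta)$ into deterministic--deterministic, deterministic--$w$, and $w$--$w$ pieces, the mixed piece vanishes a.s.\ over the signs (a weighted sum of independent signs with $O(1/n)$ variance), while a martingale strong law over the signs sends the $w$--$w$ piece to its sign-expectation $\frac1n\sum_t\sum_s h_s(\theta)h_{s+k-j}(\theta)\hat N_{t-j-s}(\theta)^2$. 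A second strong law, now over $\{N_t\}$ (martingale differences in the noise filtration, whose variances are controlled by the fourth-moment bound \eqref{asseq:boundedNt4}) together with convergence of the empirical input autocovariances, turns this Ces\`aro average of $\hat N_t(\theta)^2$ into a deterministic number. Collecting the entries defines $\bar R(\theta)$; it is independent of $\{N_t\}$ and $\{\alpha_{i,t}\}$ and the same for all $i$ because only second-order statistics survive, and since random signs leave second moments unchanged, evaluating at $\theta^\ast$ reproduces $\lim_n R_n$, giving $\bar R(\theta^\ast)=\bar R_\ast$.

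The hard part will be upgrading this to the \emph{uniform} statement $\lim_n\sup_{\theta\in\Theta_c}\|R_{i,n}(\theta)-\bar R(\theta)\|=0$ w.p.1, because $\theta$ enters the entire perturbed mechanism \eqref{perturbedSystemLong}: it sits both in the filter $1/A(z^{-1};\theta)$ and inside its own excitation $\alpha_{i,t}\hat N_t(\theta)$, and the conditional second moment of $\hat N_t(\theta)$ is not uniformly bounded in the naive sense. My plan is to establish stochastic equicontinuity of $\theta\mapsto R_{i,n}(\theta)$ that holds uniformly in $n$: differentiating the impulse responses in $\theta$ and using Assumption~\ref{ass:us} to show that $\{h_t(\theta)\}$, $\{g_t(\theta)\}$ and their $\theta$-gradients are uniformly summable over the compact $\Theta_c$ produces an a.s.\ finite Lipschitz modulus for $R_{i,n}(\cdot)$ valid for every $n$. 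Pointwise convergence on a finite net of $\Theta_c$, this equicontinuity, and compactness then deliver uniform convergence, and the same summability gives continuity of $\bar R(\theta)$.

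Finally, for the two-sided bound $I\rho_1\prec\bar R(\theta)\prec I\rho_2$ the upper bound is immediate from uniform stability and the finiteness of the limiting input and noise powers, since all signal energies entering $\bar\varphi_{i,t}(\theta)$ are then uniformly bounded over $\Theta_c$. For the lower bound I would show that the input block of $\bar R(\theta)$ is positive definite by persistent excitation of $\{U_t\}$, while the noise contributes strictly positive power to the $\bar Y$-components that is not predictable from the inputs; strict positivity of $\bar R(\theta)$ at each $\theta$, combined with its continuity on the compact $\Theta_c$, then yields the uniform lower bound $\rho_1>0$.
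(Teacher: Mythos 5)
Your overall architecture tracks the paper's proof closely: the decomposition of $\bar{Y}_{i,t}(\theta)$ into a deterministic input response plus a sign-modulated noise response, pointwise limits via conditioning on $\{N_t\}$ and a strong law over the signs, then equicontinuity plus a finite covering of the compact $\Theta_c$ to upgrade to uniform convergence and continuity of $\bar{R}(\theta)$. The one technical divergence there is how you control $\sum_t\sup|h_t(\theta_1)-h_t(\theta_2)|$: you differentiate the impulse responses in $\theta$, whereas the paper (Proposition~\ref{prop:supDeltaH}) uses the representation $h_t(\theta)=\frac{1}{2\pi}\int_{-\pi}^{\pi}e^{\iota\omega t}/A(\theta;e^{-\iota\omega})\,\mathrm{d}\omega$ and the elementary bound $|A(\theta_1-\theta_2;e^{-\iota\omega})-1|\leq n_a^{3/2}\|\theta_1-\theta_2\|_2$. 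Your route can be made to work (the gradient coefficients are dominated by convolutions of the dominating sequences $\bar{h}_t$), but a mean-value argument needs the segment joining $\theta_1$ and $\theta_2$ to stay inside the stability region, and $\Theta_c$ is not assumed convex; that requires a small patch (pass to a compact neighbourhood of $\Theta_c$ inside the stability region) which the paper's integral argument avoids.

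The genuine gap is in the lower bound $I\rho_1\prec\bar{R}(\theta)$. You derive strict positivity from the claim that the noise contributes strictly positive power to the output components, but no assumption guarantees this: Assumption~\ref{ass:independentNoise} only requires that $\lim_{n}\frac{1}{n}\sum_{t=1}^n\E[N_t^2]$ exists, and it may be zero ($N_t\equiv 0$ is compatible with Assumptions~\ref{ass:NtbasicANDUtbasic} and \ref{ass:independentNoise}). At $\theta=\theta^\ast$ the noise-driven part of $\bar{Y}_{i,t}(\theta^\ast)$ is exactly the filtered sequence $\{\alpha_{i,t}N_t\}$, so in the degenerate case your premise fails and positive definiteness must come from the input alone. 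This is precisely where the paper invokes coprimeness (Assumption~\ref{ass:coprime}) together with persistent excitation of order $n_a+n_b$, via the identifiability result cited as Lemma 10.3 of \cite{verhaegen2007filtering}, to get $\bar{R}(\theta^\ast)=\bar{R}_\ast\succ 0$; your proposal never uses coprimeness, and the omission is fatal, since with $N_t\equiv 0$ and non-coprime $A(\,\cdot\,;\theta^\ast),B(\,\cdot\,;\theta^\ast)$ the matrix $\bar{R}_\ast$ is singular even though the input is persistently exciting, so no argument avoiding Assumption~\ref{ass:coprime} can prove the claim. The dependence also propagates to $\theta\neq\theta^\ast$: there the power of the sign-perturbed prediction errors is $\lim_n\frac{1}{n}\sum_{t=1}^n\E[N_t^2]+(\theta^\ast-\theta)^\T\bar{R}_\ast(\theta^\ast-\theta)$, which, when the first term vanishes, is positive only because $\bar{R}_\ast\succ 0$. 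The paper's chain is: coprimeness and p.e.\ give $\bar{R}_\ast\succ0$; for $\theta\neq\theta^\ast$ the sign-randomization makes $\{\alpha_{i,t}\hat{N}_t(\theta)\}$ white in the Ces\`aro sense, with positive power and uncorrelated with $\{U_t\}$, so $\bar{R}(\theta)\succ 0$ by standard LS identifiability (Lemma 10.2 in \cite{verhaegen2007filtering}); continuity of $\bar{R}(\cdot)$ on the compact $\Theta_c$ then yields a uniform $\rho_1>0$. Your final step needs to be repaired along these lines.
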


\begin{lemma}
\label{Lemma:supphibarNbar}It holds w.p.1 that
$$\lim_{n\rightarrow \infty}\frac{1}{n}\sum_{t=1}^n N_t\varphi_t=0.$$
Moreover, for every $i=1,\ldots,m-1$,
$$\lim_{n\rightarrow \infty} \sup_{\theta\in\Theta_c}\left|\frac{1}{n} \sum_{t=1}^n \alpha_{i,t}\hat{N}_{t}(\theta)\bar{\varphi}_{i,t}(\theta)\right|=0$$ holds true with probability one.
\end{lemma}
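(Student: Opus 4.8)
The plan is to obtain both limits from the same martingale strong-law mechanism, the only difference being that the first claim uses the genuine noise as the martingale-difference source, whereas the second must rely on the injected random signs through a conditioning argument. For the first limit, set $\mathcal{F}_{t-1}=\sigma(N_1,\dots,N_{t-1})$ and observe that $\varphi_t$ is $\mathcal{F}_{t-1}$-measurable (its output entries $Y_{t-1},\dots,Y_{t-n_a}$ are functions of $N_1,\dots,N_{t-1}$ and of the deterministic inputs, while its input entries are deterministic), whereas $N_t$ is independent of $\mathcal{F}_{t-1}$ with zero mean by Assumptions \ref{ass:NtbasicANDUtbasic} and \ref{ass:independentNoise}. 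Hence $\{N_t\varphi_t\}$ is a vector martingale-difference sequence, and I would invoke a martingale strong law (Kronecker's lemma applied to the a.s.\ convergent series $\sum_t t^{-1}N_t[\varphi_t]_j$), for which it suffices to check componentwise that $\sum_{t}t^{-2}\,\E\!\big[(N_t[\varphi_t]_j)^2\big]<\infty$. By independence this equals $\sum_t t^{-2}\,\E[N_t^2]\,\E[[\varphi_t]_j^2]$; both factors are controlled by Assumption \ref{ass:independentNoise} and, through uniform stability (Assumption \ref{ass:us}), by the fact that $Y_t$ is a uniformly summable convolution of the noise and the input, so that $\E[[\varphi_t]_j^2]$ stays bounded. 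Abel/Kronecker summation then closes the estimate and yields $\tfrac1n\sum_{t=1}^n N_t\varphi_t\to 0$ w.p.1.

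For the second limit the obstacle is twofold: in the ARX case $\bar\varphi_{i,t}(\theta)$ depends on the very noise that multiplies it, so the FIR-type independence is lost, and a supremum over the compact set $\Theta_c$ must be controlled uniformly in $n$. The lost independence I would recover with a conditioning trick: condition on the entire noise realization $\{N_t\}$. Given the noise, $\hat N_t(\theta)=N_t+\varphi_t^\T(\theta^\ast-\theta)$ is deterministic, and $\bar\varphi_{i,t}(\theta)$—which involves $\bar Y_{i,s}(\theta)$ only for $s\le t-1$—is a function of the signs $\alpha_{i,1},\dots,\alpha_{i,t-1}$ alone. Therefore, with $\mathcal{H}_t=\sigma(\{N_s\}_{s\ge1},\alpha_{i,1},\dots,\alpha_{i,t})$, the product $\hat N_t(\theta)\bar\varphi_{i,t}(\theta)$ is $\mathcal{H}_{t-1}$-measurable and $\E[\alpha_{i,t}\hat N_t(\theta)\bar\varphi_{i,t}(\theta)\mid\mathcal{H}_{t-1}]=\hat N_t(\theta)\bar\varphi_{i,t}(\theta)\,\E[\alpha_{i,t}]=0$, so $D_t(\theta):=\alpha_{i,t}\hat N_t(\theta)\bar\varphi_{i,t}(\theta)$ is a martingale-difference sequence driven by the signs. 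For each fixed $\theta$ the martingale strong law again gives $\tfrac1n\sum_t D_t(\theta)\to0$ w.p.1, once the weighted conditional-variance series $\sum_t t^{-2}\,\hat N_t(\theta)^2\|\bar\varphi_{i,t}(\theta)\|^2$ is shown to converge for a.e.\ noise realization; this follows by Kronecker/Abel summation after bounding $\tfrac1n\sum_t\hat N_t(\theta)^2\|\bar\varphi_{i,t}(\theta)\|^2$ realization-wise, using \eqref{asseq:boundedNt4} and the fact that uniform stability makes $\sup_{\theta\in\Theta_c}|\bar Y_{i,t}(\theta)|$ a uniformly summable convolution of $|U|$ and $|\hat N(\theta)|$.

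The genuinely hard part, which I expect to be the main obstacle, is upgrading this pointwise convergence to the uniform statement $\sup_{\theta\in\Theta_c}|\tfrac1n\sum_t D_t(\theta)|\to0$. My plan is to prove a stochastic equicontinuity estimate that is uniform in $n$: a Lipschitz-type bound $\sup_t|\bar Y_{i,t}(\theta)-\bar Y_{i,t}(\theta')|\le L_t\|\theta-\theta'\|$ with $L_t$ controlled so that the Cesàro averages entering the martingale bound remain bounded. The delicate point, flagged in the comment on the proof, is that $\bar Y_{i,t}(\theta)=\tfrac{1}{A(z^{-1};\theta)}\big(B(z^{-1};\theta)U_t+\alpha_{i,t}\hat N_t(\theta)\big)$ is generated by a $\theta$-dependent recursion whose effective order grows with $n$, so the sensitivity to $\theta$ must be kept from accumulating. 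Here I would difference (or differentiate) the recursion in $\theta$ and exploit that the derivative of $1/A(z^{-1};\theta)$ with respect to its coefficients is again a product of stable filters, uniformly bounded over $\Theta_c$ by Assumption \ref{ass:us}; combined with the affine dependence of $\hat N_t(\theta)$ on $\theta$, this yields the required uniform modulus of continuity for the $\theta$-generated trajectories $\{\bar Y_{i,t}(\theta)\}$. Finally, combining this equicontinuity with pointwise convergence on a countable dense subset of $\Theta_c$ and the compactness of $\Theta_c$ delivers the uniform limit, which completes the proof.
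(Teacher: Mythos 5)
Your proposal is correct and follows essentially the same route as the paper's proof. For the second (and harder) limit, the paper uses exactly your conditioning trick --- fix the noise realisation so that, with respect to the filtration generated by the signs, $\alpha_{i,t}\hat N_t(\theta)\bar\varphi_{i,t}(\theta)$ is a martingale difference --- proves pointwise convergence by a martingale convergence theorem (Doob) plus Kronecker's lemma, and then obtains uniformity from an averaged continuity property of $\theta\mapsto\{\bar Y_{i,t}(\theta)\}$ (Lemma \ref{uniformaverageconti}, whose proof rests on Young's convolution inequality and a Lipschitz-type bound on $\sum_t\sup|\Delta h_t|$, much like your filter-differencing step) combined with a finite cover of the compact set $\Theta_c$ and pointwise convergence at the ball centres.

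One step of your first-limit argument needs repair, although the repair is routine. You claim that $\E[[\varphi_t]_j^2]$ ``stays bounded'': under Assumption \ref{ass:independentNoise} the noise is \emph{not} assumed identically distributed, and only Ces\`aro averages of its moments are controlled, so $\E[N_t^2]$ --- and hence $\E[Y_{t-j}^2]$ --- may grow along a sparse subsequence; per-$t$ boundedness is simply not available. What is available, and what suffices for your Kronecker argument, is the Ces\`aro bound $\limsup_n\frac1n\sum_{t=1}^n\E[\|\varphi_t\|^4]<\infty$ (this is the paper's Lemma \ref{Lemma:sumEphit4}, obtained by taking expectations in the convolution bound for $Y_t$). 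Indeed, $\E[(N_t[\varphi_t]_j)^2]=\E[N_t^2]\,\E[[\varphi_t]_j^2]\le\frac12\left(\E[N_t^4]+\E[[\varphi_t]_j^4]\right)$, whose Ces\`aro averages are bounded, and Abel summation then yields $\sum_t t^{-2}\,\E[(N_t[\varphi_t]_j)^2]<\infty$, after which your martingale-plus-Kronecker argument goes through. (For completeness: the paper proves the first limit by a different, slightly less direct route --- Lemma \ref{lemma:usefulasympres}(1.c, 1.d), i.e., a convolution decomposition of $Y_{t-k}$, Kolmogorov's strong law applied to the cross terms, and a truncation argument exploiting stability --- but your direct martingale version is equally valid once the moment bound is stated in Ces\`aro form.)
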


We first study the asymptotic behaviour of the quadratic reference function $\|S_0(\theta)\|^2$.

By definition, the least-squares estimate $\hat{\theta}_{n}$ must satisfy the normal equation (see \eqref{normal-equation})
\begin{equation}
\label{LSCONDITIONbase}
\frac{1}{n} \sum_{t=1}^n  \varphi_t \varphi_{t}^\T   (\hat{\theta}_{n} - \theta^\ast)   = \frac{1}{n} \sum_{t=1}^n  N_{t}\varphi_{t}.
\end{equation}

The convergence (a.s.)  of $\hat{\theta}_n$ to $\theta^\ast$ follows by taking the norm of both the right- and left-hand side of \eqref{LSCONDITIONbase} and noting that the right-hand side goes to zero by Lemma \ref{Lemma:supphibarNbar}. On the other hand, because of $\bar{R}_\ast=\lim_{n\rightarrow\infty}\sum_{t=1}^n\varphi_t\varphi_t^\T  \succ 0$ (Lemma \ref{LemmaGoodR}), the left-hand side goes to zero as $n\rightarrow\infty$ if and only if $\hat{\theta}_n$ converges to $\theta^\ast$. Thus,
\begin{equation}
\label{standardLSconverges}
\|\hat{\theta}_n-\theta^\ast\|\underset{n\rightarrow\infty }{\longrightarrow}0 \mbox{ w.p.1}. \end{equation}
Using \eqref{S0EXPR}, we conclude that
\begin{equation}
\label{S0asymptotic}
\|S_0(\theta)\|^2\underset{n\rightarrow\infty }{\longrightarrow}\|{\bar{R}_\ast}^{\frac{1}{2}}(\theta^\ast-\theta)\|^2 \mbox{ (uniformly in $\Theta_c$) w.p.1}. \end{equation}
Now we study the asymptotic behaviour of the functions $\|S_i(\theta)\|^2$, $i=1,\ldots,m-1$.

By definition, $\hat{\theta}_{i,n}(\theta)$ satisfies \eqref{LSCONDITION}. By taking the norm of both sides of \eqref{LSCONDITION} and by using Lemma \ref{LemmaGoodR} we get 
$\lim_{n\rightarrow\infty} \sup_{\theta\in\Theta_c}\|R_{i,n}(\theta)(\hat{\theta}_{i,n}(\theta)-\theta)\|^2=0 $ w.p.1, while, by Lemma \ref{Lemma:supphibarNbar}, we have $\sup_{\theta\in\Theta_c}\|R_{i,n}(\theta)(\hat{\theta}_{i,n}(\theta)-\theta)\|^2\geq \rho^2_1 \cdot \|\hat{\theta}_{i,n}(\theta)-\theta\|^2$  for all $\theta\in\Theta_c$, for $n$ large enough. These two facts yield \begin{equation}
\label{perturbedLSEconverges}
\lim_{n\rightarrow\infty}\sup_{\theta\in\Theta_c}\|\hat{\theta}_{i,n}(\theta)-\theta\|^2=0 \mbox{ w.p.1. }
\end{equation}

Using \eqref{S0asymptotic} and Lemma \ref{LemmaGoodR}, we conclude that there exists w.p.1 a (realisation dependent) $\bar{n}_0$ such that
$$\|S_0(\theta)\|^2>\rho_1 \epsilon^2\qquad\forall \theta\,:\,||\theta-\theta^\ast||>\epsilon$$ for every $n>\bar{n}_0$. W.p.1, there also exists a (realisation dependent)  $\bar{n}$ large enough such that, for every $n>\bar{n}$, $R_{i,n}(\theta)\prec I\rho_2$, $\forall \theta\in\Theta_c$, $i=1,\ldots,m-1$ (Lemma \ref{LemmaGoodR}), and such that $\|\hat{\theta}_{i,n}(\theta)-\theta\|^2<\frac{\rho_1 \epsilon^2}{\rho_2}$, $\forall \theta\in\Theta_c$, $i=1,\ldots,m-1$ \eqref{perturbedLSEconverges}, which implies $$\|S_i(\theta)\|^2<\rho_1 \epsilon^2 \qquad \forall \theta\in\Theta_c, \quad i=1,\ldots,m-1.$$ Therefore, for every realisation on a set of probability 1, there exist (realisation dependent) $\bar{n}_0$ and $\bar{n}$ such that for every $n>\max(\bar{n}_0,\bar{n})$  it holds that $\|S_0(\theta)\|>\|S_i(\theta)\|$, $i=1,\ldots,m-1$, for every $\theta\notin B_{\epsilon}(\theta^\ast)$, and this implies the theorem statement. 
\qed

\subsubsection{Proofs of Lemmas  \ref{LemmaGoodR} and \ref{Lemma:supphibarNbar}}
\label{ProofsOfTheTwoMainLemmasForStrongConsistency} 
Preliminarily, we state some asymptotic results that are useful throughout. In all the lemmas stated in this proof, the assumptions of Theorem \ref{theorem-consistency} are left implicit.
\begin{lemma}\label{lemma:usefulasympres}
W.p.1 it holds that 
\begin{enumerate}
\item[1.a] $\lim_{n\rightarrow\infty}\frac{1}{n}\sum_{t=1}^n{N_t}=0$
\item[1.b] $\lim_{n\rightarrow\infty}\frac{1}{n}\sum_{t=1}^n{N_t}N_{t-k}=\delta_{k}\cdot\left(\lim_{n\rightarrow\infty}\frac{1}{n}\sum_{t=1}^n\E[N_t^2]\right)<\infty$, where $\delta_k=0$ for every $k\neq 0$ and $\delta_0=1$
\item[1.c] $\lim_{n\rightarrow\infty}\frac{1}{n}\sum_{t=1}^n N_tU_{t-k} =0$ for every $k$
\item[1.d] $\lim_{n\rightarrow\infty}\frac{1}{n}\sum_{t=1}^n N_tY_{t-k} =0$ for every $k\geq 1$.
\end{enumerate}
For every $k\in\mathbb{Z}$, there exist  $c_{Y,k}<\infty$ and $ c_{YU,k}<\infty$  such that, w.p.1,
\begin{enumerate}
\item[2.a]$\lim_{n\rightarrow\infty}\frac{1}{n}\sum_{t=1}^n Y_tY_{t-k} =c_{Y,k}$ for every $k$
\item[2.b]$\lim_{n\rightarrow\infty}\frac{1}{n}\sum_{t=1}^n Y_tU_{t-k} =c_{YU,k}$ for every $k$.
\end{enumerate}
W.p.1 it holds that
\begin{enumerate}\item[3.a] $\limsup_{n\rightarrow \infty} \frac{1}{n}\sum_{i=1}^n Y_t^4<\infty$
\item[3.b]
$\limsup_{n\rightarrow\infty}\frac{1}{n}\sum_{t=1}^n \|\varphi_t\|^4<\infty$
\item[3.c] $\sup_{\theta\in\Theta_c}\left(\limsup_{n\rightarrow\infty}\frac{1}{n}\sum_{t=1}^n \hat{N}_{t}^4(\theta)\right) <\infty$
\item[3.d] 
$\sup_{\theta\in\Theta_c}\left(\limsup_{n\rightarrow\infty}\frac{1}{n}\sum_{t=1}^n \|\bar{\varphi}_{i,t}(\theta)\|^4\right)\leq C<\infty$,  where $C$ depends on $\{N_t\}$  but not on $\{\alpha_{i,t}\}$.
\end{enumerate}
For every  $\theta\in\Theta_c$ and $k\in\mathbb{Z}$, there exist  $c_{\bar{Y},k}(\theta)<\infty$ and $ c_{\bar{Y}U,k}(\theta)<\infty$ such that,  w.p.1,
\begin{enumerate}
\item[4.a] $\lim_{n\rightarrow\infty}\frac{1}{n}\sum_{t=1}^n\bar{Y}_{i,t}(\theta)\bar{Y}_{i,t-k}(\theta)=c_{\bar{Y},k}(\theta)$ for every $k$  and every $i=1,\ldots,$ $m-1$.
\item[4.b] $\lim_{n\rightarrow\infty}\frac{1}{n}\sum_{t=1}^n\bar{Y}_{i,t}(\theta) U_{t-k} =c_{\bar{Y}U,k}(\theta)$ for every $k$ and every $i=1,\ldots,m-1$.
\end{enumerate}
\end{lemma}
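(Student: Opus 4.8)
The plan is to split the statement into two groups. Parts 1.a--1.d, 2.a--2.b and 3.a--3.c concern only the original signals $\{N_t\}$, $\{U_t\}$, $\{Y_t\}$ and are obtained from strong laws of large numbers together with the impulse-response representation of the (uniformly) stable system; parts 3.d, 4.a and 4.b concern the \emph{perturbed} output $\bar{Y}_{i,t}(\theta)$ and require, in addition, a conditioning argument that exploits the random signs $\{\alpha_{i,t}\}$.

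For the first group I would record the representation $Y_t=\sum_{s\geq1}g_s(\theta^\ast)U_{t-s}+\sum_{s\geq0}h_s(\theta^\ast)N_{t-s}$, legitimate under Assumption \ref{ass:us} since $\sum_s|h_s|<\infty$ and $\sum_s|g_s|<\infty$. Parts 1.a--1.c are handled by noting that, with $\mathcal{F}_t=\sigma(N_1,\dots,N_t)$, the sequences $N_t$, $N_tN_{t-k}$ ($k\neq0$, using symmetry to fix $k\geq1$) and $N_tU_{t-k}$ are martingale differences, by the independence and zero mean of the noise (Assumption \ref{ass:independentNoise}) and the fact that $U$ is deterministic. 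I then apply the $L^2$ martingale strong law: it suffices that $\sum_t\E[\xi_t^2]/t^2<\infty$, after which Kronecker's lemma yields $\frac{1}{n}\sum_t\xi_t\to0$ w.p.1. The summability is extracted from \eqref{asseq:boundedENt8} and \eqref{asseq:boundedUt4} by a dyadic-block (Hölder) estimate that turns a Cesàro bound on $\E[N_t^8]$ and $U_t^4$ into the required $\sum_t(\cdot)/t^2$ bound; for $k=0$ in 1.b the summands $N_t^2-\E[N_t^2]$ are independent and the limit is the one assumed to exist in \eqref{eq:noise_finiteSecMomLimit}. The bound 3.a follows from the representation by a Jensen/Minkowski inequality with the $\ell^1$ weights $\{|g_s|\},\{|h_s|\}$ applied to the fourth-power averages of $U$ and $N$ (the latter finite w.p.1 by \eqref{asseq:boundedNt4}); 3.b and 3.c then follow because $\varphi_t$ and $\hat{N}_t(\theta)=N_t+\varphi_t^\T(\theta^\ast-\theta)$ are linear in these signals and $\Theta_c$ is compact. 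Finally 1.d, 2.a and 2.b are reduced to the already-proven limits by expanding $Y$ in impulse responses and collecting terms: 1.d is again a martingale strong law, since for $k\geq1$ the factor $Y_{t-k}$ is $\mathcal{F}_{t-1}$-measurable, while 2.a, 2.b combine the deterministic input correlations (which exist by the persistent-excitation part of Assumption \ref{ass:pe}) with 1.b and the vanishing cross terms.

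For the perturbed quantities I would use $\bar{Y}_{i,t}(\theta)=\sum_{s\geq1}g_s(\theta)U_{t-s}+\sum_{s\geq0}h_s(\theta)\,\alpha_{i,t-s}\hat{N}_{t-s}(\theta)$. For 3.d, taking absolute values and fourth powers makes the signs disappear ($|\alpha_{i,t}|=1$), so $\frac{1}{n}\sum_t\bar{Y}_{i,t}^4(\theta)$ is bounded by a quantity built from $\frac{1}{n}\sum_t\hat{N}_t^4(\theta)$ and $\frac{1}{n}\sum_t U_t^4$ with coefficients $\sum_s|h_s(\theta)|$, $\sum_s|g_s(\theta)|$ that are uniformly bounded in $\theta$ by Assumption \ref{ass:us}; this yields the claimed constant $C$ independent of $\{\alpha_{i,t}\}$. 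For 4.a and 4.b I would condition on $\{N_t\}$ (equivalently on $\{\hat{N}_t(\theta)\}$), so that $\bar{Y}_{i,t}(\theta)$ becomes a linear process in the i.i.d.\ signs with frozen coefficients. Computing $\E[\bar{Y}_{i,t}(\theta)\bar{Y}_{i,t-k}(\theta)\mid\{N_t\}]$, all cross sign-products vanish and only the diagonal terms $\sum_{s}h_s(\theta)h_{s-k}(\theta)\hat{N}_{t-s}^2(\theta)$, plus the deterministic input part, survive; an $L^2$ martingale strong law over the sign randomness, justified by the $\alpha$-uniform bound 3.d, shows $\frac{1}{n}\sum_t(\bar{Y}_{i,t}\bar{Y}_{i,t-k}-\E[\,\cdot\mid\{N_t\}])\to0$ w.p.1, after which $\frac{1}{n}\sum_t\E[\,\cdot\mid\{N_t\}]$ converges to a deterministic limit by the already-established convergence of $\frac{1}{n}\sum_t\hat{N}_{t-s}^2(\theta)$ and of the input correlations. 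Part 4.b is the same argument but linear in the signs, so after averaging over $\{\alpha_{i,t}\}$ only a deterministic limit built from $\sum_s g_s(\theta)U_{t-s}$ and the input correlations remains.

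The main obstacle is the perturbed group, parts 3.d and 4. Unlike the FIR/quadratic case, $\bar{Y}_{i,t}(\theta)$ is driven by $\alpha_{i,t}\hat{N}_t(\theta)$, which is neither stationary nor independent of the original data, and whose conditional second moment $\hat{N}_t^2(\theta)$ is not uniformly bounded in $t$. The conditioning trick, freeze $\{N_t\}$, use zero-mean independence of the signs to annihilate the cross terms, run a strong law over the signs, and only then run a strong law over the noise, is what renders the problem tractable; the delicate points are justifying the interchange of the two almost-sure limits and the exchange of the infinite impulse-response sums with the time averages, both of which are licensed precisely by the uniform stability of Assumption \ref{ass:us} together with the fourth-moment bounds 3.a--3.d.
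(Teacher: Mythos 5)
Your proposal is correct in substance and its overall architecture coincides with the paper's: impulse-response representation of $Y_t$ and $\bar{Y}_{i,t}(\theta)$ under Assumption \ref{ass:us}, strong-law arguments for the noise/input cross terms, an $\ell^1$--$\ell^4$ convolution bound (Young/Jensen with the absolutely summable impulse-response weights) for the fourth-moment statements 3.a--3.d, and the conditioning-on-$\{N_t\}$ trick so that only the i.i.d.\ signs remain random in parts 3.d, 4.a, 4.b. Where you genuinely differ is the law-of-large-numbers machinery. The paper proves 1.b--1.c by Cauchy--Schwarz plus Kolmogorov's SLLN for independent variables (Theorem \ref{slln}), and obtains 1.d, 2.a, 2.b by expanding $Y_{t-k}$ and running a truncation-at-$M$ argument that reduces everything to finitely many applications of 1.b/1.c; you instead observe that $N_t$, $N_tN_{t-k}$ ($k\geq 1$), $N_tU_{t-k}$ and $N_tY_{t-k}$ ($k\geq 1$) are martingale differences with respect to $\sigma(N_1,\dots,N_t)$ and invoke an $L^2$ martingale SLLN (Doob plus Kronecker --- the same device the paper itself uses later in the proof of Lemma \ref{Lemma:supphibarNbar}). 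This buys two things: for 1.b the sequence $\{N_tN_{t-k}\}_t$ is \emph{not} independent (terms at lag $|k|$ share a factor), so the paper's direct appeal to Theorem \ref{slln} strictly requires a splitting into subsequences, a step your martingale route makes unnecessary; and your treatment of 1.d avoids the truncation argument entirely (at the price of needing a Ces\`aro bound on $\E[Y_t^4]$, which is available). Your dyadic-block remark, turning Ces\`aro moment bounds into $\sum_t \E[\xi_t^2]/t^2<\infty$, also makes explicit a step the paper leaves implicit.

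One step does need repair, in part 4.a. After freezing $\{N_t\}$ you propose to apply the martingale strong law directly to the centered sequence $\bar{Y}_{i,t}(\theta)\bar{Y}_{i,t-k}(\theta)-\E[\,\cdot\mid\{N_t\}]$. That sequence is not a martingale difference with respect to the sign filtration: for $k\geq 1$, conditioning on $\sigma(\alpha_{i,s}:s\leq t-1)$ gives $\bar{Y}_{i,t-k}(\theta)\bigl(\sum_{s\geq 1}h_s(\theta)\alpha_{i,t-s}\hat{N}_{t-s}(\theta)+\sum_{s\geq 1}g_s(\theta)U_{t-s}\bigr)$, which is random and different from the conditional mean over all signs; moreover terms at different $t$ share sign factors, so they are neither independent nor uncorrelated. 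The fix is exactly what your own truncation remark sets up (and what the paper does): expand the product into the double impulse-response sum, note that for each fixed pair of lags whose two sign indices differ by a fixed nonzero offset the summands $\alpha_{i,t-a}\alpha_{i,t-b}(\cdots)$, $a\neq b$, do form a martingale difference sequence in $t$ (or can be split into finitely many independent subsequences), apply the strong law lag-pair by lag-pair, and control the tail of the double sum using uniform stability. With that substitution your argument goes through; notably, the paper's own wording at this point (``conditionally on $\{N_t\}$, the sequence \dots is independent'') suffers from the same overlapping-sign issue and is repaired in the same way.
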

\begin{proof}
\noindent{\em [1.a, 1.b, 1.c]} We prove 1.b, since 1.a is easier. \\For every $k\neq 0$, $\E[N_t N_{t-k}]=0$. Moreover, by applying twice the Cauchy-Schwarz inequality (once to $\E[\cdot]$ and once to $\sum_{t=1}^n\cdot$), we get $\limsup_{n\rightarrow\infty}\sum_{t=1}^n\frac{\E[(N_t N_{t-k})^2]}{t^2}\leq \limsup_{n\rightarrow\infty} \sum_{t=1}^n\sqrt{\frac{\E[N_t^4]}{t^2}} \sqrt{\frac{\E[N_{t-k}^4]}{t^2}}\leq\limsup_{n\rightarrow\infty}\sqrt{\sum_{t=1}^n\frac{\E[N_t^4]}{t^2}} \sqrt{\sum_{t=1}^n\frac{\E[N_{t-k}^4]}{t^2}}<\infty$ by Assumption \ref{ass:independentNoise}\eqref{asseq:boundedENt8}, and the result follows from the Kolmogorov's Strong Law of Large Numbers (Theorem \ref{slln} in Appendix \ref{usefulres}).  The case $k=0$ and 1.c can be proven similarly.\par
\noindent{\em [1.d]}  By using the expression $Y_{t-k}=\sum_{\tau=0}^\infty h_\tau N_{t-k-\tau} +\sum_{\tau=1}^\infty g_\tau U_{t-k-\tau}$, we write $\frac{1}{n}\sum_{t=1}^n N_  t Y_{t-k}=\frac{1}{n}\sum_{t=1}^n N_t(\sum_{\tau=0}^n  h_\tau N_{t-k-\tau}+\sum_{\tau=1}^n  g_\tau U_{t-k-\tau})= \sum_{\tau=0}^n  h_\tau (\frac{1}{n} \sum_{t=1}^n$ $N_t N_{t-k-\tau})+\sum_{\tau=1}^n  g_\tau (\frac{1}{n} \sum_{t=1}^n N_t U_{t-k-\tau}).$ We focus on the first term, the second one can be dealt with similarly.
Using the fact that $N_t=0$ for all $t\leq 0$, the Cauchy-Schwarz inequality yields $\sup_{\tau=1,2,\ldots} \frac{1}{n}|\sum_{t=1}^n N_{t}N_{t-k-\tau}|\leq  \frac{1}{n}\sum_{t=1}^n N_{t}^2$. Define $C=\lim_{n \rightarrow \infty} \frac{1}{n}\sum_{t=1}^n N_{t}^2$. Fix $\epsilon>0$. By stability, it is possible to choose $M$ such that for each $n\geq M$, $\sum_{\tau=M+1}^\infty |h_\tau|<\frac{\epsilon}{2C}$. Thus,
$|\sum_{\tau=0}^n  h_\tau (\frac{1}{n} \sum_{t=1}^n N_t N_{t-k-\tau})|\leq|\sum_{\tau=0}^M  h_\tau (\frac{1}{n} \sum_{t=1}^n N_t N_{t  -k-\tau})|+\frac{\epsilon}{2}$, which can be made $<\epsilon$ by taking $n$ large enough, because $\max_{\tau=1,\ldots,M}  |\frac{1}{n}(\sum_{t=1}^n N_t N_{t-k-\tau})|$ is
   the max over a set of finite ($M$) terms that all go to zero in virtue of 1.b.\par 
   
\noindent{\em [2.a, 2.b] } The proof of 2.a is similar to 2.b, so we focus on 2.a. Consider $k\geq 0$, otherwise replace $t$ with $t'=t-k$, and use the same argument. Rewrite  $\frac{1}{n}\sum_{t=1}^nY_tY_{t-k}$ as    (it is intended that $g_\tau=0$ for $\tau\leq 0$) \vspace{-4mm} 
\begin{align*}
   \frac{1}{n}&\sum_{t=1}^n\left(\sum_{\tau=0}^n (h_\tau N_{t-\tau}  +g_\tau U_{t-\tau})\right)\left(\sum_{\ell=0}^n( h_\ell N_{t-k-\ell}  +g_\ell U_{t-k-\ell})\right)\\
    =& \sum_{\tau=0}^n\sum_{\ell=0}^n h_\tau h_\ell (\frac{1}{n}\sum_{t=1}^n N_{t-\tau} N_{t-k-\ell})+\sum_{\tau=0}^n\sum_{\ell=0}^n h_\tau g_\ell (\frac{1}{n}\sum_{t=1}^n N_{t-\tau} U_{t-k-\ell})\\
    &+\sum_{\tau=0}^n\sum_{\ell=0}^n g_\tau h_\ell (\frac{1}{n}\sum_{t=1}^n U_{t-\tau} N_{t-k-\ell}) +\sum_{\tau=0}^n \sum_{\ell=0}^n g_\tau g_\ell (\frac{1}{n}\sum_{t=1}^n U_{t-\tau} U_{t-k-\ell})
\end{align*}
 All of these terms can be dealt with similarly, so we focus on the first one.  $\sum_{\tau=0}^n\sum_{\ell=0}^n \allowbreak h_\tau h_\ell (\frac{1}{n}\sum_{t=1}^n \allowbreak  N_{t-\tau} N_{t-k-\ell})$,   for $M<n$, can be rewritten as 
 \begin{align*}
 &\sum_{\tau=0}^M\sum_{\ell=0}^M \allowbreak  h_\tau h_\ell \allowbreak  (\frac{1}{n} \sum_{t=1}^n \allowbreak  N_{t-\tau} \linebreak  N_{t-k-\ell}) + \sum_{\tau=M+1}^n \sum_{\ell=0}^M \allowbreak  h_\tau h_\ell (\frac{1}{n}\sum_{t=1}^n \allowbreak  N_{t-\tau} N_{t-k-\ell}) \\
 &+ \sum_{\tau=0}^M\sum_{\ell=M+1}^n \allowbreak  h_\tau h_\ell (\frac{1}{n}\sum_{t=1}^n N_{t-\tau} \allowbreak  N_{t-k-\ell})+ \sum_{\tau=M+1}^n\linebreak \sum_{\ell=M+1}^n \allowbreak  h_\tau h_\ell (\frac{1}{n}\sum_{t=1}^n N_{t-\tau} N_{t-k-\ell}).
 \end{align*}
 In virtue of  $\sup_{\tau,\ell=0,\ldots,n}|\frac{1}{n}\sum_{t=1}^n N_{t-\tau} N_{t-k-\ell}|\leq \frac{1}{n}\sum_{t=1}^n N_t^2$, which converges to a constant as $n$ grows to $\infty$, and in virtue of the stability of the system, the limit as $n\rightarrow\infty$ of all the terms except for the first one can be made arbitrarily close to zero if $M$ is chosen large enough. We are left to deal with the {\em truncated} sum $\lim_{n\rightarrow\infty}\allowbreak \sum_{\tau=0}^M\sum_{\ell=0}^M h_\tau h_\ell (\frac{1}{n}\sum_{t=1}^n N_{t-\tau} N_{t-k-\ell})$, which is Cauchy in $M$ because of the stability of the system, and therefore can be made arbitrarily close to  $\lim_{n\rightarrow\infty} \allowbreak  \sum_{\tau=0}^n\sum_{\ell=0}^n h_\tau h_\ell (\frac{1}{n}\sum_{t=1}^n N_{t-\tau} N_{t-k-\ell})$. More precisely, its argument can be further decomposed as 
 \begin{align*} \sum_{\tau=0,\ldots,M-k}h_\tau h_{k+\tau} (\frac{1}{n}\sum_{t=1}^n N_{t-\tau}^2)+\sum_{\tau=0,\ldots,M;\ell=0,\ldots,M;\ell\neq k+\tau} h_\tau h_\ell (\frac{1}{n}\sum_{t=1}^n N_{t-\tau}N_{t-k-\ell}).\end{align*} The limit for $n\rightarrow\infty$ of the second term goes to zero because of Lemma \ref{lemma:usefulasympres}(1.b) applied to a finite number of
choices of $\tau$ and $\ell$, while $\lim_{n\rightarrow\infty}\sum_{\tau=0,\ldots,M-k}h_\tau h_{k+\tau} (\frac{1}{n}\allowbreak\sum_{t=1}^n N_{t-\tau}^2)=c_0\sum_{\tau=0,\ldots,M-k}h_\tau h_{k+\tau}$ does not depend on the specific $\{N_t\}$.\\
\noindent{\em [3.a, 3.b ,3.c, 3.d] }
The sequence $\{Y_t\}$ can be written as the sum of two convolutions, i.e., $\{Y_t\}=(\{N_t\} \ast \{h_t(\theta^\ast)\})+ (\{U_t\} \ast \{g_t(\theta^\ast)\})$, where the $t'$-th sample of the first convolution is $(\{N_t\} \ast \{h_t(\theta^\ast)\})_{t'}=\sum_{\tau=0}^\infty N_{t'-\tau} h_\tau(\theta^\ast)$, and the $t'$-th sample of the second convolution is $(\{U_t\} \ast \{g_t(\theta^\ast)\})_{t'}=\sum_{\tau=1}^\infty U_{t'-\tau} g_\tau(\theta^\ast)$. Let $\indic{P}$ denote the indicator function that is equal to $1$ when proposition $P$ is true, and is 0 otherwise. For every $t$ and $k$, define $N_{t|k}\triangleq N_t\cdot\indic{t\leq k}$, and, similarly, $U_{t|k}\triangleq U_t\cdot\indic{t\leq k}$, $Y_{t|k}\triangleq Y_t\cdot\indic{t\leq k}$. 
Clearly, for every fixed $n$,
\begin{align}
\label{eqboundYtn}
\| \{Y_{t|n}\} \|_4&=\| (\{N_{t|n}\} \ast \{h_t(\theta^\ast)\}) + (\{U_{t|n}\} \ast \{g_t(\theta^\ast)\} )    \|_4 \\
&\leq   \|  (\{N_{t|n}\} \ast \{h_t(\theta^\ast)\}) \|_4 +\|  (\{U_{t|n}\} \ast \{g_t(\theta^\ast)\} )   \|_4. \nonumber \end{align}

Using Young's convolution inequality for sequences (see e.g., \cite{bullen2015dictionary}, page 315) \begin{align*}
\|  (\{N_{t|n}\} \ast \{h_t(\theta^\ast)\}) \|_4 & \leq   \| \{N_{t|n}\}\|_4 \cdot  \| \{h_t(\theta^\ast)\}\|_1\ \\&\leq (\sum_{t=1}^n N_t^4)^{1/4}\cdot (\sum_{t=0}^\infty |h_t(\theta^\ast)|),\end{align*}
and similarly for the input term.  Due to the stability assumption, $\| \{h_t(\theta^\ast)\}\|_1\leq C'<\infty$  and $\| \{g_t(\theta^\ast)\}\|_1\leq C''<\infty$. Hence, we get 
\begin{equation}
\label{YboundedA}\frac{1}{n}\sum_{t=1}^n Y_t^4\leq 8 C'^4 \frac{1}{n}\sum_{t=1}^n N_t^4+8 C''^4 \frac{1}{n}\sum_{t=1}^n U_t^4,
\end{equation}
and, from  \eqref{asseq:boundedNt4} and \eqref{asseq:boundedUt4}, we conclude that $\limsup_{n\rightarrow\infty}\frac{1}{n}\sum_{t=1}^n Y_t^4<\infty$ w.p.1.
 
Inequality \begin{equation}
\label{sumpphi4bounded}
\limsup_{n\rightarrow\infty}\frac{1}{n}\sum_{t=1}^n \|\varphi_t\|_2^4<\infty\end{equation}
immediately follows from \begin{equation}
\label{eq:trivialboundvarphi}
\|\varphi_t\|_2\leq{\sum_{k=1}^{n_a}|Y_{t-k}| +\sum_{k=1}^{n_b}|U_{t-k}|}.\end{equation} Moreover,
$|\hat{N}_{t}(\theta)|=|N_{t}+\varphi_t^\T (\theta^\ast-\theta)|\leq |N_{t}|+\|\varphi_t\|_2\cdot \|\theta^\ast-\theta\|_2\leq |N_{t}|+\|\varphi_t\|_2\cdot \sup_{\theta\in\Theta_c}\|\theta^\ast-\theta\|_2.$ Here, $\sup_{\theta\in\Theta_c}\|\theta^\ast-\theta\|_2$ is finite because $\Theta_c$ is compact and we can conclude that
\begin{equation}
\label{nbarbound}
\sup_{\theta\in\Theta_c}\left(\limsup_{n\rightarrow\infty}\frac{1}{n}\sum_{t=1}^n \hat{N}_{t}^4(\theta)\right)<\infty.
\end{equation}
The same reasoning that led to \eqref{YboundedA} and \eqref{sumpphi4bounded}  can be applied to $\{\bar{Y}_t(\theta)\}= (\{\alpha_{i,t}\hat{N}_t(\theta)\} \ast \{h_t(\theta)\})+ (\{U_t\} \ast \{g_t(\theta)\})$, and, noting that  $\sup_{\theta\in\Theta_c} \sum_{t=1}^\infty |h_t(\theta)|\leq K'<\infty$ and $\sup_{\theta\in\Theta_c} \sum_{t=1}^\infty |g_t(\theta)|\leq K''<\infty$ by Assumption \ref{ass:us}, we immediately get
\begin{equation}
\label{phibarbound}
\sup_{\theta\in\Theta_c}\left(\limsup_{n\rightarrow\infty}\frac{1}{n}\sum_{t=1}^n \|\bar{\varphi}_{i,t}(\theta)\|_2^4\right)<\infty,
\end{equation}
where the finite bound does not depend on the sequence $\{\alpha_{i,t}\}$.\par
\noindent{\em [4.a, 4.b]} Writing $\bar{Y}_{i,t}(\theta)=\sum_{\tau=0}^n h_\tau(\theta)\alpha_{i,t}\hat{N}_{t-\tau}(\theta)+ \sum_{\tau=1}^n g_\tau(\theta)U_{t-\tau}= \sum_{\tau=0}^n( h_\tau(\theta)\allowbreak\alpha_{i,t}N_{t-\tau})+\sum_{\tau=0}^n h_\tau(\theta)\alpha_{i,t}\varphi^\T _{t-\tau}(\theta^\ast-\theta)  + \sum_{\tau=1}^n g_\tau(\theta)U_{t-\tau}$, where $\sum_{\tau=0}^n (h_\tau(\theta)\alpha_{i,t}\varphi_{t-\tau}^\T \allowbreak (\theta^\ast-\theta))=\sum_{\tau=0}^n h_\tau(\theta)\alpha_{i,t}(\sum_{\ell=1}^{n_a}Y_{t-\tau-\ell}(a^\ast_\ell-a_\ell)+ \sum_{\ell'=1}^{n_b}U_{t-\tau-\ell'}(b^\ast_{\ell'}-b_{\ell'}))$, we observe that, modulo the presence of random signs, most of the terms involved in this sum are the same as those encountered in the proofs of results 2.a and 2.b, and they can be dealt with similarly. 
The term $\sum_{\tau=0}^n h_\tau(\theta)\alpha_{i,t}\sum_{\ell=1}^{n_a}Y_{t-\tau-\ell}(a^\ast_\ell-a_\ell)$ requires some extra care as it gives rise to cross-terms of the kind $(a^\ast_\ell-a_\ell)^2 \sum_{\tau=0}^n\sum_{\lambda=0}^n \allowbreak h_\tau(\theta)h_\lambda(\theta)\frac{1}{n}\sum_{t=1}^n\alpha_{i,t-\tau-\ell} Y_{t-\tau-\ell} \alpha_{i,t-k-\lambda-\ell} Y_{t-k-\lambda-\ell}$. These terms can be dealt with by conditioning on a fixed sequence $\{N_t\}$; in fact, conditionally on $\{N_t\}$, the sequence $\{\alpha_{i,t-\tau-\ell} Y_{t-\tau-\ell} \alpha_{i,t-k-\lambda-\ell} Y_{t-k-\lambda-\ell}\}_{t=1}^\infty$ is independent so that  Kolmogorov's Strong Law of Large Numbers (Theorem \ref{slln} in Appendix \ref{usefulres}) applies. In this way, we can conclude that, when  $\tau\neq k+\lambda$,  $\frac{1}{n}\sum_{t=1}^n\alpha_{i,t-\tau-\ell} Y_{t-\tau-\ell} \alpha_{i,t-k-\ell} Y_{t-k-\ell}$ goes to zero w.p.1, while the case $\tau=k+\lambda$ reduces to 2.a. 
\end{proof}

The following lemma ensures that there is some continuity (on average) in the behaviour of  $\bar{Y}_{i,1}(\theta),\ldots,\bar{Y}_{i,n}(\theta)$ as $\theta$ varies in $\Theta_c$.
\begin{lemma}
\label{uniformaverageconti}
For every $\epsilon>0$ there exists a $\delta>0$ such that $$\limsup_{n\rightarrow \infty}\sup_{\theta_1,\theta_2\in\Theta_c;\theta_1\in B_\delta(\theta_2)}\frac{1}{n}\sum_{t=1}^n|\bar{Y_t}(\theta_1)-\bar{Y_t}(\theta_2)|^2<\epsilon,$$  with probability one.
\end{lemma}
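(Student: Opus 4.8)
The plan is to express $\bar{Y}_t(\theta_1)-\bar{Y}_t(\theta_2)$ through the impulse responses of the uniformly stable filters and then split the difference into contributions that can each be controlled, uniformly over $\theta_1\in B_\delta(\theta_2)$, by Young's convolution inequality for sequences, the same tool already used in the proof of Lemma \ref{lemma:usefulasympres}. Writing the perturbed output as $\bar{Y}_t(\theta)=\sum_{\tau\geq 0}h_\tau(\theta)\,\alpha_{t-\tau}\hat{N}_{t-\tau}(\theta)+\sum_{\tau\geq 1}g_\tau(\theta)\,U_{t-\tau}$ (causal signals, zero initial conditions), and recalling that $\hat{N}_s(\theta_1)-\hat{N}_s(\theta_2)=\varphi_s^\T(\theta_2-\theta_1)$, I would decompose the difference into the three sequences
\begin{align*}
A_t&=\sum_{\tau\geq 0}\bigl(h_\tau(\theta_1)-h_\tau(\theta_2)\bigr)\,\alpha_{t-\tau}\hat{N}_{t-\tau}(\theta_1),\\
B_t&=\sum_{\tau\geq 0}h_\tau(\theta_2)\,\alpha_{t-\tau}\,\varphi_{t-\tau}^\T(\theta_2-\theta_1),\\
C_t&=\sum_{\tau\geq 1}\bigl(g_\tau(\theta_1)-g_\tau(\theta_2)\bigr)\,U_{t-\tau},
\end{align*}
so that, by $(a+b+c)^2\le 3(a^2+b^2+c^2)$, one has $\frac{1}{n}\sum_{t=1}^n|\bar{Y}_t(\theta_1)-\bar{Y}_t(\theta_2)|^2\leq \frac{3}{n}\sum_{t=1}^n(A_t^2+B_t^2+C_t^2)$.

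The key preliminary fact I would establish is that the maps $\theta\mapsto\{h_\tau(\theta)\}_\tau$ and $\theta\mapsto\{g_\tau(\theta)\}_\tau$ are \emph{uniformly continuous from $\Theta_c$ into $\ell^1$}. Each coefficient $h_\tau(\cdot)$ depends continuously (indeed rationally, away from instability) on $\theta$, while Assumption \ref{ass:us} supplies the summable, $\theta$-independent envelopes $|h_\tau(\theta)|\leq\bar{h}_\tau$ and $|g_\tau(\theta)|\leq\bar{g}_\tau$. A head/tail split then upgrades pointwise continuity to continuity into $\ell^1$ (the tail beyond index $T$ is bounded by $2\sum_{\tau>T}\bar{h}_\tau$, uniformly in $\theta$, and the finite head is continuous), and compactness of $\Theta_c$ (Heine--Cantor) makes it uniform. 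Hence, for any prescribed tolerance there is a $\delta>0$ such that $\|\theta_1-\theta_2\|<\delta$ forces both $\sum_\tau|h_\tau(\theta_1)-h_\tau(\theta_2)|$ and $\sum_\tau|g_\tau(\theta_1)-g_\tau(\theta_2)|$ to be arbitrarily small, uniformly over such pairs.

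With this in hand, Young's inequality applied to the three (truncated) convolutions, together with $\alpha_{t}^2=1$ and $(\varphi_t^\T v)^2\le\|\varphi_t\|^2\|v\|^2$, gives
\begin{align*}
\frac{1}{n}\sum_{t=1}^n A_t^2&\leq \Bigl(\sum_\tau|h_\tau(\theta_1)-h_\tau(\theta_2)|\Bigr)^{2}\,\frac{1}{n}\sum_{t=1}^n\hat{N}_t(\theta_1)^2,\\
\frac{1}{n}\sum_{t=1}^n B_t^2&\leq (K')^2\,\|\theta_1-\theta_2\|^2\,\frac{1}{n}\sum_{t=1}^n\|\varphi_t\|^2,\\
\frac{1}{n}\sum_{t=1}^n C_t^2&\leq \Bigl(\sum_\tau|g_\tau(\theta_1)-g_\tau(\theta_2)|\Bigr)^{2}\,\frac{1}{n}\sum_{t=1}^n U_t^2,
\end{align*}
with $K'=\sup_{\theta\in\Theta_c}\sum_\tau|h_\tau(\theta)|<\infty$. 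To remove the residual $\theta_1$-dependence in the first line before taking limits, I would dominate $\hat{N}_t(\theta_1)^2\leq 2N_t^2+2\|\varphi_t\|^2\sup_{\theta\in\Theta_c}\|\theta^\ast-\theta\|^2$, so that every right-hand side becomes a $\theta$-independent quantity whose $\limsup_n$ is finite with probability one by Lemma \ref{lemma:usefulasympres} (parts 1.b and 3.b, noting that the input enters $\varphi_t$, so $\frac{1}{n}\sum_t U_t^2$ is also controlled by 3.b). Taking $\sup$ over the admissible pairs and then $\limsup_n$, the $B_t$-bound carries the explicit factor $\delta^2$ and the $A_t,C_t$-bounds carry the two $\ell^1$-moduli of continuity, all of which shrink as $\delta\to 0$; choosing $\delta$ so that each of the three resulting terms is below $\epsilon/3$ yields the claim.

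I expect the genuine obstacle to be the $\ell^1$-uniform continuity of the impulse responses: one must argue that uniform stability controls the modulus of continuity of the whole coefficient tail uniformly in $\theta$, which is exactly what legitimises the head/tail split on the compact set $\Theta_c$. Everything downstream is a bookkeeping application of Young's inequality and the moment estimates of Lemma \ref{lemma:usefulasympres}; the only subtlety there is the order of $\sup_\theta$ and $\limsup_n$, which I sidestep by dominating the $\theta$-dependent factor $\hat{N}_t(\theta_1)^2$ by a $\theta$-independent envelope before passing to the limit.
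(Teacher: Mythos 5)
Your proposal is correct and follows essentially the same route as the paper's proof: both express $\bar{Y}_t(\theta_1)-\bar{Y}_t(\theta_2)$ through convolutions with the impulse responses, control each piece by Young's convolution inequality together with the moment bounds of Lemma \ref{lemma:usefulasympres}, and reduce everything to the uniform $\ell^1$-continuity of $\theta\mapsto\{h_\tau(\theta)\},\{g_\tau(\theta)\}$ established through a head/tail split under Assumption \ref{ass:us}. The only divergence is in that last sub-step: you invoke pointwise continuity of each coefficient plus Heine--Cantor on the compact $\Theta_c$, whereas the paper's Proposition \ref{prop:supDeltaH} derives an explicit Lipschitz-type modulus from the representation $h_t(\theta)=\frac{1}{2\pi}\int_{-\pi}^{\pi}\frac{e^{\iota\omega t}}{A(\theta;e^{-\iota\omega})}\,\mathrm{d}\omega$ --- both arguments are valid, yours being more elementary and the paper's quantitative.
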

\begin{proof}
The proof follows along the same line as the proof of Lemma \ref{lemma:usefulasympres}, 3.a, 3.b, 3.c., by writing, for each $n$ (and $i$),
$\{\bar{Y}_{i,t|n}(\theta_1)-\bar{Y}_{i,t|n}(\theta_2)\}=\{\bar{Y}_{i,t|n}(\theta_1)\}-\{\bar{Y}_{i,t|n}(\theta_2)\}=\{\alpha_{i,t}\hat{N}_{t|n}(\theta_1)\}\ast\{h_t(\theta_1)\}-\{\alpha_{i,t}\hat{N}_{t|n}(\theta_2)\}\ast\{h_t(\theta_2)\}+\{U_{t|n}\}\ast\{g_t(\theta_1)-g_t(\theta_2)\}=
\{\alpha_{i,t}N_{t|n}+\alpha_{i,t}\varphi_{t|n}^\T (\theta^\ast-\theta_2+[\theta_2-\theta_1])\}\ast\{h_t(\theta_1)\}+\{\alpha_{i,t}N_{t|n}+\alpha_{i,t}\varphi_{t|n}^\T (\theta^\ast-\theta_2)\}\ast\{h_t(\theta_2)\}+\{U_{t|n}\}\ast\{g_t(\theta_1)-g_t(\theta_2)\}$.
Using the notation $\Delta\theta \triangleq \theta_1-\theta_2$, $\Delta f \triangleq f(\theta_1)-f(\theta_2)$ for a generic function $f$, we can write
\begin{eqnarray}
\label{boundYbartn}
\|\{\Delta\bar{Y}_{i,t|n}\}\|_2 &\leq& \|\{\alpha_{i,t}N_{t|n}\}\ast\{\Delta h_t \}\|_2+\|\{\alpha_{i,t}\varphi_{t|n}^\T (\theta^\ast-\theta_2)\}\ast\{\Delta h_t\}\|_2 \nonumber \\
&&+ \|\{\alpha_{i,t}\varphi_{t|n}^\T \Delta\theta\} \ast \{h_t(\theta_1)\} \|_2+\| \{U_{t|n}\}\ast\Delta g_t\|_2 \nonumber  \\
&\leq& \mbox{(Young's inequality)} \nonumber  \\
&\leq& \|\{N_{t|n}\}\|_2\cdot\|\{\Delta h_t \}\|_1+(\|\theta^\ast-\theta_2\|_2\cdot \|\{\|\varphi_{t|n}\|_2 \}\|_2)\cdot\|\{\Delta h_t\}\|_1+ \nonumber \\
& &\|\Delta\theta\|_2\cdot \|\{\|\varphi_{t|n}\|_2\}\|_2\cdot \|\{h_t(\theta_1)\}\|_1+\| \{U_{t|n}\}\|_2\cdot\|\{\Delta g_t\}\|_1,
\end{eqnarray}
which is a finite quantity in view of Assumption \ref{ass:us}. Denoting $\sup_{\theta_1,\theta_2\in\Theta_c;\theta_1\in B_\delta(\theta_2)}$ for short as $\sup_{\|\Delta\theta\|<\delta}$,  we have $
\sup_{\|\Delta\theta\|<\delta}\|\Delta h_t\|_1\leq 2\sum_{t=0}^\infty\sup_{\theta\in\Theta_c}|h_t(\theta)|<\infty$ and $\sup_{\|\Delta\theta\|<\delta}\|\Delta g_t\|_1\leq 2\sum_{t=1}^\infty\sup_{\theta\in\Theta_c}|g_t(\theta)|<\infty.$
From \eqref{boundYbartn}, using \eqref{asseq:boundedNt4} and \eqref{asseq:boundedUt4}, Assumptions \ref{ass:us} and Lemma \ref{lemma:usefulasympres} (3.b), it follows that w.p.1 there are (possibly realisation-dependent) constants $C_1$, $C_2$, $C_3$, $C_4$ such that
\begin{eqnarray*}
\lefteqn{\limsup_{n\rightarrow\infty} \sup_{\|\Delta\theta\|<\delta} \sqrt{\frac{1}{n}}\|\{\Delta\bar{Y}_{i,t|n}\}\|_2}\\&&\leq C_1\sup_{\|\Delta\theta\|<\delta}\|\{\Delta h_t\}\|_1+C_2\sup_{\|\Delta\theta\|<\delta}\|\{\Delta h_t\}\|_1 + \delta\cdot C_3+C_4\sup_{\|\Delta\theta\|<\delta}\|\{\Delta g_t\}\|_1<\infty.\end{eqnarray*}   Moreover, $\sup_{\|\Delta\theta\|<\delta}\|\{\Delta h_t\}\|_1$ can be made arbitrarily small for $\delta$ small enough because $\sup_{\|\Delta\theta\|<\delta}\|\{\Delta h_t\}\|_1\leq\sum_{t=0}^\infty \sup_{\|\Delta\theta\|<\delta}|\Delta h_t|$ and  the following Proposition holds.
\begin{proposition}
\label{prop:supDeltaH}
 $\sum_{t=0}^\infty\sup_{\|\Delta\theta\|<\delta'}|\Delta h_t|$ can be made arbitrarily small for a positive $\delta'$ small enough.
\end{proposition}
\begin{proof}
First, write \begin{align*}\sum_{t=0}^\infty \sup_{\|\Delta\theta\|<\delta'}|\Delta h_t|&= \sum_{t=0}^{M-1} \sup_{\|\Delta\theta\|<\delta'}|\Delta h_t|+\sum_{t=M}^{\infty} \sup_{\|\Delta\theta\|<\delta'}|\Delta h_t|\\&\leq  \sum_{t=0}^{M-1} \sup_{\|\Delta\theta\|<\delta'}|\Delta h_t| + 2 \sum_{t=M}^\infty \sup_{\theta\in\Theta_c}|h_t(\theta)|,\end{align*} and note that for any $\epsilon'$ we can choose an $M>0$ large enough, such that $$\sum_{t=M}^\infty\sup_{\theta\in\Theta_c}|h_t(\theta)|<\frac{\epsilon'}{4}.$$ Now we prove that there exists $\delta'>0$ such that $\sum_{t=0}^{M-1}\sup_{\|\Delta\theta\|<\delta'}|\Delta h_t|<\frac{\epsilon'}{2}$. By Assumption \ref{ass:us}, the $t$-th coefficient $h_t(\theta)$ of the Laurent series $\sum_{t=0}^\infty h_t(\theta)z^{-t}=\frac{1}{A(\theta;z^-1)}$  can be written as  $h_t(\theta)=\frac{1}{2\pi}\int_{-\pi}^\pi\frac{1}{A(\theta;e^{-\iota \omega})
}e^{\iota\omega t}\mathrm{d}\omega$  ($\iota$ denotes the imaginary unit), see e.g., \cite{proakis1996digital}. From which 
$|h_t(\theta_1)-h_t(\theta_2)|\leq\frac{1}{2\pi}\int_{-\pi}^\pi\left|\frac{A(\theta_1-\theta_2;e^{-\iota\omega})-1}{A(\theta_1;e^{-\iota\omega})A(\theta_2;e^{-\iota\omega})}\right|\mathrm{d}\omega\leq\frac{1}{2\pi}K^2\int_{-\pi}^\pi |A(\theta_1-\theta_2;e^{-\iota\omega})-1|\mathrm{d}\omega$, where $K\triangleq \sup_{\theta\in\Theta_c,\omega\in[-\pi,\pi]}\frac{1}{|A(\theta;e^{-\iota\omega})|}$. Note that $K$ is finite by Assumption \ref{ass:us}; in fact, by Assumption \ref{ass:us} there exists a finite $K'$ such that, for all $\theta$ and $\omega$ it holds that $K'>\sum_{t=0}^\infty |h_t(\theta)|\geq  | \sum_{t=0}^\infty h_t(\theta) e^{\iota\omega t} |= |\frac{1}{A(\theta;e^{-\iota\omega})}|$.   Since $|A(\theta-\theta';e^{-\iota \omega})-1|=|(a_1-a'_1)e^{-\iota \omega}+(a_2-a'_2)e^{-\iota 2 \omega}+\cdots+(a_{n_a}-a'_{n_a})e^{-\iota n_a \omega}|\leq n_a \|\theta-\theta'\|_1\leq n_a^{\frac{3}{2}}\|\theta-\theta'\|_2$, the result follows by choosing $\delta'<n_a^{-\frac{3}{2}}\frac{\pi\epsilon'}{M\cdot K^2}$.
\end{proof}

The same argument holds for $\sup_{\|\Delta\theta\|<\delta}\|\Delta g_t\|_1$, and from this the theorem statement follows. \end{proof}

\paragraph{Proof of Lemma \ref{LemmaGoodR}}

\begin{proof}
The limit ${\bar{R}_\ast}=\lim_{n\rightarrow\infty}\sum_{t=1}^n\varphi_t\varphi_t^\T $ exists and is finite by Lemma \ref{lemma:usefulasympres} (2.a, 2.b). The persistent excitation condition on $\{U_t\}$ (Assumption \ref{ass:pe}), together with the fact that polynomials $A(\theta^\ast;z^{-1})$ and $B(\theta^\ast;z^{-1})$ are of known orders (Assumption \ref{ass:knownorders}) and coprime (Assumption \ref{ass:coprime}),  entails that ${\bar{R}_\ast}$ is positive definite, see e.g., \cite{verhaegen2007filtering}, Lemma 10.3, and \cite{ljung1971characterization}.

From Lemma \ref{lemma:usefulasympres}, 4.a and 4.b, it follows that for each $\theta$ the limit matrix $\bar{R}(\theta)$ exists and is independent of $i$ and of the realisations of $\{N_t\}$ and $\{\alpha_{i,t}\}$. When $\theta=\theta^\ast$, the perturbed output generated by \eqref{perturbedSystemLong} is statistically equivalent to the original output, so that $\bar{R}(\theta^\ast)={\bar{R}_\ast}$.

Let $\theta_0$ be an arbitrary element of $\Theta_c$;  we use the notation $\Delta f$ to denote the difference $f(\theta)-f(\theta_0)$. We first show that
\begin{equation}
\label{Rinuncont}
\forall \epsilon>0\;\exists \delta>0\mbox{ s.t. }\; \limsup_{n\rightarrow\infty}\sup_{\theta,\theta_0: \theta\in B_\delta(\theta_0)}\|\Delta R_{i,n}\|<\epsilon,
\end{equation}
where the domain $\Theta_c$ is implicitly assumed, and it will be omitted in what follows. 
 To prove \eqref{Rinuncont}, we focus on the matrix $\Delta R_{i,n}$ entry by entry and we study the limiting behaviour of entries of the kind $\Delta r_n$, where $r_{n}(\theta)\triangleq\frac{1}{n}\sum_{t=1}^n\bar{Y}_{i,t-\ell}(\theta)\bar{Y}_{i,t-\tau}(\theta)$, for some $\ell,\tau$ between $1$ and $n_a$, while other entries in $\Delta R_{i,n}$ that involve $U_t$ can be dealt with similarly. 
Write  $|r_n(\theta)-r_n(\theta_0)|=$
\begin{align*}|&\frac{1}{n}\sum_{t=1}^n\bar{Y}_{i,t-\ell}(\theta_0)\Delta\bar{Y}_{i,t-\tau}+\frac{1}{n}\sum_{t=1}^n\Delta\bar{Y}_{i,t-\ell}\bar{Y}_{i,t-\tau}(\theta_0)+\frac{1}{n}\sum_{t=1}^n\Delta\bar{Y}_{i,t-\ell}\Delta\bar{Y}_{i,t-\tau}|\\ \leq &\sqrt{\frac{1}{n}\sum_{t=1}^n\bar{Y}^2_{i,t-\ell}(\theta_0)}\sqrt{\frac{1}{n}\sum_{t=1}^n\Delta\bar{Y}^2_{i,t-\tau}}+ \sqrt{\frac{1}{n}\sum_{t=1}^n\Delta\bar{Y}^2_{i,t-\ell}}\sqrt{\frac{1}{n}\sum_{t=1}^n\bar{Y}^2_{i,t-\tau}(\theta_0)}\\ &  +\sqrt{\frac{1}{n}\sum_{t=1}^n\Delta\bar{Y}^2_{i,t-\ell}}\sqrt{\frac{1}{n}\sum_{t=1}^n\Delta\bar{Y}^2_{i,t-\tau}}.\end{align*}
By taking the $\sup_{\theta,\theta_0: \theta \in B_\delta(\theta_0)}$ on both sides, it is immediate from Lemma \ref{lemma:usefulasympres}, 4.a, and Lemma \ref{uniformaverageconti} that
$\sup_{\theta,\theta_0: \theta \in B_\delta(\theta_0)}|r_n(\theta)-r_n(\theta_0)|$ can be made arbitrarily small for every $n$ large enough by choosing $\delta$ small enough and \eqref{Rinuncont} is established. Since \begin{align*}\sup_{\theta,\theta_0: \theta\in B_\delta(\theta_0)}\|\Delta \bar{R}\|&=\sup_{\theta,\theta_0: \theta\in B_\delta(\theta_0)}\limsup_{n\rightarrow\infty}\|\Delta R_{i,n}\|\\& \leq \limsup_{n\rightarrow\infty}\sup_{\theta,\theta_0: \theta\in B_\delta(\theta_0)}\|\Delta R_{i,n}\|,\end{align*} \eqref{Rinuncont} entails uniform continuity of $\bar{R}(\theta)$ over $\Theta_c$, and therefore there exists a finite $\rho_2>0$ such that $\bar{R}(\theta)\prec\rho_2 I$ for all $\theta\in\Theta_c$. As for the uniform convergence of $R_{i,n}(\theta)$ to $\bar{R}(\theta)$, it can be found a $\delta>0$ and a finite number, say $M_\delta$, of $\delta$-balls $B_\delta(\theta_0^{(1)}),\ldots,B_\delta(\theta_0^{(M_\delta)})$ that cover $\Theta_c$ and are such that, for all $n$ large enough, it holds true that
i) $\max_{j=1,\ldots,M_\delta}\sup_{\theta\in B_{\delta}(\theta_0^{(j)})}\|R_{i,n}(\theta)-R_{i,n}(\theta_0^{(j)})\|<\frac{\epsilon}{3 M_\delta}$ (in view of \eqref{Rinuncont} ), ii) $ \max_{j=1,\ldots,M_\delta}\|R_{i,n}(\theta_0^{(j)})-\bar{R}(\theta_0^{(j)})\|<\frac{\epsilon}{3 M_\delta}$ (in view of pointwise convergence at the ball centres), and iii) $\max_{j=1,\ldots,M_\delta}\sup_{\theta\in B_{\delta}(\theta_0^{(j)})}\|\bar{R}(\theta_0^{(j)})-\bar{R}(\theta)\|<\frac{\epsilon}{3 M_\delta}$ (in view of uniform continuity of $\bar{R}(\theta)$). Then, for any $n$ large enough,
$\sup_\theta\|R_{i,n}(\theta)-\bar{R}(\theta)\|\leq\sum_{j=1}^{M_\delta}\sup_{\theta\in B_\delta(\theta^{(j)}_0)}\|R_{i,n}(\theta)-\bar{R}(\theta)\| \leq \sum_{j=1}^{M_\delta}\sup_{\theta\in B_\delta(\theta^{(j)}_0)}(\|R_{i,n}(\theta)-R_{i,n}(\theta_0^{(j)}) \| + \|R_{i,n}(\theta_0^{(j)}) - \bar{R}(\theta_0^{(j)}) \| +\|\bar{R}(\theta_0^{(j)}) - \bar{R}(\theta) \|  )\leq \sum_{j=1}^{M_\delta}
 (\frac{\epsilon}{3 M_\delta}+\frac{\epsilon}{3 M_\delta}+	\frac{\epsilon}{3 M_\delta} )=\epsilon$.

To see that  $\bar{R}(\theta)\succ I\rho_1$ for all $\theta\in\Theta_c$, recall that
$\{\bar{Y}_{i,t}(\theta)\}= (\{\alpha_{i,t}\hat{N}_{t}(\theta)\} \ast \{h_t(\theta)\})+ (\{U_t\} \ast \{g(\theta)\})$, where  $\{U_t\}$  is persistently exciting of order $n_a+n_b$ (Assumption \ref{ass:pe}). Any realisation of $\{\alpha_{i,t}\hat{N}_t(\theta)\}$ (in a set of probability 1) is ``uncorrelated'' with $\{U_t\}$ in the sense that $\lim_{n\rightarrow\infty}\frac{1}{n}\sum_{t=1}^n \alpha_{i,t}\hat{N}_t(\theta)U_{t-\tau}=0$ for every $\tau$; moreover, $\{\alpha_{i,t}\hat{N}_t(\theta)\}$ is persistently exciting of every order in the sense of \cite{ljung1971characterization}, for every $\theta\neq\theta^\ast$.\footnote{Proving these claims is easy if we fix a realisation of $\{N_t\}$, and only the signs are left random; then it is just a matter of checking that the conditions for the Kolmogorov's Strong Law of Large Numbers (Theorem \ref{slln}) are met by the conditionally independent sequences $\{\alpha_t\hat{N}_t(\theta)\alpha_{t-j}\hat{N}_{t-j}(\theta)\}$ and $\{\alpha_t\hat{N}_t(\theta)U_t\}$.} Applying standard results on identifiability (e.g., Lemma 10.2 in \cite{verhaegen2007filtering}) it follows immediately that $\bar{R}(\theta)$ is invertible for every $\theta\in\Theta_c\setminus\{\theta^\ast\}$. We knew already that $\bar{R}(\theta^\ast)={\bar{R}_\ast}\succ 0$ so that, by continuity of $\bar{R}(\theta)$, we can conclude that
$\bar{R}(\theta)\succ I\rho_1$ over the whole $\Theta_c$ for some $\rho_1>0$.
\end{proof}

\paragraph{Proof of Lemma \ref{Lemma:supphibarNbar}}

\begin{proof}
The first statement follows from Lemma \ref{lemma:usefulasympres} (1.c, 1.d).
As for the second statement, we first  prove pointwise convergence, i.e., we prove that for all $\theta\in\Theta_c$, $$\lim_{n\rightarrow \infty} \frac{1}{n} \sum_{t=1}^n \alpha_{i,t}\hat{N}_{t}(\theta)\bar{\varphi}_{i,t}(\theta)=0 \text{  w.p.1}.$$
We work conditioning on a sequence $\{N_t\}$, i.e., we fix a realisation of the noise, which we recall is independent of the sign-sequences  $\{\alpha_{i,t}\}$, $i=1,\ldots,m-1$. Therefore, in what follows, all the probabilities and expected values  are with respect to the random sign-sequences $\{\alpha_{i,t}\}$, $i=1,\ldots,m-1$, only.
Since the result that we prove holds conditionally on any realisation $\{N_t\}$ in a set of probability 1, then it holds unconditionally w.p.1.
For a fixed $\theta\in\Theta$  (and $i$), define
$$\begin{cases}
        z_{i,t}(\theta)=z_{i,t-1}(\theta)+\frac{1}{t}\alpha_{i,t}\hat{N}_{t}(\theta)\bar{\varphi}_{i,t}(\theta)
        \\
        z_{i,0}=0         \end{cases}$$
We aim at showing that each component of $z_{i,n}(\theta)$ is a martingale with bounded variance. From this, convergence of $\frac{1}{n} \sum_{t=1}^n \alpha_{i,t}\hat{N}_{t}(\theta)\bar{\varphi}_{i,t}(\theta)$ to zero as $n\rightarrow \infty$ will be easily proved.

Clearly, $\E[|z_{i,t}|]<\infty$ for all $t$. Denote by ${\cal A}_{t}$ the $\sigma$-algebra generated by the sequence $\{\alpha_{i,t}\}$ until time $t$, i.e., by $\alpha_{i,1},\ldots,\alpha_{i,t}$.
Since
$\E[z_{i,t+1}|{\cal A}_{t}]=z_t$, the sequence $\{z_{i,t}^{(j)}\}$ formed by the $j$-th component  of the vector $z_{i,t}$ is a martingale. Moreover,
$\E[z_{i,t}^{(j)}z_{i,t-1}^{(j)}|{\cal A}_{t-1}]= (z_{i,t-1}^{(j)})^2+\E[\alpha_{i,t}|{\cal A}_{t-1}]\cdot \frac{1}{t}(N_{t}+\varphi_t^\T (\theta^\ast-\theta))\bar{\varphi}_{i,t}(\theta)^{(j)}z_{i,t-1}^{(j)}=(z_{i,t-1}^{(j)})^2$, from which the useful identity
\begin{equation}
\label{usefulstuffz} \E[(z_{i,t}^{(j)}-z_{i,t-1}^{(j)})^2|{\cal A}_{t-1}]=\E[(z_{i,t}^{(j)})^2-(z_{i,t-1}^{(j)})^2|{\cal A}_{t-1}]
\vspace{3mm}
\end{equation} follows.
 Thus,
\begin{eqnarray}
\lefteqn{\E[(z_{i,t}^{(j)})^2] =\sum_{k=1}^t \left( \E[ (z_{i,k}^{(j)})^2]-\E[(z_{i,k-1}^{(j)})^2]\right)}\\
& = &   \sum_{k=1}^t \E[[\E[ (z_{i,k}^{(j)})^2-(z_{i,k-1}^{(j)})^2   |{\cal A}_{k-1} ]]\\
&= &    \sum_{k=1}^t \E[(z_{i,k}^{(j)}-z_{i,k-1}^{(j)})^2]  \mbox{ (by \eqref{usefulstuffz})} \\
&= & \sum_{k=1}^t \frac{1}{k^2}\hat{N}^2_{k}(\theta)\E[ \bar{\varphi}^{(j)}_{i,k}(\theta)^2]\\
 & \leq & \sum_{k=1}^t \frac{1}{k}\hat{N}^2 _{k}(\theta) \frac{1}{k}  \E\left [ \|\bar{\varphi}_{i,k}(\theta)\|^2\right]\\
 & \leq &\sqrt{\sum_{k=1}^t \frac{1}{k^2}\hat{N}^4_{k}(\theta)}   \sqrt{\sum_{k=1}^t \frac{1}{k^2} \E[\|\bar{\varphi}_{i,k}(\theta)\|^2}]^2\mbox{ (Cauchy-Schwarz)}\\
 & \leq & \sqrt{\sum_{k=1}^t \frac{1}{k^2}\hat{N}^4_{k}(\theta)}\sqrt{  \E\left[\sum_{k=1}^t \frac{1}{k^2} \|\bar{\varphi}_{i,k}(\theta)\|^4\right]} \mbox{ (Jensen's inequality),}
\end{eqnarray}
 and, keeping in mind that the expected value is only w.r.t. $\{\alpha_{i,t}\}$, this is bounded in virtue of Lemma \ref{lemma:usefulasympres} (3.b, 3.d). Thus, we have proved that $\{z_{i,t}^{(j)}\}$ is a martingale with bounded variance uniformly w.r.t. $t$, therefore $\sup_{t}\E[|z_{i,t}^{(j)}|]<\infty$, and we can apply Doob's Theorem (Theorem \ref{Th:Doob} in Appendix \ref{usefulres}), to conclude that $\lim_{t\rightarrow\infty}z_{i,t}$ is, w.p.1, a limit vector with finite-valued components. Finally, by Kronecker's Lemma, $\lim_{t\rightarrow\infty}z_{i,t}=\sum_{t=1}^\infty{\frac{\alpha_{i,t}\hat{N}_t(\theta)\bar{\varphi}_{i,t}(\theta)}{t}}<\infty$ implies $\lim_{n\rightarrow\infty}\frac{1}{n}\sum_{t=1}^n\alpha_{i,t}\hat{N}_t(\theta)\bar{\varphi}_{i,t}(\theta)=0$.
As for uniform convergence, using Lemma \ref{uniformaverageconti} and Lemma \ref{lemma:usefulasympres}, one can easily show that there exists a positive $\delta$ such that, for $n$ large enough, the values $\{|\frac{1}{n}\sum_{t=0}^n \alpha_{i,t}\hat{N}_{t}(\theta)\bar{\varphi}_{i,t}(\theta)|:\theta\in B_\delta(\theta') \}$ are $\epsilon$-close to each other, no matter what $\theta'$ is. Since $\Theta_c$ is compact, a finite number $\delta$-balls cover the whole set $\Theta_c$ and therefore  
$|\frac{1}{n}\sum_{t=0}^n \alpha_{i,t}\hat{N}_{t}(\theta)\bar{\varphi}_{i,t}(\theta)|$ can be made arbitrarily small uniformly on the whole $\Theta_c$ for $n$ large enough.
\end{proof}

\subsection{Proof of Theorem \ref{theorem-asymptotic-shape}}
\label{proof:asymptoticshapeOutline}
We need some preliminary definitions and re-writings. Define
\begin{eqnarray}
\hspace{15mm}\gamma_{i,n}(\theta)  & \triangleq & \frac{1}{n} \sum_{t=1}^{n}{ \alpha_{i,t}\, \bar{\varphi}_{i,t}(\theta) N_t }, \label{gin} \\
\hspace{15mm}\Gamma_{i,n}(\theta)  & \triangleq & \frac{1}{n} \sum_{t=1}^{n}{ \alpha_{i,t}\, \bar{\varphi}_{i,t}(\theta) \varphi_t^\T } \label{Gin}.
\end{eqnarray}

Let $P_i(\theta)=n\cdot\|S_i(\theta)\|^2$, $i=0,\ldots,m-1$. $P_0(\theta)$ can be written as
\[
P_0(\theta)=\sqrt{n}(\theta-\hat{\theta}_{n})^\T  R_n(\theta-\hat{\theta}_{n}) \sqrt{n},
\]
while $P_i(\theta)$, for $i=1,\ldots,m-1$, can be written as $P_i(\theta)=n\cdot\|S_i(\theta)\|^2$ 
\[=(\sqrt{n} \gamma_{i,n}(\theta) + \sqrt{n} \Gamma_{i,n}(\theta) (\theta^\ast-\theta)  )^\T   (R_{i,n}(\theta))^{-1} (\sqrt{n}\gamma_{i,n}(\theta) + \sqrt{n} \Gamma_{i,n}(\theta) (\theta^\ast-\theta) ).
\]

Let $\bar{P}(\theta) = [P_1(\theta) \cdots P_{m-1}(\theta)]^\T $. $\bar{P}(\theta)$ is rewritten as
\[
 \bar{P}(\theta)=p_1(\theta)+p_2(\theta)+p_3(\theta),
\]
where  \begin{eqnarray*}p_1(\theta) = [p_{1,1}(\theta) \cdots p_{1,m-1}(\theta)]^\T , \\p_2(\theta) = [p_{2,1}(\theta) \cdots p_{2,m-1}(\theta)]^\T ,\\ p_3(\theta)  = [p_{3,1}(\theta) \cdots p_{3,m-1}(\theta)]^\T,\end{eqnarray*} and, for $i=1,\ldots, m-1$,\begin{eqnarray*}
{p}_{1,i}(\theta)&=&(\theta^\ast-\theta)^\T  q_{1,i}(\theta)(\theta^\ast-\theta)\\
q_{1,i}(\theta)&=&\sqrt{n}\Gamma_{i,n}^\T (\theta)R_{i,n}^{-1}(\theta)\Gamma_{i,n}(\theta)\sqrt{n} \\
p_{2,i}(\theta)&=& \sqrt{n} \gamma_{i,n}^\T  (\theta) R_{i,n}^{-1}(\theta) \gamma_{i,n}(\theta)  \sqrt{n}\\
p_{3,i}(\theta)&=&  (\theta^\ast-\theta)^\T  q_{3,i}(\theta)\\
q_{3,i}(\theta)&=& 2\sqrt{n}\Gamma_{i,n}^\T (\theta)R_{i,n}^{-1}(\theta) \gamma_{i,n}(\theta)\sqrt{n}.
\end{eqnarray*}
With the notation $\gamma_{i,n}(\theta)\triangleq \gamma_{i,n}(\theta)-\gamma_{i,n}(\theta^\ast)$, we can further decompose $p_{2,i}(\theta)$ as follows
$$p_{2,i}(\theta)=\bar{p}_{2,i}(\theta)+ p'_{2,i}(\theta)+p''_{2,i}(\theta),$$
where
\begin{eqnarray*}
\bar{p}_{2,i}(\theta)&=&\sqrt{n} \gamma_{i,n}^\T  (\theta^\ast) R_{i,n}^{-1}(\theta) \gamma_{i,n}(\theta^\ast)  \sqrt{n} \\
p'_{2,i}(\theta)&=& 2\sqrt{n}\Delta\gamma_{i,n}^\T (\theta)R_{i,n}^{-1} (\theta)\gamma_{i,n}(\theta^\ast)\sqrt{n}\\
p''_{2,i}(\theta)&=&\sqrt{n}\Delta\gamma_{i,n}^\T (\theta) R_{i,n}^{-1}(\theta) \Delta\gamma_{i,n}(\theta)\sqrt{n}.
\end{eqnarray*}
We denote by $\|\cdot\|$ the Frobenius norm of a matrix,  and define $$q_1(\theta)=[\|q_{1,1}(\theta)\|,\ldots,\|q_{1,m-1}(\theta)\|]^\T ,$$ $$q_3(\theta)=[\|q_{3,1}(\theta)\|,\ldots,\|q_{3,m-1}(\theta)\|]^\T ,$$
$$\bar{p}_2(\theta)=[\bar{p}_{2,1}(\theta),\ldots,\bar{p}_{2,m-1}(\theta)]^\T ,$$
$$p'_2(\theta)=[p'_{2,1}(\theta),\ldots,p'_{2,m-1}(\theta)]^\T ,$$
$$p''_2(\theta)=[p''_{2,1}(\theta),\ldots,p''_{2,m-1}(\theta)]^\T .$$

The SPS confidence set is contained in the set of $\theta$'s for which
\[
 P_0(\theta)\stackrel{q_m}{\leq}\bar{P}(\theta),
\]
where $P_0(\theta)\stackrel{q_m}{\leq}\bar{P}(\theta)$ means that $P_0(\theta)$ is less than or equal to  $q_m$ or more of the elements in the vector on the right-hand side (see point 6 in Table \ref{indtab}). In what follows, the $\sup$ operator is understood to be applied component-wise when it is applied to a vector. We have
\begin{eqnarray*}
\widehat{\Theta}_{n,m}&\subseteq&\{\theta:P_0(\theta)\stackrel{q_m}{\leq} \bar{P}(\theta)\}\\
&\subseteq & \{\theta:P_0(\theta)\stackrel{q_m}{\leq} \sup_{\theta\in\widehat{\Theta}_{n,m}}\bar{P}
(\theta)\}\\
& \subseteq &  \{\theta:P_0(\theta)\stackrel{q_m}{\leq}\bar{\cal U}_{n,m}\},
\end{eqnarray*}
where
\begin{eqnarray}
 \bar{\cal U}_{n,m} &\triangleq& \sup_{\theta\in\hat{\theta}_{n,m}}\|\theta^\ast-\theta\|^2 q_1(\theta)\nonumber \\
& \phantom{\subseteq}  &+\sup_{\theta\in\widehat{\Theta}_{n,m}}\bar{p}_2(\theta)+\sup_{\theta\in\widehat{\Theta}_{n,m}}p'_2(\theta)+\sup_{\theta\in\widehat{\Theta}_{n,m}}p''_2(\theta) \nonumber \\
& \phantom{\subseteq}  &
+\sup_{\theta\in\widehat{\Theta}_{n,m}}\|\theta^\ast-\theta\| q_3(\theta). \label{def:upperbound}
\end{eqnarray} 
In all that follows, symbol  ``$\stackrel{p}{\rightarrow}$'' (``$\stackrel{d}{\rightarrow}$'') denotes convergence in probability (distribution). 
In Appendix \ref{proof:asymptoticshape}, under the assumptions of Theorem \ref{theorem-asymptotic-shape}, we will prove the following Lemma
\begin{lemma}
\label{Lemma:AsShStuffgoToZero} As $n\rightarrow\infty$,
\begin{eqnarray}
\sup_{\theta\in\widehat{\Theta}_{n,m}}\|\theta^\ast-\theta\|^2 q_1(\theta)\stackrel{p}{\rightarrow}0 \label{q1tozero}\\
\sup_{\theta\in\widehat{\Theta}_{n,m}}p'_2(\theta)\stackrel{p}{\rightarrow}0 \label{p'2tozero}\\
\sup_{\theta\in\widehat{\Theta}_{n,m}}p''_2(\theta) \stackrel{p}{\rightarrow}0 \label{p2''tozero}\\
\sup_{\theta\in\widehat{\Theta}_{n,m}}\|\theta^\ast-\theta\| q_3(\theta)\stackrel{p}{\rightarrow}0 \label{q3tozero}
\end{eqnarray}
while
\begin{eqnarray}
\label{s2tochi}
\sup_{\theta\in\widehat{\Theta}_{n,m}}\bar{p}_2(\theta)\stackrel{d}{\rightarrow}\sigma^2\chi^2_{m-1},
\end{eqnarray}
where $\chi_{m-1}^2$ is a vector of $m-1$ independent $\chi^2$ distributed random variables with $\mbox{dim}(\theta^*)$ degrees of freedom.
\end{lemma}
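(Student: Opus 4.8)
The plan is to derive all five statements from a small set of common ingredients: the shrinkage of $\widehat{\Theta}_{n,m}$ to $\theta^\ast$ provided by Theorem \ref{theorem-consistency}, the uniform bounds on $R_{i,n}^{-1}(\theta)$ from Lemma \ref{LemmaGoodR}, and the stochastic boundedness and equicontinuity in $\theta$ of the normalised sums $\sqrt{n}\,\gamma_{i,n}(\theta)$ and $\sqrt{n}\,\Gamma_{i,n}(\theta)$. The structural fact that makes these sums tractable is that, relative to the filtration $\mathcal{G}_t=\sigma(N_s,\{\alpha_{j,s}\}_j:s\le t)$, the regressors $\bar{\varphi}_{i,t}(\theta)$ and $\varphi_t$ are $\mathcal{G}_{t-1}$-measurable while $\alpha_{i,t}$ is an independent mean-zero sign; hence $\{\alpha_{i,t}\bar{\varphi}_{i,t}(\theta)N_t\}$ and $\{\alpha_{i,t}\bar{\varphi}_{i,t}(\theta)\varphi_t^\T\}$ are martingale-difference sequences, and the sign-perturbation is precisely what renders martingale limit theorems applicable, as anticipated in the comment preceding this appendix.

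First I would establish, for each $i$, that $\sup_{\theta\in\widehat{\Theta}_{n,m}}\|\sqrt{n}\,\gamma_{i,n}(\theta)\|$ and $\sup_{\theta\in\widehat{\Theta}_{n,m}}\|\sqrt{n}\,\Gamma_{i,n}(\theta)\|$ are $O_p(1)$, together with
$$\sup_{\theta\in\widehat{\Theta}_{n,m}}\|\sqrt{n}\,\Delta\gamma_{i,n}(\theta)\|\stackrel{p}{\rightarrow}0.$$
Pointwise tightness at a fixed $\theta$ is immediate for $\sqrt{n}\,\gamma_{i,n}(\theta)$, whose aggregate conditional variation relative to $\mathcal{G}_t$ equals $\sigma^2 R_{i,n}(\theta)$, uniformly bounded by Lemma \ref{LemmaGoodR}; for $\sqrt{n}\,\Gamma_{i,n}(\theta)$ each entry's conditional variation is bounded, after Cauchy--Schwarz, by the fourth-moment averages of Lemma \ref{lemma:usefulasympres} (3.b, 3.d). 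The vanishing of $\Delta\gamma_{i,n}$ is the key refinement: since $\sqrt{n}\,\Delta\gamma_{i,n}(\theta)=\tfrac{1}{\sqrt n}\sum_t\alpha_{i,t}\big(\bar{\varphi}_{i,t}(\theta)-\bar{\varphi}_{i,t}(\theta^\ast)\big)N_t$, its conditional variation is dominated by $\sigma^2\,\tfrac{1}{n}\sum_t\|\bar{\varphi}_{i,t}(\theta)-\bar{\varphi}_{i,t}(\theta^\ast)\|^2$, which Lemma \ref{uniformaverageconti} makes arbitrarily small, uniformly in $n$, once $\|\theta-\theta^\ast\|$ is small. Upgrading this to a statement uniform over $\theta$ in a shrinking ball requires a covering argument combined with a maximal (stochastic-equicontinuity) inequality for the martingale process; the random set $\widehat{\Theta}_{n,m}$ is handled by noting, via Theorem \ref{theorem-consistency}, that $\widehat{\Theta}_{n,m}\subseteq B_\delta(\theta^\ast)$ with probability tending to one. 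Writing $\sqrt{n}\,\gamma_{i,n}(\theta)=\sqrt{n}\,\gamma_{i,n}(\theta^\ast)+\sqrt{n}\,\Delta\gamma_{i,n}(\theta)$ then also yields the claimed uniform tightness of $\sqrt{n}\,\gamma_{i,n}$, and an analogous estimate gives that of $\sqrt{n}\,\Gamma_{i,n}$.

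With these bounds, \eqref{q1tozero}--\eqref{q3tozero} reduce to elementary estimates. Since $\|R_{i,n}^{-1}(\theta)\|\le\rho_1^{-1}$ for all $\theta\in\Theta_c$ and $n$ large (Lemma \ref{LemmaGoodR}), I bound $\|q_{1,i}(\theta)\|\le\rho_1^{-1}\|\sqrt{n}\,\Gamma_{i,n}(\theta)\|^2=O_p(1)$ and $\|q_{3,i}(\theta)\|\le 2\rho_1^{-1}\|\sqrt{n}\,\Gamma_{i,n}(\theta)\|\,\|\sqrt{n}\,\gamma_{i,n}(\theta)\|=O_p(1)$; multiplying by $\sup_{\theta\in\widehat{\Theta}_{n,m}}\|\theta^\ast-\theta\|^2$ and $\sup_{\theta\in\widehat{\Theta}_{n,m}}\|\theta^\ast-\theta\|$, which tend to zero by Theorem \ref{theorem-consistency}, gives \eqref{q1tozero} and \eqref{q3tozero}. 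Similarly $p''_{2,i}(\theta)\le\rho_1^{-1}\|\sqrt{n}\,\Delta\gamma_{i,n}(\theta)\|^2$ and $|p'_{2,i}(\theta)|\le 2\rho_1^{-1}\|\sqrt{n}\,\Delta\gamma_{i,n}(\theta)\|\,\|\sqrt{n}\,\gamma_{i,n}(\theta^\ast)\|$ tend to zero in probability by the vanishing of $\Delta\gamma_{i,n}$ together with the tightness of $\sqrt{n}\,\gamma_{i,n}(\theta^\ast)$, establishing \eqref{p2''tozero} and \eqref{p'2tozero}.

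Finally, for \eqref{s2tochi} I would apply a multivariate martingale central limit theorem to the stacked vector formed by $\sqrt{n}\,\gamma_{i,n}(\theta^\ast)=\tfrac{1}{\sqrt n}\sum_t\alpha_{i,t}\bar{\varphi}_{i,t}(\theta^\ast)N_t$, $i=1,\ldots,m-1$. Relative to $\mathcal{G}_t$, the conditional covariation of block $i$ equals $\sigma^2 R_{i,n}(\theta^\ast)\to\sigma^2{\bar{R}_\ast}$ by Lemma \ref{LemmaGoodR}, while the cross-block conditional covariations vanish because $\E[\alpha_{i,t}\alpha_{j,t}\mid\mathcal{G}_{t-1}]=0$ for $i\ne j$; the Lindeberg condition follows from the fourth-moment bounds. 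Hence the stacked vector converges in distribution to independent $\mathcal{N}(0,\sigma^2{\bar{R}_\ast})$ variates $Z_1,\ldots,Z_{m-1}$. For $\theta\in\widehat{\Theta}_{n,m}$, consistency and the uniform convergence of Lemma \ref{LemmaGoodR} let me replace $R_{i,n}^{-1}(\theta)$ by ${\bar{R}_\ast}^{-1}$ up to a vanishing error, so the continuous-mapping theorem turns $\bar{p}_{2,i}$ into $Z_i^\T{\bar{R}_\ast}^{-1}Z_i=\sigma^2\chi^2_d$, and independence across $i$ gives the limit $\sigma^2\chi^2_{m-1}$. I expect the main obstacle to be the uniform-in-$\theta$ control of the martingale processes of the second step: unlike the FIR setting of \cite{AutomaticaSPS2017}, $\bar{\varphi}_{i,t}(\theta)$ is generated by a $\theta$-dependent recursion, so proving stochastic equicontinuity --- and thereby justifying the passage from the random argument $\theta\in\widehat{\Theta}_{n,m}$ to a uniform bound on a shrinking ball --- is where the bulk of the technical effort lies, Lemmas \ref{uniformaverageconti} and \ref{LemmaGoodR} being the indispensable tools.
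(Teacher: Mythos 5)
Your overall architecture coincides with the paper's: establish (i) that $\sup_{\theta\in\widehat{\Theta}_{n,m}}\sqrt{n}\|\Delta\gamma_{i,n}(\theta)\|$ and $\sup_{\theta\in\widehat{\Theta}_{n,m}}\sqrt{n}\|\Delta\Gamma_{i,n}(\theta)\|$ vanish in probability (the paper's Lemma \ref{Lemma:Deltagammasgotozero}), (ii) a martingale CLT for the stacked vector $[\sqrt{n}\gamma_{i,n}(\theta^\ast)]_{i=1}^{m-1}$ with vanishing cross-block conditional covariations (the paper's Lemma \ref{Lemma:asymptoticallynormal}), and then (iii) deduce \eqref{q1tozero}--\eqref{s2tochi} from Lemma \ref{LemmaGoodR}, Theorem \ref{theorem-consistency}, Chebyshev and Slutsky; your steps (ii) and (iii) are essentially the paper's. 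The genuine gap is in step (i), which is also the technical core of the whole lemma. You bound the conditional variance of $\sqrt{n}\,\Delta\gamma_{i,n}(\theta)$ \emph{at a fixed} $\theta$ by $\sigma^2\frac{1}{n}\sum_t\|\Delta\bar{\varphi}_{i,t}(\theta)\|^2$ and invoke Lemma \ref{uniformaverageconti}, but a pointwise variance bound says nothing about the supremum over the uncountable random set $\widehat{\Theta}_{n,m}$; you acknowledge this and defer it to ``a covering argument combined with a maximal (stochastic-equicontinuity) inequality,'' which is precisely the step that does not go through as stated. A union bound over an $\epsilon$-net costs a factor of order $\epsilon^{-d}$ that second-moment (Chebyshev) bounds cannot absorb, controlling the oscillation of the process between net points reproduces the same problem, and a chaining/Kolmogorov-criterion argument would require increment moments of order exceeding $d=n_a+n_b$, whereas the assumptions supply only fourth (resp.\ eighth) moments --- so this route already fails for $d\geq 4$; nor is there any usable sub-Gaussianity (Azuma), since the martingale increments involve the unbounded noise $N_t$ and random regressors. (A smaller instance of the same slip: for tightness you cite the realisation-wise bounds of Lemma \ref{lemma:usefulasympres} (3.b, 3.d) where expectation bounds are needed; the paper introduces Lemma \ref{Lemma:sumEphit4} exactly for this purpose.)

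What the paper does instead --- and what is missing from your proposal --- is to avoid empirical-process machinery altogether by exploiting linearity. Expanding $\Delta\bar{Y}_{i,t-k}(\theta)$ through the impulse responses, each entry of $\sqrt{n}\,\Delta\gamma_{i,n}(\theta)$ and $\sqrt{n}\,\Delta\Gamma_{i,n}(\theta)$ becomes a series in which the $\theta$-dependence sits entirely in the \emph{deterministic} coefficients $\Delta h_\tau(\theta)$, $h_\tau(\theta)(\theta^\ast-\theta)$, $\Delta g_\tau(\theta)$, while the random variables $G_{i,\tau|n}$, $G'_{i,j,\tau|n}$ are $\theta$-free with variances bounded uniformly in $\tau$ and $n$. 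The supremum over $\theta$ then passes onto the deterministic coefficients and is factored as $H\left(\sup_{\theta\in\widehat{\Theta}_{n,m}}\|\theta^\ast-\theta\|\right)\cdot B_n$, where $H(\delta)\rightarrow 0$ as $\delta\rightarrow 0$ (Proposition \ref{prop:supDeltaH}) and $B_n$ has uniformly bounded second moment (Proposition \ref{prop:sumofrv}); strong consistency makes the first factor vanish w.p.1 and Chebyshev finishes. This factorisation is the key idea your proof lacks; without it (or a workable substitute), the claims feeding \eqref{q1tozero}, \eqref{p'2tozero}, \eqref{p2''tozero} and \eqref{q3tozero} remain unproven.
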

From \eqref{def:upperbound} and Lemma \ref{Lemma:AsShStuffgoToZero}, in view of Slutsky's Theorem (Theorem \ref{SlutskyTh} in Appendix \ref{usefulres}), we can conclude that
$$\bar{\cal U}_{n,m}\stackrel{d}{\rightarrow}\sigma^2\chi^2_{m-1}\mbox{ as } n\rightarrow\infty.$$
Denote by $\widehat{\mu}_{n,m}$ the value of the $q_m$ largest element of $\frac{1}{\sigma^2}\bar{\cal U}_{n,m}$.
Hence, \begin{equation} \widehat{\Theta}_{n,m} \subseteq\left\{\theta :\ P_0(\theta)\leq \widehat{\mu}_{n,m} \sigma^2\right\}, \label{rhs2}
\end{equation}
or, equivalently,
\[
\widehat{\Theta}_{n,m} \subseteq \left\{\theta :(\theta-\hat{\theta}_{n})^\T  R_n (\theta-\hat{\theta}_{n})\right.
\]
\[
\leq \left. \frac{{\mu} \sigma^2}{n } + \frac{(\widehat{\mu}_{n,m}-\mu)\sigma^2}{n}\right\},
\]
(where, we recall, $\mu$ is such that $F_{\chi^2}(\mu)=p$).
Let $\varepsilon_{n,m} \triangleq (\widehat{\mu}_{n,m}-\mu) \sigma^2$. In order to prove the theorem, we must show that $\lim_{m\rightarrow\infty}\lim_{n\rightarrow\infty}\widehat{\mu}_{n,m}=\mu$ a.s..
The function selecting the $q_m$th largest element in a vector is a continuous function, and hence, by the continuous mapping theorem (Theorem \ref{convf} in Appendix \ref{usefulres}), $\widehat{\mu}_m \triangleq \lim_{n\rightarrow\infty}\widehat{\mu}_{n,m}$ has the same distribution as the $q_m$th largest element of the vector $\chi^2_{m-1}$. We next show that $\widehat{\mu}_m$ converges a.s. to $\mu$ as $m\rightarrow\infty$,  which concludes the proof.

Given $m-1$ values $x_1,\ldots, x_{m-1}$ which are realisations of $m-1$ independent $\chi^2$ distributed random variables, consider the following empirical estimate for the cumulative $\chi^2$ distribution function
\[
  \widehat{F}_m(z)=\frac{1}{m-1}\sum_{i=1}^{m-1}\indic{x_i\leq z},
\]
where $\indic{\cdot}$ is the indicator function. From the Glivenko-Cantelli Theorem (Theorem \ref{GCTh} in Appendix \ref{usefulres}), we have
\begin{equation}
\label{GC-convergence}
  \sup_{z}|\widehat{F}_m(z)-F_{\chi^2}(z)|\rightarrow 0 \hspace{5mm}\text{ a.s. as } m\rightarrow \infty.
\end{equation}
Since $\widehat{F}_m(\widehat{\mu}_m) = 1-\frac{q_m-1}{m-1}= p_m \rightarrow p$ and $F_{\chi^2}(\mu)=p$, with $F_{\chi^2}$ continuous and invertible, we can conclude that $\lim_{m\rightarrow\infty} \widehat{\mu}_m = \mu$ w.p.1.

\subsection{Proof of Lemma \ref{Lemma:AsShStuffgoToZero}}
\label{proof:asymptoticshape}
First, we state a useful result. (In all the lemmas in this section, the assumptions of Theorem \ref{theorem-asymptotic-shape} are left implicit.)
\begin{lemma}
\label{Lemma:sumEphit4}
\begin{equation}
\label{sumEphi4bounded}
\limsup_{n\rightarrow\infty}\frac{1}{n}\sum_{t=1}^{n}\E[\|\varphi_t\|^4]<\infty
\end{equation}
(the expected value is w.r.t. to $\{N_t\}$).
\end{lemma}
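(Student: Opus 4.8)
The plan is to reuse the deterministic convolution estimate already established in the proof of Lemma~\ref{lemma:usefulasympres} (items 3.a and 3.b) and simply integrate it, taking advantage of the fact that inequality \eqref{YboundedA} is a genuine realisation-wise (pathwise) bound, so that passing to expectations is immediate by monotonicity of $\E[\cdot]$ and requires no interchange of limits. Concretely, I would start from \eqref{YboundedA},
$$
\frac{1}{n}\sum_{t=1}^n Y_t^4\leq 8 C'^4\,\frac{1}{n}\sum_{t=1}^n N_t^4 + 8 C''^4\,\frac{1}{n}\sum_{t=1}^n U_t^4 ,
$$
which holds for every noise realisation, and take expectations with respect to $\{N_t\}$ (recalling that $\{U_t\}$ is deterministic) to obtain
$$
\frac{1}{n}\sum_{t=1}^n \E[Y_t^4]\leq 8 C'^4\,\frac{1}{n}\sum_{t=1}^n \E[N_t^4] + 8 C''^4\,\frac{1}{n}\sum_{t=1}^n U_t^4 .
$$

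The second step is to control the right-hand side. The input term has finite $\limsup$ directly by Assumption~\ref{ass:pe}, equation~\eqref{asseq:boundedUt4}. For the noise term I would reduce the eighth-moment hypothesis to a fourth-moment average bound: by Lyapunov's inequality $\E[N_t^4]\leq(\E[N_t^8])^{1/2}$, and then by Cauchy--Schwarz applied to the empirical average, $\frac{1}{n}\sum_{t=1}^n(\E[N_t^8])^{1/2}\leq\big(\frac{1}{n}\sum_{t=1}^n \E[N_t^8]\big)^{1/2}$, whose $\limsup$ is finite by Assumption~\ref{ass:independentNoise}, equation~\eqref{asseq:boundedENt8}. (Under the i.i.d.\ hypothesis of Assumption~\ref{ass:zmiidnoise}, which is in force for Theorem~\ref{theorem-asymptotic-shape}, this is even simpler: all moments are constant in $t$ and $\E[N_1^8]<\infty$ gives $\E[N_1^4]<\infty$.) Combining the two estimates yields $\limsup_{n\to\infty}\frac{1}{n}\sum_{t=1}^n \E[Y_t^4]<\infty$.

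Finally I would pass from $Y$ to $\varphi$ using the trivial bound \eqref{eq:trivialboundvarphi}, $\|\varphi_t\|\leq \sum_{k=1}^{n_a}|Y_{t-k}|+\sum_{k=1}^{n_b}|U_{t-k}|$. Raising to the fourth power and applying the convexity inequality $\big(\sum_{j=1}^d x_j\big)^4\leq d^3\sum_{j=1}^d x_j^4$ (with $d=n_a+n_b$) gives the pathwise bound $\|\varphi_t\|^4\leq d^3\big(\sum_{k=1}^{n_a}Y_{t-k}^4+\sum_{k=1}^{n_b}U_{t-k}^4\big)$, so that
$$
\frac{1}{n}\sum_{t=1}^n \E[\|\varphi_t\|^4]\leq d^3\bigg(\sum_{k=1}^{n_a}\frac{1}{n}\sum_{t=1}^n \E[Y_{t-k}^4]+\sum_{k=1}^{n_b}\frac{1}{n}\sum_{t=1}^n U_{t-k}^4\bigg).
$$
Each inner average is a finitely-shifted copy of $\frac{1}{n}\sum\E[Y_t^4]$ or $\frac{1}{n}\sum U_t^4$; since the shifts are finite in number and the initial conditions are zero ($Y_t=U_t=0$ for $t\leq 0$), each has finite $\limsup$ by the previous step and by \eqref{asseq:boundedUt4}, whence the claim follows. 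There is no genuine obstacle here; the only points deserving care are the moment-order reduction from the eighth moment to the fourth, and the observation that the convolution bound being integrated is exactly pathwise, so no measurability or dominated-convergence argument is needed.
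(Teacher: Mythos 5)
Your proof is correct and follows exactly the same route as the paper: take expectations of the pathwise bound \eqref{YboundedA}, invoke Assumptions \ref{ass:independentNoise}\eqref{asseq:boundedENt8} and \ref{ass:pe}\eqref{asseq:boundedUt4} to bound $\limsup_{n\rightarrow\infty}\frac{1}{n}\sum_{t=1}^n\E[Y_t^4]$, and then pass to $\varphi_t$ via \eqref{eq:trivialboundvarphi}. The only difference is that you spell out the steps the paper leaves implicit (the Lyapunov/Cauchy--Schwarz reduction from eighth to fourth moments, the convexity bound for the fourth power of a sum, and the handling of the finite shifts), which is exactly what a complete write-up of this proof should contain.
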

\begin{proof}Take the expected value of both sides of \eqref{YboundedA} and use  Assumptions  \ref{ass:independentNoise}\eqref{asseq:boundedENt8} and \ref{ass:pe}\eqref{asseq:boundedUt4} to conclude that $\limsup_{n\rightarrow\infty}\frac{1}{n}\sum_{t=1}^n\E[Y_t^4]<\infty$. The Lemma is proven by using \eqref{eq:trivialboundvarphi}. 
\end{proof}

Statements \eqref{q1tozero}, \eqref{p'2tozero}, \eqref{p2''tozero}, \eqref{q3tozero} and \eqref{s2tochi} will be proven by building on the next two Lemmas (throughout, we keep using the notation $\Delta f(\theta) = f(\theta)-f(\theta^\ast)$).

\begin{lemma}
\label{Lemma:Deltagammasgotozero}
$\sup_{\theta\in\widehat{\Theta}_{n,m}}\sqrt{n}\|\Delta \Gamma_{i,n}(\theta) \|$ and $\sup_{\theta\in\widehat{\Theta}_{n,m}}\sqrt{n}\| \Delta\gamma_{i,n}(\theta)\|$ go to zero in probability as $n\rightarrow \infty$
\end{lemma}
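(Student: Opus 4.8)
The plan is to reduce both statements to a martingale argument of exactly the kind used in the proof of Lemma~\ref{Lemma:supphibarNbar}. Writing $\Delta\bar\varphi_{i,t}(\theta)\triangleq\bar\varphi_{i,t}(\theta)-\bar\varphi_{i,t}(\theta^\ast)$ and recalling \eqref{gin}--\eqref{Gin}, we have
\[
\sqrt{n}\,\Delta\gamma_{i,n}(\theta)=\frac{1}{\sqrt{n}}\sum_{t=1}^n\alpha_{i,t}\,\Delta\bar\varphi_{i,t}(\theta)\,N_t,\qquad \sqrt{n}\,\Delta\Gamma_{i,n}(\theta)=\frac{1}{\sqrt{n}}\sum_{t=1}^n\alpha_{i,t}\,\Delta\bar\varphi_{i,t}(\theta)\,\varphi_t^\T.
\]
The crucial point is that $\bar{Y}_{i,s}(\theta)$ depends only on $\alpha_{i,1},\ldots,\alpha_{i,s}$, so that $\bar\varphi_{i,t}(\theta)$ (hence $\Delta\bar\varphi_{i,t}(\theta)$) is measurable with respect to ${\cal A}_{t-1}=\sigma(\alpha_{i,1},\ldots,\alpha_{i,t-1})$, whereas $N_t$ and $\varphi_t$ are functions of the noise alone. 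Conditioning on the noise realisation, as in the proof of Lemma~\ref{Lemma:supphibarNbar}, each of the two sums above is therefore a martingale in $t$. This is what lets us beat the explicit $\sqrt{n}$ factor: a bare triangle-inequality/Cauchy--Schwarz bound only yields $O(\sqrt{n}\,\|\theta-\theta^\ast\|)=O_p(1)$ on the confidence region, whereas a martingale of quadratic variation $V$ has size of order $\sqrt{V}$, which will be smaller.

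First I would bound the quadratic variations. Conditionally on the noise and using $\alpha_{i,t}^2=1$, the predictable quadratic variation of $\frac{1}{\sqrt n}\sum_t\alpha_{i,t}\Delta\bar\varphi_{i,t}(\theta)N_t$ equals $\frac{1}{n}\sum_t N_t^2\|\Delta\bar\varphi_{i,t}(\theta)\|^2$, which by Cauchy--Schwarz is at most $\sqrt{\frac{1}{n}\sum_t N_t^4}\,\sqrt{\frac{1}{n}\sum_t\|\Delta\bar\varphi_{i,t}(\theta)\|^4}$. The first factor is finite w.p.1 by \eqref{asseq:boundedNt4}. For the second I would establish an $L^4$ counterpart of Lemma~\ref{uniformaverageconti}, namely that $\frac{1}{n}\sum_t\|\bar\varphi_{i,t}(\theta)-\bar\varphi_{i,t}(\theta')\|^4\le C\|\theta-\theta'\|^4$ w.p.1 for all $\theta,\theta'\in\Theta_c$ and a finite constant $C$; this follows by repeating the convolution-and-Young's-inequality computation of Lemma~\ref{uniformaverageconti} in the $\|\cdot\|_4$ norm and invoking Proposition~\ref{prop:supDeltaH} (together with its analogue for $\{\Delta g_t\}$), which give $\|\{\Delta h_t\}\|_1,\|\{\Delta g_t\}\|_1=O(\|\theta-\theta'\|)$. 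Hence the quadratic variation is $O(\|\theta-\theta^\ast\|^2)$ uniformly in $\theta$, so for each fixed $\theta$ one gets $\sqrt n\,\|\Delta\gamma_{i,n}(\theta)\|=O_p(\|\theta-\theta^\ast\|)$; replacing $N_t$ by $\varphi_t$ and using Lemma~\ref{lemma:usefulasympres} (3.b) for $\frac{1}{n}\sum_t\|\varphi_t\|^4$ gives the same for $\sqrt n\,\|\Delta\Gamma_{i,n}(\theta)\|$.

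The remaining, and hardest, step is to make this uniform over the random, shrinking set $\widehat\Theta_{n,m}$. Since $\Delta\bar\varphi_{i,t}(\theta^\ast)=0$, both $\sqrt n\,\Delta\gamma_{i,n}(\theta^\ast)$ and $\sqrt n\,\Delta\Gamma_{i,n}(\theta^\ast)$ vanish, so the quantities to control are moduli of continuity at $\theta^\ast$ of the martingale fields $\theta\mapsto\sqrt n\,\Delta\gamma_{i,n}(\theta)$ and $\theta\mapsto\sqrt n\,\Delta\Gamma_{i,n}(\theta)$. By Theorem~\ref{theorem-consistency}, for every $\delta>0$ the event $\{\widehat\Theta_{n,m}\subseteq B_\delta(\theta^\ast)\}$ occurs eventually w.p.1, so it is enough to show that $\sup_{\theta\in B_\delta(\theta^\ast)}\sqrt n\,\|\Delta\gamma_{i,n}(\theta)\|$ is $O_p(\delta)$ (and likewise for $\Gamma$) and then let $\delta\to0$. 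I would cover $B_\delta(\theta^\ast)$ by finitely many balls of small radius and bound the oscillation of the scaled field inside each: for fixed $\theta,\theta'$ the increment $\sqrt n(\Delta\gamma_{i,n}(\theta)-\Delta\gamma_{i,n}(\theta'))=\frac{1}{\sqrt n}\sum_t\alpha_{i,t}(\bar\varphi_{i,t}(\theta)-\bar\varphi_{i,t}(\theta'))N_t$ is again a martingale of the same kind, whose quadratic variation is controlled, via the $L^4$ bound above, by $\|\theta-\theta'\|^2$; a maximal inequality for martingales (analogous to the Doob-type bounds used for Lemma~\ref{Lemma:supphibarNbar}) then converts the smallness of this quadratic variation into a probabilistic bound on the oscillation.

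The delicate part throughout is keeping every constant uniform in $\theta$ and independent of the sign sequence $\{\alpha_{i,t}\}$ — which the ${\cal A}_{t-1}$-measurability of $\bar\varphi_{i,t}(\theta)$ and the sign-free $L^4$ convolution bounds guarantee — and, above all, converting the pointwise-in-$\theta$ martingale estimates into a genuinely uniform modulus-of-continuity bound over the shrinking neighbourhood of $\theta^\ast$. Unlike the unscaled quantity treated in Lemma~\ref{Lemma:supphibarNbar}, where equicontinuity held w.p.1, here the $\sqrt n$ scaling forces the oscillation to be handled probabilistically, through maximal inequalities and a covering argument; this interplay of stochastic equicontinuity with the localisation provided by Theorem~\ref{theorem-consistency} is the main obstacle.
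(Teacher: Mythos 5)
Your setup is sound as far as it goes: the conditioning-on-the-noise trick, the $\mathcal{A}_{t-1}$-measurability of $\bar\varphi_{i,t}(\theta)$, and the resulting martingale-difference structure are exactly the ingredients the paper uses in Lemma \ref{Lemma:supphibarNbar}, and your pointwise bound $\sqrt{n}\,\|\Delta\gamma_{i,n}(\theta)\|=O_p(\|\theta-\theta^\ast\|)$ (up to a modulus of continuity) is obtainable this way. The genuine gap is at the step you yourself flag as the main obstacle: the supremum over $\theta$. A Doob-type maximal inequality controls the supremum over \emph{time} of a single martingale; it says nothing about the supremum over the uncountable set $B_\delta(\theta^\ast)$ of a \emph{family} of martingales indexed by $\theta$. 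To convert pairwise increment bounds into a bound on $\sup_{\theta\in B_\delta(\theta^\ast)}$ you need a chaining (Kolmogorov-continuity) argument over a $d$-dimensional ball, and with only second-moment increment bounds $\E\|X(\theta)-X(\theta')\|^2\lesssim\|\theta-\theta'\|^2$ this requires the exponent on the right to exceed $d=n_a+n_b$; it therefore fails already for $d\geq 2$. One would have to produce moments of order larger than $d$ for the increments (via Burkholder/Rosenthal-type inequalities), which your proposal does not do. A secondary inaccuracy: the a.s.\ bound $\frac{1}{n}\sum_t\|\bar\varphi_{i,t}(\theta)-\bar\varphi_{i,t}(\theta')\|^4\le C\|\theta-\theta'\|^4$ with a \emph{linear} rate is not what Proposition \ref{prop:supDeltaH} delivers: uniform stability only gives $\|\{\Delta h_t\}\|_1\rightarrow 0$ as $\|\theta-\theta'\|\rightarrow 0$ (the tail $\sum_{t\geq M}\bar h_t$ is independent of $\|\theta-\theta'\|$), i.e.\ a modulus of continuity, not a Lipschitz bound; this alone would not sink the argument, but it removes the rate your chaining step would want to lean on.

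The idea your proposal is missing — and the way the paper escapes stochastic equicontinuity altogether — is to expand $\Delta\bar Y_{i,t-k}(\theta)$ through the convolution representation, so that \emph{all} dependence on $\theta$ is carried by deterministic filter coefficients: $\Delta\bar Y_{i,t-k}(\theta)=\sum_\tau \Delta h_\tau(\theta)\,\alpha_{i,t-k-\tau}N_{t-k-\tau}+\sum_\tau h_\tau(\theta)\,\alpha_{i,t-k-\tau}\varphi_{t-k-\tau}^\T(\theta^\ast-\theta)+\sum_\tau \Delta g_\tau(\theta)\,U_{t-k-\tau}$. Substituting this into $\frac{1}{\sqrt n}\sum_t\alpha_{i,t}\varphi_t^{(\ell)}\Delta\bar Y_{i,t-k}(\theta)$ produces sums of the form $\sum_\tau c_\tau(\theta)\,G_{i,\tau|n}$, where the random variables $G_{i,\tau|n}=\frac{1}{\sqrt n}\sum_t\alpha_{i,t}\varphi_t^{(\ell)}\alpha_{i,t-k-\tau}N_{t-k-\tau}$ do not depend on $\theta$ at all and have variance bounded uniformly in $n$ and $\tau$ (Lemma \ref{Lemma:sumEphit4}). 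The supremum over $\widehat\Theta_{n,m}$ then acts only on the deterministic weights: it is absorbed into the factor $H\bigl(\sup_{\theta\in\widehat\Theta_{n,m}}\|\theta^\ast-\theta\|\bigr)$ (or $\sup_\theta\|\theta^\ast-\theta\|^2\sum_\nu\sup_\theta|h_\nu(\theta)|$ for the middle term), which vanishes a.s.\ by the Strong Consistency Theorem \ref{theorem-consistency}, while the remaining random factor $\sum_\tau c_\tau|G_{i,\tau|n}|$, with normalized weights $c_\tau\le 1$, $\sum_\tau c_\tau\le 1$, has uniformly bounded second moment by Proposition \ref{prop:sumofrv}; Chebyshev's inequality then gives convergence in probability. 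No maximal inequality over $\theta$, and no covering argument, is ever needed — this deterministic-coefficient factorization is precisely the device your proposal lacks.
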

\begin{proof}
Let us consider \begin{equation}
\label{sqrtntobebounded}
\sup_{\theta\in\widehat{\Theta}_{n,m}}\sqrt{n}\| \Delta\Gamma_{i,n}(\theta) \| = \sup_{\theta\in\widehat{\Theta}_{n,m}}\left\|  \frac{1}{\sqrt{n}}\sum_{t=1}^{n}\alpha_{i,t}\varphi_t \Delta\bar{\varphi}_{i,t}(\theta)^\T   \right\|,
\end{equation}
as $\sup_{\theta\in\widehat{\Theta}_{n,m}}\sqrt{n}\| \Delta\gamma_{i,n}(\theta)\|$ can be dealt with in the same line. We show that each element of  $ | \frac{1}{\sqrt{n}}\sum_{t=1}^{n}\alpha_{i,t}\varphi_t \Delta\bar{\varphi}_{i,t}(\theta)^\T   |$ (here, $|\cdot|$ is the entry-wise absolute value) goes to zero in probability uniformly over $\theta\in\widehat{\Theta}_{n,m}$.
The components of $\Delta\bar{\varphi}_{i,t}(
\theta)$ are either zero (the $n_b$ exogenous components corresponding to past $U_t$ values) or $\Delta\bar{Y}_{i,t-k}(\theta)$, $1 \leq k\leq n_a$. So, a non-zero entry of the matrix $| \frac{1}{\sqrt{n}}\sum_{t=1}^{n}\alpha_{i,t}\varphi_t \Delta\bar{\varphi}_{i,t}(\theta)^\T   |$, say entry $(\ell,k)$, is of the kind 
$\left|\frac{1}{\sqrt{n}} \sum_{t=1}^n \alpha_{i,t}\varphi_t^{(\ell)}\Delta\bar{Y}_{i,t-k}(\theta)\right|$
and, overall,  \eqref{sqrtntobebounded} can be bounded by a finite sum of terms like
\begin{equation}
\label{23987uwek}
\sup_{\theta\in\widehat{\Theta}_{n,m}}\left|\frac{1}{\sqrt{n}} \sum_{t=1}^n \alpha_{i,t}\varphi_t^{(\ell)}\Delta\bar{Y}_{i,t-k}(\theta)\right|,
\end{equation}
which, as we shall see in what follows, go to zero in probability.

For any fixed pair $(\ell,k)$, write
$\bar{Y}_{i,t-k}(\theta)=\sum_{\tau=0}^\infty h_\tau(\theta) \alpha_{i,t-k-\tau}\hat{N}_{t-k-\tau}(\theta)+\sum_{\tau=1}^\infty \allowbreak g_\tau(\theta) U_{t-k-\tau}$. For $n\geq t-k$, it holds that $\bar{Y}_{i,t-k}(\theta)=\sum_{\tau=0}^n h_\tau(\theta) \alpha_{i,t-k-\tau}\hat{N}_{t-k-\tau}(\theta)+\sum_{\tau=1}^n g_\tau(\theta) U_{t-k-\tau}=\sum_{\tau=0}^n h_\tau(\theta)\alpha_{i,t-k-\tau} N_{t-k-\tau}+ \sum_{\tau=0}^n h_\tau(\theta)\alpha_{i,t-k-\tau} \varphi_{t-k-\tau}^\T  (\theta^\ast-\theta)+\sum_{\tau=1}^n g_\tau(\theta) U_{t-k-\tau}$.
Similarly, we get $\bar{Y}_{i,t-k}(\theta^\ast)= \sum_{\tau=0}^n h_\tau(\theta^\ast)\alpha_{i,t-k-\tau} N_{t-k-\tau}+ \sum_{\tau=1}^n g_\tau(\theta^\ast) U_{t-k-\tau}$,
thus $\Delta\bar{Y}_{i,t-k}(\theta)=\sum_{\tau=0}^n \Delta h_\tau(\theta)\alpha_{i,t-k-\tau} N_{t-k-\tau}+ \sum_{\tau=0}^n  (h_\tau(\theta) \allowbreak \alpha_{i,t-k-\tau} \varphi_{t-k-\tau}^\T  (\theta^\ast-\theta))+\sum_{\tau=1}^n \Delta g_\tau(\theta) U_{t-k-\tau}$. 

Finally,
\begin{eqnarray}
\lefteqn{\frac{1}{\sqrt{n}} \sum_{t=1}^n \alpha_{i,t}\varphi_t^{(\ell)}\Delta\bar{Y}_{i,t-k}(\theta)}\nonumber\\
&=&\frac{1}{\sqrt{n}} \sum_{t=1}^n \alpha_{i,t}\varphi_t^{(\ell)}\sum_{\tau=0}^n \Delta h_\tau(\theta)\alpha_{i,t-k-\tau} N_{t-k-\tau} \nonumber \\
 &&+ \frac{1}{\sqrt{n}} \sum_{t=1}^n \alpha_{i,t}\varphi_t^{(\ell)} \sum_{\tau=0}^n  h_\tau(\theta)\alpha_{i,t-k-\tau} \varphi_{t-k-\tau}^\T  (\theta^\ast-\theta)  \nonumber\\
&&+ \frac{1}{\sqrt{n}} \sum_{t=1}^n \alpha_{i,t}\varphi_t^{(\ell)}\sum_{\tau=1}^n \Delta g_\tau(\theta) U_{t-k-\tau}.
\label{45kljr}
\end{eqnarray}
Thus, \eqref{23987uwek} can be bounded by taking the sum of the sup of the absolute value of the three terms in \eqref{45kljr}. As the first and the third term can be dealt with similarly, we focus only on the first one; the second one will be briefly considered later.

Define $$G_{i,\tau|n}=\frac{1}{\sqrt{n}}\sum_{t=1}^n \alpha_{i,t}\varphi_t^{(\ell)} \alpha_{i,t-k-\tau} N_{t-k-\tau},$$
which is a random variable with 0 mean and variance \begin{eqnarray*}
\E[G_{i,\tau|n}^2]&=&\frac{1}{n}\sum_{t=1}^n\E[(\varphi_t^{(\ell)})^2 N_{t-k-\tau}^2]\leq \frac{1}{n}\sum_{t=1}^n\sqrt{\E[\|\varphi_t\|^4] \E[N_{t-k-\tau}^4]} \\&\leq& \sqrt{\frac{1}{n}\sum_{t=1}^n\E[\|\varphi_t\|^4]} \sqrt{\frac{1}{n}\sum_{t=1}^n \E[N_{t-k-\tau}^4]}.\end{eqnarray*} By Assumption \ref{ass:independentNoise}\eqref{asseq:boundedENt8} and Lemma \ref{Lemma:sumEphit4}, there is a number  $V>0$ such that $\E[G_{i,\tau|n}^2]\leq V<\infty$ for all $n$ and $\tau$.
We have that
\begin{align}
\label{34uhgr}
\sup_{\theta\in\widehat{\Theta}_{n,m}}&\left|\frac{1}{\sqrt{n}} \sum_{t=1}^n \alpha_{i,t}\varphi_t^{(\ell)}\sum_{\tau=0}^n \Delta h_\tau(\theta)\alpha_{i,t-k-\tau} N_{t-k-\tau}\right| = \sup_{\theta\in\widehat{\Theta}_{n,m}}\left|\sum_{\tau=0}^n \Delta h_\tau(\theta)  G_{i,\tau|n}\right|\nonumber\\
&\leq \sum_{\tau=0}^n \sup_{\theta\in\widehat{\Theta}_{n,m}} |\Delta h_\tau(\theta)| | G_{i,\tau|n}|.
\end{align}

For all $\delta \geq 0$ define $H(\delta)$ as follows:
\begin{equation}
H(\delta)\triangleq\sum_{\tau=0}^\infty\sup_{\theta\in B_\delta(\theta^\ast)\cap \Theta_c}|\Delta h_\tau(\theta)|.
\label{def:H(delta)}
\end{equation} For all $\delta\geq 0$, $H(\delta)$ is a finite number by Assumption \ref{ass:us}. Moreover, $H(\delta)$  goes to zero as $\delta\rightarrow 0$ (Proposition \ref{prop:supDeltaH} in the proof of Lemma \ref{uniformaverageconti}). The right-hand side of \eqref{34uhgr} is either zero or can be rewritten as
\begin{equation}
\label{Fsqb}
H\left(\sup_{\theta\in\widehat{\Theta}_{n,m}}\|\theta^\ast-\theta\|\right) \left[\sum_{\tau=0}^n \frac{\sup_{\theta\in\widehat{\Theta}_{n,m}} |\Delta h_\tau(\theta)|}{H(\sup_{\theta\in\widehat{\Theta}_{n,m}}\|\theta^\ast-\theta\|)} | G_{i,\tau|n}|\right].
\end{equation}
We denote by $B_n$ the term in squared brackets. $B_n$ is in the form $\sum_{\tau=1}^n c_\tau X_\tau$, with $c_\tau=\frac{\sup_{\theta\in\widehat{\Theta}_{n,m}} |\Delta h_\tau(\theta)|}{H(\sup_{\theta\in\widehat{\Theta}_{n,m}}\|\theta^\ast-\theta\|)}$ ($c_\tau\leq 1$ by definition of $H(\cdot)$, \eqref{def:H(delta)}) and $X_\tau=|G_{i,\tau|n}|$. Then, we have that $\E[B_n^2]\leq V$ in view of the following proposition (which follows easily from the Jensen's inequality).
\begin{proposition}
\label{prop:sumofrv}
Let $X_1,\ldots,X_n$ be random variables with $\E[X_j^2]\leq C<\infty$, $j=1,\ldots,n$, and let $c_1,\ldots,c_n$ be non-negative numbers $0\leq c_j \leq 1$ with $\sum_{j=1}^n c_j\leq 1$. Then, $\E[(\sum_{j=1}^n c_j X_j )^2]\leq C$.
\end{proposition}

Moreover, $H(\sup_{\theta\in\widehat{\Theta}_{n,m}}\|\theta^\ast-\theta\|)\stackrel{w.p.1}{\rightarrow}0$, by Theorem \ref{theorem-consistency}. Now we prove that \eqref{Fsqb} goes to zero in probability, that is, for every $\epsilon,\delta>0$, the probability that \eqref{Fsqb} exceeds $\epsilon$ can be made smaller than $\delta$ for any $n$ large enough. In fact, for every $n$ large enough, $H(\sup_{\theta\in\widehat{\Theta}_{n,m}}\|\theta^\ast-\theta\|)$ can be made smaller than any positive constant, say $\lambda=\epsilon\sqrt{\frac{\delta}{2V}}$,  on an event of probability $1-\frac{\delta}{2}$ (Theorem \ref{w.p.1charact} in Appendix \ref{usefulres}). Then, the probability that \eqref{Fsqb} exceeds $\epsilon$ is bounded by $\prob\{H(\sup_{\theta\in\widehat{\Theta}_{n,m}}\|\theta^\ast-\theta\|) >\lambda \} + \prob\{\lambda B_n>\epsilon  \}\leq \frac{\delta}{2} + \frac{\lambda^2\E[B_n^2]}{\epsilon }=\delta$ (where we used Chebyshev's inequality). Therefore, \eqref{Fsqb} converges to zero in probability.\\
Consider now the second term in \eqref{45kljr}, i.e.,
\begin{equation}
\label{23523r}
\frac{1}{\sqrt{n}} \sum_{t=1}^n \alpha_{i,t}\varphi_t^{(\ell)} \sum_{\tau=0}^n  h_\tau(\theta)\alpha_{i,t-k-\tau} \varphi_{t-k-\tau}^\T  (\theta^\ast-\theta).
\end{equation}

 Rewrite the scalar product $\varphi_{t-k-\tau}^\T  (\theta^{\ast}-\theta)$ as $\sum_{j=1}^{n_a+n_b} \varphi_{t-k-\tau}^{(j)}({\theta^{\ast}}^{(j)} - {\theta}^{(j)})$, and the sup of the absolute value of \eqref{23523r} as

\begin{equation}\label{nkl43tijlk}\sup_{\theta\in\widehat{\Theta}_{n,m}}\left|\sum_{\tau=0}^n  h_\tau(\theta) \sum_{j=1}^{n_a+n_b} ({\theta^{\ast}}^{(j)} - {\theta}^{(j)}) \frac{1}{\sqrt{n}} \sum_{t=1}^n \alpha_{i,t}\varphi_t^{(\ell)} \alpha_{i,t-k-\tau} \varphi_{t-k-\tau}^{(j)}\right|.
\end{equation}
Defining $$G'_{i,j,\tau|n}=  \frac{1}{\sqrt{n}} \sum_{t=1}^n \alpha_{i,t}\varphi_t^{(\ell)} \alpha_{i,t-k-\tau} \varphi_{t-k-\tau}^{(j)},$$
which, similarly as before, is a random variable with variance bounded by a certain $0<V'<\infty$ for all $j=1,\ldots,n_a+n_b$, \eqref{nkl43tijlk} can be upper-bounded by
\begin{equation*}
 \sup_{\theta\in\widehat{\Theta}_{n,m}} \|\theta^\ast -\theta\|^2 \cdot  \sum_{\tau=0}^n \sup_{\theta\in\widehat{\Theta}_{n,m}} |h_\tau(\theta)|   \sum_{j=1}^{n_a+n_b} |G'_{i,j,\tau|n} |
\end{equation*}
\begin{equation}
\label{93pij3}
\leq    (\sup_{\theta\in\widehat{\Theta}_{n,m}} \|\theta^\ast -\theta\|^2 \cdot \sum_{\nu=0}^\infty \sup_{\theta\in\widehat{\Theta}_{n,m}} |h_\nu(\theta)|) \cdot\left[ \sum_{\tau=0}^n \frac{\sup_{\theta\in\widehat{\Theta}_{n,m}} |h_\tau(\theta)|}{\sum_{\nu=0}^\infty \sup_{\theta\in\widehat{\Theta}_{n,m}} |h_\nu(\theta)|}  \sum_{j=1}^{n_a+n_b} |G'_{i,j,\tau|n} | \right]
\end{equation} where the term in the square brackets has second moment bounded by  $(n_a+n_b)^2 \cdot V'$, for every $n$ and, as before, is multiplied by a term that goes to zero w.p.1 as $n\rightarrow \infty$, so that overall \eqref{93pij3} goes to zero in probability.

In what follows, we denote by $[v_j]_{j=1}^k$ the vector obtained by stacking $k$ vectors $v_1,\ldots,v_k$ together; $I_k$ is the identity matrix of size $k$ and $\otimes$ the Kronecker product.
\end{proof}
\begin{lemma}
\label{Lemma:asymptoticallynormal}
$$ [\sqrt{n}\gamma_{i,n}(\theta^\ast)]_{i=1}^{m-1} \overset{d}{\rightarrow}G(0,\sigma^2 I_{m-1} \otimes{\bar{R}_\ast} ).$$
\end{lemma}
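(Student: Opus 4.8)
The plan is to recognise $\sqrt{n}\,\gamma_{i,n}(\theta^\ast)=\frac{1}{\sqrt n}\sum_{t=1}^n\alpha_{i,t}\bar\varphi_{i,t}(\theta^\ast)N_t$ (cf.\ \eqref{gin}) as the normalised terminal value of a vector-valued martingale and to apply a martingale central limit theorem. I would introduce the filtration $\mathcal F_t\triangleq\sigma\big(\{N_s,\alpha_{1,s},\ldots,\alpha_{m-1,s}\}_{s\le t}\big)$. Recalling that at $\theta^\ast$ we have $\hat N_t(\theta^\ast)=N_t$, the perturbed regressor $\bar\varphi_{i,t}(\theta^\ast)$ depends only on $\{\alpha_{i,s}N_s\}_{s\le t-1}$ and on the deterministic inputs, hence it is $\mathcal F_{t-1}$-measurable (via the stable convolution representation used throughout Lemma~\ref{lemma:usefulasympres}), while $\alpha_{i,t}N_t$ is independent of $\mathcal F_{t-1}$ with $\E[\alpha_{i,t}N_t]=0$. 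Consequently each summand $\xi_t\triangleq[\alpha_{i,t}\bar\varphi_{i,t}(\theta^\ast)N_t]_{i=1}^{m-1}$ satisfies $\E[\xi_t\mid\mathcal F_{t-1}]=0$, so $\{\xi_t\}$ is a martingale difference array in $\Real{(m-1)d}$.

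Rather than invoking a multivariate CLT directly, I would use the Cram\'er--Wold device: fix an arbitrary $c=[c_i]_{i=1}^{m-1}$ with $c_i\in\Real d$ and study the scalar martingale differences $\zeta_t\triangleq N_t\sum_{i=1}^{m-1}c_i^\T\alpha_{i,t}\bar\varphi_{i,t}(\theta^\ast)$. The key computation is the conditional variance. Since the $\alpha_{i,t}$ are mutually independent random signs, independent of $N_t$ and of $\mathcal F_{t-1}$, and $\E[N_t^2]=\sigma^2$ (Assumption~\ref{ass:zmiidnoise}), the cross terms with $i\neq j$ vanish and one gets $\E[\zeta_t^2\mid\mathcal F_{t-1}]=\sigma^2\sum_{i}c_i^\T\bar\varphi_{i,t}(\theta^\ast)\bar\varphi_{i,t}(\theta^\ast)^\T c_i$. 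Summing and normalising yields $\frac1n\sum_{t=1}^n\E[\zeta_t^2\mid\mathcal F_{t-1}]=\sigma^2\sum_i c_i^\T R_{i,n}(\theta^\ast)c_i$, which by Lemma~\ref{LemmaGoodR} (with $\bar R(\theta^\ast)={\bar{R}_\ast}$) converges w.p.1 to the deterministic limit $\sigma^2\sum_i c_i^\T{\bar{R}_\ast} c_i=c^\T(\sigma^2 I_{m-1}\otimes{\bar{R}_\ast})c$. The vanishing of the off-diagonal blocks here is precisely the mechanism by which the independent sign sequences produce the block-diagonal covariance in the statement.

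It remains to verify a negligibility (Lindeberg) condition, for which I would establish the conditional Lyapunov bound $\frac1{n^2}\sum_{t=1}^n\E[\zeta_t^4\mid\mathcal F_{t-1}]\to0$. Using $|\alpha_{i,t}|=1$, Cauchy--Schwarz, and a power-mean inequality, $\zeta_t^4\le (m-1)^3 N_t^4\sum_i\|c_i\|^4\|\bar\varphi_{i,t}(\theta^\ast)\|^4$; taking conditional expectations and invoking the finiteness of $\E[N_t^4]$ (Assumption~\ref{ass:independentNoise}) together with the fourth-moment bound $\limsup_n\frac1n\sum_t\|\bar\varphi_{i,t}(\theta^\ast)\|^4<\infty$ from Lemma~\ref{lemma:usefulasympres}(3.d) shows this quantity is $O(1/n)$ w.p.1. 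A scalar martingale central limit theorem then gives $\frac1{\sqrt n}\sum_t\zeta_t\overset{d}{\rightarrow}\mathcal N\!\big(0,\sigma^2\sum_i c_i^\T{\bar{R}_\ast} c_i\big)$, and the Cram\'er--Wold device upgrades this to the claimed joint convergence.

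The main obstacle I anticipate is bookkeeping rather than conceptual: making the measurability claim about $\bar\varphi_{i,t}(\theta^\ast)$ fully rigorous, and checking that the summed conditional second moments converge to a \emph{deterministic} limit (here the persistent-excitation Assumption~\ref{ass:pe} guarantees ${\bar{R}_\ast}$ is deterministic and positive definite), which is what licenses the martingale CLT. The block-diagonal structure $\sigma^2 I_{m-1}\otimes{\bar{R}_\ast}$ is the only genuinely ARX-specific subtlety, and it is resolved entirely by the independence of the $m-1$ perturbation sequences, exactly as emphasised in the discussion following Theorem~\ref{theorem-asymptotic-shape}.
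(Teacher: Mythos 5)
Your proposal is correct and follows essentially the same route as the paper's proof: a Cram\'er--Wold reduction, the same filtration and martingale-difference structure, vanishing cross terms from the mutually independent sign sequences (yielding the block-diagonal covariance), convergence of the summed conditional variances via Lemma \ref{LemmaGoodR}, a Lyapunov fourth-moment bound implying the Lindeberg condition, and a martingale CLT. The only immaterial difference is that the paper verifies the \emph{unconditional} fourth-moment condition $\sum_{k}\E[\xi_{n,k}^4]\rightarrow 0$ using the expectation bound of Lemma \ref{Lemma:sumEphit4}, whereas you verify a \emph{conditional} Lyapunov bound via the realisation-wise Lemma \ref{lemma:usefulasympres}(3.d), which then requires a conditional-Lindeberg form of the martingale CLT (covered by the paper's reference to \cite{Shiryaev1997}) rather than Theorem \ref{Theorem:cltMartingales} exactly as stated.
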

\begin{proof}
By the Cramer-Wold Theorem (Theorem \ref{CWTh} in Appendix \ref{usefulres}), the Lemma statement follows if, for every $(m-1)(n_a+n_b)$-dimensional vector $a$, it holds true that $a^\T \cdot \sqrt{n}[\gamma_{i,n}(\theta^\ast)]_{i=1}^{m-1} \overset{d}{\rightarrow} G(0,\sigma^2 a^\T  (I_{m-1} \otimes{\bar{R}_\ast}) a )$. For the sake of simplicity, we set $m=3$, the extension to any $m$ being straightforward.

Define, for $k\leq n$,\vspace{-1mm} $$\xi_{n,k}=a^\T  \cdot \left[ \begin{array}{c}
\frac{1}{\sqrt{n}} \alpha_{1,k}\bar\varphi_{1,k}(\theta^\ast)N_k \\
\frac{1}{\sqrt{n}} \alpha_{2,k}\bar\varphi_{2,k}(\theta^\ast)N_k \end{array}\right]\vspace{1mm}$$
and $\xi_{n,k}=0$, for $k>n$.
With this notation, $\sum_{k=1}^\infty\xi_{n,k}=a^\T  \cdot[\sqrt{n}\gamma_{i,n}(\theta^\ast)]_{i=1}^2$. Defining ${\cal F}_{k-1}$ as the $\sigma$-algebra generated by $\{N_t\}$ and $\{\alpha_{i,t}\}$  until time $t=k-1$, it is easy to check that $\E[|\xi_{n,k}|]<\infty$ and $\E[\xi_{n,k}|{\cal F}_{k-1}]=0$ for every $n$ and $k$, that is, $\xi_{n,1},\xi_{n,2},\ldots$ is a martingale difference for every $n$.
Consider the conditional variance of $\xi_{n,k}$ defined as
$$\sigma_{n,k}^2=\E[\xi^2_{n,k}|{\cal F}_{k-1}],$$
and write $a^\T =[a_1^\T  \; a_2^\T ]$, $a_1,a_2\in\Real{n_a+n_b}$.
 We get
$$\sum_{k=1}^\infty \sigma_{n,k}^2=\frac{1}{n} \sum_{k=1}^n  \E[ (a_1^\T  \bar{\varphi}_{1,k}(\theta^\ast)\alpha_{1,k}N_k+a_2^\T  \bar{\varphi}_{2,k}(\theta^\ast)\alpha_{2,k}  N_k )^2 |{\cal F}_{k-1}]$$
$$=\hspace{-2pt}  a^\T  \frac{1}{n} \sum_{k=1}^n \E[ N_k^2 |{\cal F}_{k-1}] \bar{\varphi}_{1,k}(\theta^\ast) \bar{\varphi}_{1,k}(\theta^\ast)^\T   a_1 +  a_2^\T   \frac{1}{n} \sum_{k=1}^n \E[ N_k^2 |{\cal F}_{k-1}] \bar{\varphi}_{2,k}(\theta^\ast) \bar{\varphi}_{2,k}(\theta^\ast)^\T   a_2.$$
$\E[ N_k^2 |{\cal F}_{k-1}]=\sigma^2$ (by Assumption \ref{ass:zmiidnoise}), and,
by taking the limit w.r.t. $n$, we get (by Lemma \ref{LemmaGoodR})
$$\lim_{n\rightarrow\infty }\sum_{k=1}^\infty \sigma_{n k}^2=  \sigma^2 (a_1^\T  {\bar{R}_\ast}  a_1 +  a_2^\T   {\bar{R}_\ast} a_2) \mbox{   w.p.1}$$
Note also that $\xi_{n,1},\xi_{n,2},\ldots$ have second moments, in fact we have
$\E[\xi_{n,k}^2]=\E[\sigma_{n,k}^2]=\frac{1}{n}\sigma^2 (a_1^\T    \E[  \bar{\varphi}_{1,k}(\theta^\ast) \bar{\varphi}_{1,k}(\theta^\ast)^\T ] a_1 + a_2^\T  \E[ \bar{\varphi}_{2,k}(\theta^\ast) \bar{\varphi}_{2,k}(\theta^\ast)^\T ] a_2)\leq\frac{1}{n} \sigma^2 (\|a_1 \|^2+\|a_2\|^2)\E[\allowbreak \|\varphi_{k}(\theta^\ast)\|^2] <\infty$ (recall that $\E[\|\bar{\varphi}_{i,k}(\theta^\ast)\|^2]=\E[\|\varphi_k\|]^2$). 
We now prove that
\begin{equation}
\label{condition4xi}
\lim_{n\rightarrow \infty} \sum_{k=1}^\infty \E[\xi_{n,k}^4]=0.
\end{equation}
\allowdisplaybreaks
\begin{align*}
\MoveEqLeft  \sum_{k=1}^\infty \E[\xi_{n,k}^4]\\
&=\sum_{k=1}^n \E[\E[\xi_{n,k}^4|{\cal F}_{k-1}]]\\ 
&=\frac{1}{n^2}\sum_{k=1}^n \E\left[  \E[N_k^4|{\cal F}_{k-1}] \cdot  (a_1^\T  \bar{\varphi}_{1,k}(\theta^\ast))^4 \right.\\
&+\,4\, \E[N_k^4 \alpha_{1,k}\alpha_{2,k}|{\cal F}_{k-1}] \cdot  (a_1^\T \bar{\varphi}_{1,k}(\theta^\ast))^3  (a_2^\T \bar{\varphi}_{2,k}(\theta^\ast)\\
&+\,6\,\E[N_k^4|{\cal F}_{k-1}] \cdot (a_1^\T \bar{\varphi}_{1,k}(\theta^\ast))^2(a_2^\T \bar{\varphi}_{2,k}(\theta^\ast))^2\\
&+\,4\,\E[N_k^4 \alpha_{1,k}\alpha_{2,k}|{\cal F}_{k-1}](a_1^\T \bar{\varphi}_{1,k}(\theta^\ast))(a_2^\T \bar{\varphi}_{2,k}(\theta^\ast))^3\\
&\left. +\, \E[N_k^4|{\cal F}_{k-1}] \cdot (a_2^\T \bar{\varphi}_{2,k}(\theta^\ast))^4  \right]\\
&= \E[N_1^4] \frac{1}{n^2}\sum_{k=1}^n \E\left[   (a_1^\T  \bar{\varphi}_{1,k}(\theta^\ast))^4  +6(a_1^\T \bar{\varphi}_{1,k}(\theta^\ast))^2(a_2^\T \bar{\varphi}_{2,k}(\theta^\ast))^2+(a_2^\T \bar{\varphi}_{2,k}(\theta^\ast))^4  )\right]\\
&\leq \frac{1}{n}\E[N_1^4]\cdot\Bigg( \|a_1\|^4  \frac{1}{n}\sum_{k=1}^n \E[\|\bar{\varphi}_{1,k}(\theta^\ast)\|^4]\\
&+\, 6\, \|a_1\|^2 \|a_2\|^2  \frac{1}{n}\sum_{k=1}^n \E[\|\bar{\varphi}_{1,k}(\theta^\ast)\|^2 \|\bar{\varphi}_{2,k}(\theta^\ast)\|^2]+  \|a_2\|^4 \frac{1}{n}\sum_{k=1}^n \E[\|\bar{\varphi}_{2,k}(\theta^\ast)\|^4] \Bigg),
\end{align*}

(where we used the i.i.d. Assumption \ref{ass:zmiidnoise} on $\{N_t\}$, and $\E[N_t^4]<\infty$), which is bounded by Lemma \ref{Lemma:sumEphit4} (it is useful to note that $\frac{1}{n}\sum_{k=1}^n \E[\|\bar{\varphi}_{1,k}(\theta^\ast)\|^2 \|\bar{\varphi}_{2,k}(\theta^\ast)\|^2]\leq \sqrt{\frac{1}{n}\sum_{k=1}^n \E[\|\bar{\varphi}_{1,k}(\theta^\ast)\|^4]} \cdot \allowbreak \sqrt{\frac{1}{n}\sum_{k=1}^n \E[\|\bar{\varphi}_{2,k}(\theta^\ast)\|^4]}= \frac{1}{n}\sum_{k=1}^n \E[\|{\varphi}_k\|^4] $). Thus, there is a constant $C$ such that\vspace{-1mm}
$$\sum_{k=1}^\infty \E[\xi_{n,k}^4]\leq \frac{C}{n},$$
for all $n$. Letting $n$ tend to infinity yields the sought result.

For every $\epsilon\geq 0$, $\E[\xi_{n,k}^4]\geq\E[\xi_{n,k}^2\cdot \xi_{n,k}^2\cdot \indic{|\xi_{n,k}|>\epsilon}]\geq \epsilon^2 \E[\xi_{n,k}^2\cdot \indic{|\xi_{n,k}|>\epsilon}]$. Hence, \eqref{condition4xi} implies $\lim_{n\rightarrow\infty}\sum_{k=1}^\infty \E[\xi_{n,k}^2\indic{|\xi_{n,k}|>\epsilon}  ]=0$ and Theorem \ref{Theorem:cltMartingales} in Appendix \ref{usefulres} can be applied to conclude that $\sum_{k=1}^n \xi_{n,k}\overset{d}{\rightarrow}G(0, \sigma^2 (a_1^\T  {\bar{R}_\ast}  a_1 +  a_2^\T   {\bar{R}_\ast} a_2))$.
\end{proof}

We are now ready to prove Lemma \ref{Lemma:AsShStuffgoToZero}, starting with \eqref{p'2tozero},\eqref{p2''tozero}.
\paragraph{Proof of \eqref{p'2tozero},\eqref{p2''tozero}}
In view of Lemma \ref{LemmaGoodR}, w.p.1 $R_{i,n}(\theta)$ is invertible for $n$ large enough, so we can write $R^{-1}_{i,n}(\theta) =R^{-1}_{i,n}(\theta^\ast)+\Delta R^{-1}_{i,n}(\theta)$. 
Writing $R^{-1}_{i,n}(\theta) - {\bar{R}_\ast}^{-1}=R^{-1}_{i,n}(\theta^\ast)  - {\bar{R}_\ast}^{-1} + \Delta R^{-1}_{i,n}(\theta^\ast)$, we get
$$\sup_{\theta\in\widehat{\Theta}_{n,m}}\|R^{-1}_{i,n}(\theta) - {\bar{R}_\ast}^{-1}\|\leq \|R^{-1}_{i,n}(\theta^\ast)-{\bar{R}_\ast}^{-1}\|+ \sup_{\theta\in\widehat{\Theta}_{n,m}}\|\Delta R_{i,n}^{-1}(\theta)\|,$$

which goes to zero w.p.1 by Lemma \ref{LemmaGoodR}, the Strong Consistency Theorem (Theorem \ref{theorem-consistency}) and the continuity of the inverse operator. Hence, \begin{equation}
\label{rinvconv}
\sup_{\theta\in\widehat{\Theta}_{n,m}}\|R^{-1}_{i,n}(\theta)\|\stackrel{w.p.1}{\rightarrow}\|{\bar{R}_\ast}^{-1}\|.
\end{equation}
$\sup_{\theta\in\widehat{\Theta}_{n,m}} p'_{2,i}(\theta)\leq A_n \cdot B_n$, where $A_n= 2\sup_{\theta\in\widehat{\Theta}_{n,m}}\|\sqrt{n}\Delta\gamma_{i,n}(\theta)\|$ and $B_n=\|\sqrt{n}\gamma_{i,n}(\theta^\ast)\|\cdot  \sup_{\theta\in\widehat{\Theta}_{n,m}}\|R^{-1}_{i,n}(\theta)\|$.  $A_n$ goes to zero in probability, by Lemma \ref{Lemma:Deltagammasgotozero}. $B_n$ is the product of a term that converges in distribution to the norm of a bounded-variance, normally distributed vector by Lemma \ref{Lemma:asymptoticallynormal} and a term that converges to $\|{\bar{R}_\ast}^{-1}\|$, \eqref{rinvconv}. So, by Slutsky's Theorem (Theorem \ref{SlutskyTh} in Appendix \ref{usefulres}), the distribution of $B_n$ converges weakly to the distribution of the norm of a bounded variance, normally distributed vector. By another application of Slutsky's Theorem, $A_n \cdot B_n$ goes to zero in distribution and therefore in probability.
Similarly, $\sup_{\theta\in\widehat{\Theta}_{n,m}} p''_{2,i}(\theta)\leq (\sup_{\theta\in\widehat{\Theta}_{n,m}}\|\sqrt{n}\Delta\gamma_{i,n}(\theta)\|)^2\cdot\sup_{\theta\in\widehat{\Theta}_{n,m}}\|R^{-1}_{i,n}(\theta)\| $, where the first bounding factor goes to zero in probability (Lemma \ref{Lemma:Deltagammasgotozero}), and  \eqref{p2''tozero} follows from  \eqref{rinvconv} and Slutsky's Theorem.

 \paragraph{Proof of \eqref{q1tozero}, \eqref{q3tozero}}

We show \eqref{q1tozero}, i.e., that for all $i=1,\ldots,m-1$, $$\prob\{\sup_{\theta\in\widehat{\Theta}_{n,m}}\|\theta^\ast-\theta\|^2\|q_{1,i}(\theta)\|>\epsilon\}\rightarrow 0,\;n\rightarrow\infty.$$
\eqref{q3tozero} can be proven similarly.

Define $\beta_n=\sup_{\theta\in\widehat{\Theta}_{n,m}}\|\theta^\ast-\theta\|^2$. For $i=1,\ldots,m-1$, it holds that
\begin{align*}&\sup_{\theta\in\widehat{\Theta}_{n,m}}\|\theta^\ast-\theta\|^2\|q_{1,i}(\theta)\|\\ &\leq (\beta_n^{1/3}  \sup_{\theta\in\widehat{\Theta}_{n,m}}\|\sqrt{n}\Gamma_{i,n}(\theta)\|)  \cdot (\beta_n^{1/3} \sup_{\theta\in\widehat{\Theta}_{n,m}}\|R^{-1}_{i,n}(\theta)\|) \cdot  (\beta_n^{1/3} \sup_{\theta\in\widehat{\Theta}_{n,m}}\| \sqrt{n} \Gamma_{i,n}(\theta)\|).\end{align*}

Write  $\sup_{\theta\in\Theta_{n,m}}\|\sqrt{n}\Gamma_{i,n}(\theta)\|=\sup_{\theta\in\Theta_{n,m}}\|\sqrt{n}\Gamma_{i,n}(\theta^\ast) + \sqrt{n}\Delta\Gamma_{i,n}(\theta)\|\leq\|\sqrt{n}\allowbreak\Gamma_{i,n}(\theta^\ast)\|+\sup_{\theta\in\Theta_{n,m}}\|\sqrt{n}\Delta\Gamma_{i,n}(\theta)\|$, where $\sup_{\theta\in\Theta_{n,m}}\|\sqrt{n}\Delta\Gamma_{i,n}(\theta)\|$ goes to zero in probability (Lemma \ref{Lemma:Deltagammasgotozero}), and so does the product $\beta_n^\frac{1}{3} \sup_{\theta\in\Theta_{n,m}}\|\sqrt{n}\Delta\Gamma_{i,n}(\theta)\|$.
Consider $\|\sqrt{n}\Gamma_{i,n}(\theta^\ast)\|$. By Chebyshev's inequality, for every $K>0$,
$$\prob\{ \|\sqrt{n}\Gamma_{i,n}(\theta^\ast) \|>K\}\leq \frac{1}{K^2}\E[\|\sqrt{n}\Gamma_{i,n}(\theta^\ast) \|^2], $$
and 
\begin{align*}\E[\|\sqrt{n}\Gamma_{i,n}(\theta^\ast) \|^2]&=\frac{1}{n}\E[\|\sum_{t=1}^n \alpha_{i,t}\bar{\varphi}_{i,t}(\theta^\ast)\varphi_t^\T \|^2]\\
&=\frac{1}{n}\E[  \sum_{t=1}^n\sum_{k=1}^n \mbox{tr}( \alpha_{i,t}\alpha_{i,k}\bar{\varphi}_{i,t}(\theta^\ast)\varphi_t^\T \varphi_k\bar{\varphi}_{i,k}(\theta^\ast)^\T    )   ]\\&= \frac{1}{n}  \sum_{t=1}^n\mbox{tr}( \E[\bar{\varphi}_{i,t}(\theta^\ast)\varphi_t^\T \varphi_t\bar{\varphi}_{i,t}(\theta^\ast)^\T  ] )= \frac{1}{n}  \sum_{t=1}^n  \E[\|\bar{\varphi}_{i,t}(\theta^\ast)\|^2\|\varphi_t\|^2   ]\\&\leq \sqrt{ \frac{1}{n}\sum_{t=1}^n\E[\|\bar{\varphi}_{i,t}(\theta^\ast)\|^4]}\sqrt{\frac{1}{n} \sum_{t=1}^n\E[\|\varphi_{t}\|^4]} \leq C <\infty\end{align*} for all $n$, by Lemma \ref{Lemma:sumEphit4}. Fix $\epsilon>0$. Then, since $\beta_n\stackrel{w.p.1}{\rightarrow}0$, for every $K>0$ we can find some $n_0(K)$ such that $\beta^{\frac{1}{3}}_n\leq\frac{\epsilon}{K}$  for all $n>n_0(K)$ with probability arbitrarily large, e.g., at least $1-\frac{C}{K^2}$  (Theorem \ref{w.p.1charact} in Appendix \ref{usefulres}). This entails that, for every $K>0$, for every $n$ large enough it holds that $\prob\{\beta^{\frac{1}{3}}_n  \|\sqrt{n}\Gamma_{i,n}(\theta^\ast) \|>\epsilon\}\leq 2\frac{C}{K^2}$, and therefore $ \beta^\frac{1}{3}_n  \|\sqrt{n}\Gamma_{i,n}(\theta^\ast) \|\stackrel{p}{\rightarrow}0$. Due to \eqref{rinvconv} and $\beta_n\stackrel{w.p.1}{\rightarrow}0$, also $\beta_n^{1/3}\sup_{\theta\in\widehat{\Theta}_{n,m}}\|R^{-1}_{i,n}(\theta)\|\stackrel{p}{\rightarrow}0$. The conclusion follows from the fact that the product of random variables that go to zero in probability goes to zero in probability.
\paragraph{Proof of \eqref{s2tochi}}
By Lemma \ref{Lemma:asymptoticallynormal}, the vector $v_n=[\sqrt{n}\gamma_{i,n}(\theta^\ast)]_{i=1}^{m-1}$ converges in distribution to a Gaussian vector $v$ with zero mean and covariance $\sigma^2 I_{m-1} \otimes{\bar{R}_\ast}$. Denoting by ${\bold R}_n(\theta)$ the $d(m-1)\times d(m-1)$ matrix with diagonal blocks $R_{i,n}(\theta)$, $i=1,\ldots,m-1$, we can observe that $\bar{p}(\theta)$ is the diagonal of the matrix $V_n^\T {\bold R}_n^{-1}(\theta)V_n$ where $V_n$ is a matrix whose first column is $\sqrt{n}\gamma_{1,n}(\theta^\ast)$ followed by $d(m-2)$ zeros, the second column is a vector of $d$ zeros followed by $\sqrt{n}\gamma_{2,n}(\theta^\ast)$ followed by $d(m-3)$ zeros, etc. 
Thus, 
$$ \sup_{\theta\in\hat{\theta}_{m,n}}\bar{p}_2(\theta)=\mathrm{diag}\left( V_n^\T  {\bold R}^{-1}_n(\theta^\ast)V_n\right) + \sup_{\theta\in\hat{\theta}_{m,n}} \mathrm{diag}\left(V_n^\T  \Delta{\bold R}^{-1}_n(\theta) V_n\right).$$\\\\
The first term is a product, i.e., it can be written as $f(v_n,{\bold R}^{-1}_n(\theta^\ast))$, where $f$ is a continuous function in the elements of $v_n$ and of the matrix ${\bold R}^{-1}_n(\theta^\ast)$.
Therefore, Slutsky's Theorem (Theorem \ref{SlutskyTh} in Appendix \ref{usefulres})  applies and we conclude that  $f(v_n,{\bold R}^{-1}_n(\theta^\ast))\stackrel{d}{\rightarrow} f(v,I_{m-1}\otimes \bar{R}^{-1}_\ast)$, which is distributed as $\sigma^2 \chi^2_{m-1}$. If the remaining term $ \sup_{\theta\in\hat{\theta}_{m,n}} \mathrm{diag}\left( V_n^\T  \Delta{\bold R}^{-1}_n(\theta) V_n \right)$ goes to zero in probability, \eqref{s2tochi} follows again by Theorem \ref{SlutskyTh}. Indeed,  $$ \sup_{\theta\in\hat{\theta}_{m,n}} \mathrm{diag}\left( V_n^\T  \Delta{\bold R}^{-1}_n(\theta) V_n \right)\leq \|V_n\|^2 \sup_{\theta\in\hat{\theta}_{m,n}} \|\Delta{\bold R}^{-1}_n(\theta)\|\stackrel{p}{\rightarrow}0,$$ since  $\|V_n\|^2\stackrel{d}{\rightarrow}\|v\|^2$ and $\sup_{\theta\in\hat{\theta}_{m,n}} \|\Delta{\bold R}^{-1}_n(\theta)\|\stackrel{w.p.1}{\rightarrow}0$ by Lemma \ref{LemmaGoodR}, the Strong Consistency Theorem (Theorem \ref{theorem-consistency}) and the continuity of the inverse operator.

\section{Main Theoretical Tools for the Proofs}\label{usefulres}
In this Appendix we have collected some results from probability theory which have been used in the proofs.  Let $\{X_n\}$ and $X$ be random vectors in $\mathbb{R}^d$, and let $\stackrel{d}{\rightarrow}$ ($\stackrel{p}{\rightarrow}$) denote convergence in distribution (probability).
The following results can be found in e.g., \cite{vdVaart1998}.
\medskip
\begin{theorem}[Cramer-Wold]\label{CWTh}
$ X_n \stackrel{d}{\rightarrow} X$ if and only if $a^\T  X_n\stackrel{d}{\rightarrow} a^\T  X$
$\forall a \in \mathbb{R}^d$.
\end{theorem}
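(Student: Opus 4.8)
The plan is to reduce everything to characteristic functions and invoke Lévy's continuity theorem. Write $\phi_{X_n}(t) = \E[e^{\iota t^\T X_n}]$ and $\phi_X(t) = \E[e^{\iota t^\T X}]$, $t \in \Real{d}$, for the characteristic functions ($\iota$ the imaginary unit). The backbone of the argument is the elementary but crucial identity that, for a fixed $a \in \Real{d}$, the one-dimensional characteristic function of the scalar $a^\T X_n$, evaluated at the argument $1$, equals $\phi_{X_n}(a)$; thus the single $d$-dimensional object $\phi_{X_n}$ is nothing but the collection of one-dimensional characteristic functions of the projections $a^\T X_n$, read off at a single point.

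For the forward implication I would argue directly: if $X_n \stackrel{d}{\rightarrow} X$, then since $x \mapsto a^\T x$ is continuous for each fixed $a$, the continuous mapping theorem immediately gives $a^\T X_n \stackrel{d}{\rightarrow} a^\T X$.

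For the reverse implication, assume $a^\T X_n \stackrel{d}{\rightarrow} a^\T X$ for every $a \in \Real{d}$. Fixing an arbitrary $t \in \Real{d}$ and taking $a = t$, weak convergence of the scalars $t^\T X_n$ to $t^\T X$ forces convergence of their characteristic functions at the point $1$ (because $y \mapsto e^{\iota y}$ is bounded and continuous), i.e.\ $\phi_{X_n}(t) \rightarrow \phi_X(t)$. As $t$ ranges over $\Real{d}$, this is exactly pointwise convergence $\phi_{X_n} \rightarrow \phi_X$, and $\phi_X$ is continuous at the origin, being a characteristic function. Lévy's continuity theorem then delivers $X_n \stackrel{d}{\rightarrow} X$, which closes the argument.

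The only substantive obstacle is Lévy's continuity theorem itself, which supplies the non-trivial passage from pointwise convergence of characteristic functions back to convergence in distribution (internally this rests on tightness together with the fact that characteristic functions determine distributions). Since the appendix already cites \cite{vdVaart1998} for such standard facts, I would simply quote it rather than reprove it; all remaining steps are the routine bookkeeping described above, the key conceptual point being that scalarisation along every direction packages the full $d$-dimensional characteristic function into a family of one-dimensional ones.
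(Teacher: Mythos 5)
Your proof is correct, but note that the paper does not actually prove this statement: Theorem \ref{CWTh} is listed in the appendix of auxiliary tools and is simply quoted from \cite{vdVaart1998}, so there is no in-paper argument to compare against. Your route — continuous mapping for the forward direction, and for the converse the identity $\E[e^{\iota\, t^\T X_n}] = \E[e^{\iota \cdot 1 \cdot (t^\T X_n)}]$ combined with L\'evy's continuity theorem — is exactly the canonical textbook proof found in such references, and every step you give (convergence of characteristic functions under weak convergence via boundedness and continuity of $y \mapsto e^{\iota y}$, continuity of $\phi_X$ at the origin) is sound.
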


\medskip

\begin{theorem}[Continous Mapping]\label{convf}
Let $f$ be a continuous function from $\mathbb{R}^d$ to $\mathbb{R}^l$. If $X_n\stackrel{d}{\rightarrow} X$, then $f(X_n)\stackrel{d}{\rightarrow} f(X)$.

\end{theorem}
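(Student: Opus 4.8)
The plan is to reduce the statement to the definition of weak convergence through bounded continuous test functions (the Portmanteau characterisation). Recall that $X_n \stackrel{d}{\rightarrow} X$ means precisely that $\E[g(X_n)] \to \E[g(X)]$ for every bounded continuous function $g\colon \Real{d} \to \Real{}$. To prove $f(X_n) \stackrel{d}{\rightarrow} f(X)$, I would need to establish the analogous convergence for test functions on $\Real{l}$, namely $\E[h(f(X_n))] \to \E[h(f(X))]$ for every bounded continuous $h\colon \Real{l}\to\Real{}$.

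First I would fix an arbitrary bounded continuous $h\colon \Real{l}\to\Real{}$ and consider the composition $h\circ f\colon \Real{d}\to\Real{}$. Since $f$ is continuous by hypothesis and $h$ is continuous, the composition $h\circ f$ is continuous; since $h$ is bounded, $h\circ f$ is bounded as well. Thus $h\circ f$ is an admissible test function for the weak convergence $X_n \stackrel{d}{\rightarrow} X$, and applying the Portmanteau characterisation with $g = h\circ f$ yields
$$\E[h(f(X_n))] = \E[(h\circ f)(X_n)] \longrightarrow \E[(h\circ f)(X)] = \E[h(f(X))].$$
Because $h$ was arbitrary, this is exactly the statement $f(X_n)\stackrel{d}{\rightarrow} f(X)$.

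An alternative route would use the Skorokhod representation theorem: realise copies $Y_n \stackrel{d}{=} X_n$ and $Y \stackrel{d}{=} X$ on one probability space with $Y_n \to Y$ almost surely, invoke continuity of $f$ to obtain $f(Y_n)\to f(Y)$ almost surely, and note that almost-sure convergence implies convergence in distribution while the laws of $f(Y_n)$ and $f(Y)$ coincide with those of $f(X_n)$ and $f(X)$. I would favour the Portmanteau argument, as it is fully self-contained and avoids the construction underlying Skorokhod's theorem. The only place where any genuine difficulty could enter is in the more general form of the result, where $f$ is merely required to be continuous off a set of $X$-probability zero; there one must restrict the Portmanteau inequalities to the continuity set of $f$ and verify that the exceptional set is negligible under the law of $X$. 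Under the present hypothesis of global continuity this complication does not arise, so the proof is essentially immediate once the Portmanteau characterisation is in hand.
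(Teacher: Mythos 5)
Your proof is correct. Note that the paper itself does not prove Theorem \ref{convf} at all: it is quoted as a standard tool with a pointer to \cite{vdVaart1998}, so there is no internal proof to compare against. Your Portmanteau argument --- fixing an arbitrary bounded continuous $h\colon \mathbb{R}^l \to \mathbb{R}$, observing that $h \circ f$ is again bounded and continuous, and applying the test-function characterisation of $X_n \stackrel{d}{\rightarrow} X$ --- is exactly the standard textbook proof for globally continuous $f$, and it is complete as written; your closing remark that the version allowing discontinuities on a set of $X$-measure zero requires extra care, but is not needed under the stated hypothesis, is also accurate.
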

\medskip

\begin{theorem}[Slutsky]\label{SlutskyTh}
Let $f$ be a continuous function from $\mathbb{R}^{d+k}$ to $\mathbb{R}^l$ and $\{Y_n\}$ a sequence of random vectors in $\mathbb{R}^d$. If $X_n\stackrel{d}{\rightarrow} X$ and $Y_n\stackrel{p}\rightarrow c$, where $c$ is a constant vector, then $f(X_n,Y_n)\stackrel{d}{\rightarrow}f(X,c)$.
\end{theorem}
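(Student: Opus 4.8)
The plan is to reduce the statement to the continuous mapping theorem (Theorem \ref{convf}) by first upgrading the two marginal convergences into a single \emph{joint} convergence of the pair $(X_n,Y_n)$. Concretely, I would prove the intermediate claim that $(X_n,Y_n)\stackrel{d}{\rightarrow}(X,c)$ as random vectors in $\mathbb{R}^{d+k}$, and then apply Theorem \ref{convf} to the continuous map $f$ to obtain $f(X_n,Y_n)\stackrel{d}{\rightarrow}f(X,c)$ at once.

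To establish the joint convergence I would use the Portmanteau characterization: it suffices to show $\E[h(X_n,Y_n)]\to\E[h(X,c)]$ for every bounded, uniformly continuous $h:\mathbb{R}^{d+k}\to\mathbb{R}$. I would split the difference as
$$
\E[h(X_n,Y_n)]-\E[h(X,c)]=\big(\E[h(X_n,Y_n)]-\E[h(X_n,c)]\big)+\big(\E[h(X_n,c)]-\E[h(X,c)]\big).
$$
The second bracket tends to $0$ because $x\mapsto h(x,c)$ is bounded and continuous and $X_n\stackrel{d}{\rightarrow}X$. For the first bracket I would invoke uniform continuity of $h$: given $\epsilon>0$, pick $\delta>0$ so that $\|u-v\|<\delta$ forces $|h(u)-h(v)|<\epsilon$, and then bound
$$
\big|\E[h(X_n,Y_n)]-\E[h(X_n,c)]\big|\,\leq\,\epsilon+2\,\|h\|_\infty\,\prob(\|Y_n-c\|\geq\delta).
$$
Since $Y_n\stackrel{p}{\rightarrow}c$ and $c$ is a constant, the probability on the right vanishes as $n\to\infty$, and letting $\epsilon\downarrow 0$ delivers the claim.

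The step carrying the real content is the joint convergence, and it is precisely here that the hypothesis ``$c$ is constant'' is indispensable: for a genuinely random limit, marginal convergence in distribution would not imply joint convergence, so the argument would break down. Everything else is routine — once $(X_n,Y_n)\stackrel{d}{\rightarrow}(X,c)$ is in hand, the conclusion is an immediate application of the already-available continuous mapping theorem. An alternative route that avoids the Portmanteau bookkeeping would be to first prove the \emph{converging-together lemma} (if $U_n\stackrel{d}{\rightarrow}U$ and $\|V_n-U_n\|\stackrel{p}{\rightarrow}0$ then $V_n\stackrel{d}{\rightarrow}U$) and apply it with $U_n=(X_n,c)$ and $V_n=(X_n,Y_n)$, noting that $(X_n,c)\stackrel{d}{\rightarrow}(X,c)$ follows from $X_n\stackrel{d}{\rightarrow}X$ with a frozen second coordinate; I would nonetheless favour the direct Portmanteau computation above because it keeps the whole argument self-contained.
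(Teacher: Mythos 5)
Your proof is correct, but there is nothing in the paper to compare it against: Theorem \ref{SlutskyTh} appears in Appendix \ref{usefulres} as an imported background tool, quoted from \cite{vdVaart1998} with no proof given. The meaningful comparison is therefore with the cited reference, and there your skeleton --- upgrade the two marginal convergences to the joint convergence $(X_n,Y_n)\stackrel{d}{\rightarrow}(X,c)$, then invoke the continuous mapping theorem (Theorem \ref{convf}) --- is exactly the standard argument; you are also right that the constancy of $c$ is the hypothesis carrying all the weight, since marginal convergence in distribution to a genuinely random limit does not imply joint convergence. The only difference from the textbook treatment lies in how the joint convergence is established: you do it by a direct Portmanteau estimate, splitting off the event $\{\|Y_n-c\|\geq\delta\}$ and using uniform continuity of the test function $h$ together with $\prob(\|Y_n-c\|\geq\delta)\rightarrow 0$, whereas the reference obtains it from the converging-together lemma applied to $(X_n,c)$ and $(X_n,Y_n)$ --- precisely the alternative you sketch at the end. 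Both routes are complete; your direct computation is self-contained and avoids proving an auxiliary lemma, at the cost of a little bookkeeping, while the converging-together route is shorter once that lemma is available. I see no gaps: the two-term decomposition is sound, the second bracket vanishes because $x\mapsto h(x,c)$ is bounded and continuous, and the first is handled correctly by the $\epsilon$--$\delta$ argument plus the $2\|h\|_\infty$ bound on the exceptional event.
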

\medskip

\begin{theorem}[Glivenko - Cantelli]\label{GCTh}
Let {$\xi_1, \xi_2, \ldots$ be an i.i.d.\ sequence of} random variables with cumulative distribution function $F(z)=Pr\{\xi_1\leq z\}$. Let $F_n(z)$ be the empirical estimate of $F(z)$ {based on a sample of size $n:$}
\[
F_n(z)=\frac{1}{n}\sum_{t=1}^n \indic{ \xi_t \leq z},
\]
where $\indic{\cdot}$ is the indicator function. Then,
\[
\lim_{n\rightarrow\infty} \sup_{z\in \mathbb{R}}|F(z)-F_n(z)|=0 \text{ w.p.1}.
\]
\end{theorem}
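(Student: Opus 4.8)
The plan is to first establish \emph{pointwise} almost-sure convergence of $F_n$ to $F$ and then to \emph{upgrade} this to uniform convergence by exploiting the monotonicity and boundedness of a distribution function together with a finite grid. For the pointwise step, I would fix $z\in\mathbb{R}$ and observe that $F_n(z)=\frac{1}{n}\sum_{t=1}^n\indic{\xi_t\leq z}$ is an average of i.i.d.\ Bernoulli variables with mean $F(z)$; hence the Strong Law of Large Numbers yields $F_n(z)\to F(z)$ w.p.1. The same argument applied to the strict inequality shows that $F_n(z^-)\triangleq\frac{1}{n}\sum_{t=1}^n\indic{\xi_t<z}\to F(z^-)$ w.p.1, where $F(z^-)=\mathbb{P}(\xi_1<z)$.

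For the uniform step, I would fix $\varepsilon>0$ and, using $0\leq F\leq 1$ together with right-continuity, select a finite grid $-\infty=z_0<z_1<\cdots<z_k=+\infty$ such that $F(z_j^-)-F(z_{j-1})\leq\varepsilon$ for every $j$ (for instance $z_j=\inf\{x:F(x)\geq j/k\}$ with $1/k<\varepsilon$). On each cell $z_{j-1}\leq z<z_j$, monotonicity of both $F_n$ and $F$ gives the sandwich $F_n(z_{j-1})\leq F_n(z)\leq F_n(z_j^-)$ and $F(z_{j-1})\leq F(z)\leq F(z_j^-)$, so that $F_n(z)-F(z)$ is bounded above by $[F_n(z_j^-)-F(z_j^-)]+\varepsilon$ and below by $[F_n(z_{j-1})-F(z_{j-1})]-\varepsilon$. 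This confines the sup of $|F_n-F|$ over a whole cell to the deviations of $F_n$ from $F$ at the finitely many grid nodes, plus the margin $\varepsilon$.

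Since the grid is finite, the pointwise step guarantees that, w.p.1, there is an $\bar{n}$ beyond which $|F_n(z_j)-F(z_j)|<\varepsilon$ and $|F_n(z_j^-)-F(z_j^-)|<\varepsilon$ hold simultaneously for all $j=0,\ldots,k$. Combining this with the sandwich bounds yields $\sup_{z\in\mathbb{R}}|F_n(z)-F(z)|\leq 2\varepsilon$ for every $n>\bar{n}$, w.p.1, and since $\varepsilon>0$ is arbitrary the claimed uniform convergence follows. The main obstacle I anticipate lies entirely in the grid construction: when $F$ has atoms one cannot space the grid so that $F(z_j)-F(z_{j-1})\leq\varepsilon$ directly, which is precisely why the argument must track the left limits $F_n(z^-)$ and $F(z^-)$ separately and place grid nodes at the jump locations; keeping the ``$\,^-$'' superscripts and the half-open endpoints consistent in the sandwich is the only delicate piece of bookkeeping.
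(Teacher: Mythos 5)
Your proof is correct, but note that the paper does not actually prove this statement: Theorem \ref{GCTh} appears in Appendix \ref{usefulres} as one of the ``Main Theoretical Tools,'' quoted without proof from the cited reference \cite{vdVaart1998}, and it is used only once (to show $\widehat{\mu}_m \rightarrow \mu$ in the proof of Theorem \ref{theorem-asymptotic-shape}). What you have written is the classical textbook argument: the Strong Law of Large Numbers applied to the i.i.d.\ Bernoulli indicators $\indic{\xi_t\leq z}$ and $\indic{\xi_t < z}$ gives pointwise almost-sure convergence of $F_n(z)$ and $F_n(z^-)$, and the quantile grid $z_j=\inf\{x:F(x)\geq j/k\}$ together with monotonicity confines the supremum over each cell to deviations at finitely many grid nodes plus an $\varepsilon$ margin; your insistence on tracking the left limits $F(z_j^-)$ rather than $F(z_j)$ is exactly what makes the argument survive atoms of $F$. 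Two small points of bookkeeping you should make explicit if you write this out in full: (i) when $F$ has a jump of mass larger than $1/k$, consecutive quantiles $z_{j-1}$ and $z_j$ coincide, so you should pass to the distinct grid values (the cell bound $F(z_j^-)-F(z_{j-1})\leq 1/k$ still holds by right-continuity of $F$ at the last repeated index); (ii) the almost-sure set on which your final bound $\sup_z|F_n(z)-F(z)|\leq 2\varepsilon$ holds depends on $\varepsilon$, so to conclude you should take $\varepsilon=1/m$, $m\in\mathbb{N}$, and intersect the countably many probability-one events. With those two remarks your argument is a complete and self-contained proof of a result the paper simply imports.
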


\begin{theorem}[Kolmogorov's Strong Law of Large Numbers]\label{slln} Let $\xi_1, \xi_2,\ldots$ be a sequence of independent random variables with finite second moments, and let $S_n=\sum_{t=1}^n\xi_t$.
Assume that
\[
\sum_{t=1}^\infty\frac{\mathbb{E}[ (\xi_t-\mathbb{E}[\xi_t])^2]}{t^2} < \infty,
\]
then
\[
\lim_{n \to \infty} \frac{S_n-\mathbb{E}[S_n]}{n} = 0 \text{ w.p.1}.
 \]
\end{theorem}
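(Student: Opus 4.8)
The plan is to reduce the assertion to the almost-sure convergence of an $L^2$-bounded martingale and then to invoke Kronecker's lemma. First I would center the summands by setting $\eta_t \triangleq \xi_t - \E[\xi_t]$, so that $\E[\eta_t]=0$, $\E[\eta_t^2]=\E[(\xi_t-\E[\xi_t])^2]$, and the target $\frac{S_n-\E[S_n]}{n}\to 0$ becomes $\frac{1}{n}\sum_{t=1}^n \eta_t \to 0$ w.p.1. The key auxiliary object is the weighted partial-sum process
\[
M_n \,\triangleq\, \sum_{t=1}^n \frac{\eta_t}{t},
\]
and the first milestone is to show that $\{M_n\}$ converges almost surely to a finite limit.

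To establish this, I would note that, with respect to the filtration generated by $\xi_1,\dots,\xi_n$, the sequence $\{M_n\}$ is a martingale: the increments $\eta_t/t$ are independent and mean-zero, so $\E[M_{n+1}\mid \xi_1,\dots,\xi_n]=M_n$. Because the centered increments are independent and therefore orthogonal in $L^2$, the second moments add up without cross terms,
\[
\E[M_n^2]\,=\,\sum_{t=1}^n \frac{\E[\eta_t^2]}{t^2}\,\leq\,\sum_{t=1}^\infty \frac{\E[(\xi_t-\E[\xi_t])^2]}{t^2}\,<\,\infty,
\]
the finiteness being exactly the hypothesis of the theorem. Hence $\sup_n \E[|M_n|]\leq \sup_n \sqrt{\E[M_n^2]}<\infty$, so $\{M_n\}$ is an $L^1$-bounded martingale, and Doob's martingale convergence theorem (Theorem \ref{Th:Doob}) yields that $M_n$ converges w.p.1 to a finite random variable.

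Finally I would apply Kronecker's lemma with the weights $b_t = t \uparrow \infty$: since $\sum_{t=1}^\infty \eta_t/t$ converges w.p.1 by the previous step, Kronecker's lemma gives $\frac{1}{n}\sum_{t=1}^n \eta_t\to 0$ w.p.1, which is the desired conclusion. The only substantive step is the almost-sure convergence of $\{M_n\}$; the rest is bookkeeping. I expect the main obstacle to lie precisely there — specifically, in justifying cleanly that the independence of the centered increments is what collapses $\E[M_n^2]$ into the summable series appearing in the hypothesis, and in ensuring that the convergence theorem invoked is stated for real-valued $L^1$-bounded (not merely nonnegative) martingales, so that it applies to $\{M_n\}$ as written.
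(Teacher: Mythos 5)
Your proof is correct, but there is nothing in the paper to compare it against: Theorem \ref{slln} is stated in Appendix \ref{usefulres} purely as a quoted classical tool, and the paper gives no proof of it. Your argument --- center the variables, observe that $M_n=\sum_{t\leq n}(\xi_t-\E[\xi_t])/t$ is a martingale whose second moments collapse to the hypothesized series because independent centered increments are orthogonal in $L^2$, invoke martingale convergence for $L^1$-bounded martingales, and finish with Kronecker's lemma --- is the standard textbook proof, and every step checks out. Two remarks tying it to the paper: first, the convergence theorem you need is exactly the paper's Theorem \ref{Th:Doob}, which is stated for general $L^1$-bounded submartingales (hence for real-valued martingales of either sign), so the concern you raise in your final sentence is already resolved by the form in which the paper quotes Doob's theorem; second, your route is precisely the technique the authors themselves deploy in the proof of Lemma \ref{Lemma:supphibarNbar} --- build a $\frac{1}{t}$-weighted martingale, bound its variance by a $\sum_k 1/k^2$ series, apply Doob's theorem, then conclude via Kronecker's lemma --- so your proposal is consistent in spirit with how the paper handles limits of this type. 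The only ingredient you use that the paper never states formally is Kronecker's lemma, but the paper likewise invokes it without proof, so this is not a gap.
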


The following result is a rewriting of Theorem 35.12 in \cite{billingsley1995probability}. See also  \cite{Shiryaev1997}, Theorems 1-4 (Chapter VII, Section 8), for weaker assumptions.
\medskip

\begin{theorem}[Central Limit Theorem for Martingales]\label{Theorem:cltMartingales}
For every $n$, let $\xi_{n,1},$ $ \xi_{n,2}, \ldots$ be a martingale difference with finite second moments relative to the filtration ${\cal F}_{n,1}, {\cal F}_{n,2}, \ldots$. Let $\sigma^2_{n,k}:=\E[\xi^2_{n,k}|{\cal F}_{n,k-1}]$ (${\cal F}_{n,0}$ is the trivial $\sigma$-algebra). Assuming that, for each $n$, $\sum_{k=1}^\infty \xi_{n,k}<\infty$ and $\sum_{k=1}^\infty \sigma^2_{n,k}<\infty$  w.p.1, if
\begin{equation}
\sum_{k=1}^\infty \sigma_{n,k}^2\stackrel{p}{\longrightarrow} \bar{\sigma}^2\;\;\mbox {as\;\, $n\rightarrow\infty,$}\end{equation}
where $0<\bar{\sigma}^2<\infty$, and
\begin{equation}
\lim_{n\rightarrow \infty}\sum_{k=1}^\infty \E[\xi_{n,k}^2\indic{|\xi_{n,k}|>\epsilon}  ]=0 \end{equation}
for every $\epsilon>0$, then, as $n\rightarrow \infty$,
$\sum_{k=1}^\infty\xi_{n,k}$ converges in distribution to a Gaussian random variable with zero mean and variance $\bar{\sigma}^2$.
\end{theorem}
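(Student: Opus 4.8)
The plan is to reduce the statement to the convergence of characteristic functions and then invoke L\'evy's continuity theorem: writing $T_n\triangleq\sum_{k=1}^\infty\xi_{n,k}$, it suffices to show that $\E[e^{\iota t T_n}]\to e^{-t^2\bar\sigma^2/2}$ for every real $t$, where $\iota$ is the imaginary unit. First I would record the \emph{asymptotic negligibility} of the array: the Lindeberg-type hypothesis $\sum_k\E[\xi_{n,k}^2\indic{|\xi_{n,k}|>\epsilon}]\to0$ for every $\epsilon>0$, together with $\sum_k\sigma_{n,k}^2\stackrel{p}{\rightarrow}\bar\sigma^2$, yields $\max_k|\xi_{n,k}|\stackrel{p}{\rightarrow}0$ and $\max_k\sigma_{n,k}^2\stackrel{p}{\rightarrow}0$. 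These facts guarantee that the per-step conditional characteristic functions stay uniformly close to $1$, which is what makes the product arguments below work.

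The central device is the complex exponential (product) martingale. For fixed $t$, set $\Phi_{n,k}\triangleq\E[e^{\iota t\xi_{n,k}}\mid{\cal F}_{n,k-1}]$ and
\[
Y_{n,N}\triangleq\frac{\exp(\iota t\sum_{k=1}^N\xi_{n,k})}{\prod_{k=1}^N\Phi_{n,k}},
\]
which, because $\xi_{n,\cdot}$ is a martingale difference, satisfies $\E[Y_{n,N}\mid{\cal F}_{n,N-1}]=Y_{n,N-1}$ and hence $\E[Y_{n,N}]=1$ for all $N$. The key estimates are then: (i) a second-order Taylor expansion $\Phi_{n,k}=1-\tfrac12 t^2\sigma_{n,k}^2+\rho_{n,k}$, with the remainder controlled through the elementary bound $|e^{\iota x}-1-\iota x+\tfrac12 x^2|\le\min(|x|^2,|x|^3)$, so that $\E[\sum_k|\rho_{n,k}|]$ is dominated by $\epsilon\,t^3\,\E[\sum_k\sigma_{n,k}^2]$ plus $t^2\sum_k\E[\xi_{n,k}^2\indic{|\xi_{n,k}|>\epsilon}]$ and can therefore be made arbitrarily small; and (ii) the complex-analytic fact that $|\prod_k(1+z_k)-\exp(\sum_k z_k)|$ is controlled by $\sum_k|z_k|^2$ once $\max_k|z_k|$ is small. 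Applying (ii) to $z_k=\Phi_{n,k}-1$ and using (i) gives $\prod_k\Phi_{n,k}\stackrel{p}{\rightarrow}\exp(-\tfrac12 t^2\bar\sigma^2)$. Combining this with $\E[Y_{n,\infty}]=1$ and a dominated-convergence passage then delivers $\E[e^{\iota t T_n}]\to e^{-t^2\bar\sigma^2/2}$.

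The hard part will be converting the \emph{in-probability} control into convergence of the deterministic expectation $\E[e^{\iota t T_n}]$. Because the conditional variances $\sigma_{n,k}^2$ are random and only satisfy $\sum_k\sigma_{n,k}^2\stackrel{p}{\rightarrow}\bar\sigma^2$, the denominator $\prod_k\Phi_{n,k}$ is itself random, so passing the limit through the expectation is not automatic. To handle this I would introduce a stopping time $\tau_n=\inf\{N:\sum_{k\le N}\sigma_{n,k}^2>2\bar\sigma^2\}$, replace the array by its stopped version (which differs from the original only on an event of probability tending to $0$), and thereby secure both a uniform lower bound on $|\prod_k\Phi_{n,k}|$ and the uniform integrability of $Y_{n,N}$. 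Establishing the remainder bound in (i) uniformly in $k$ and carrying out this truncation without destroying the martingale-difference structure are the technical crux; once they are in place, L\'evy's theorem closes the argument.
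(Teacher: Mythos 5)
First, a point of comparison: the paper does not prove this statement at all --- it is presented as an auxiliary tool, explicitly described as ``a rewriting of Theorem 35.12 in Billingsley'', with Shiryaev cited for weaker hypotheses. So there is no in-paper argument to match; what your proposal actually reconstructs is the standard textbook proof of that cited theorem (conditional characteristic functions, the exponential product martingale $Y_{n,N}$ with $\E[Y_{n,N}]=1$, predictable stopping of the cumulative conditional variance, Taylor expansion controlled by the Lindeberg term, and L\'evy's continuity theorem). Your preliminary reductions are sound: $\max_k|\xi_{n,k}|\stackrel{p}{\rightarrow}0$ follows from Chebyshev plus the Lindeberg condition, and $\max_k\sigma^2_{n,k}\stackrel{p}{\rightarrow}0$ follows from $\sigma^2_{n,k}\le\epsilon^2+\E[\xi^2_{n,k}\indic{|\xi_{n,k}|>\epsilon}\mid{\cal F}_{n,k-1}]$; the stopping is legitimate because $\{\sum_{j\le k}\sigma^2_{n,j}\le 2\bar\sigma^2\}$ is ${\cal F}_{n,k-1}$-measurable, so truncation preserves the martingale-difference structure and alters the array only on an event of probability tending to zero.

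The one step that fails as written is the claim that cumulative-variance stopping alone ``secures a uniform lower bound on $|\prod_k\Phi_{n,k}|$''. After that stopping, an individual conditional variance can still be as large as $2\bar\sigma^2$, and the bound $|\Phi_{n,k}-1|\le t^2\sigma^2_{n,k}/2$ keeps $\Phi_{n,k}$ away from zero only when $|t|$ is small (roughly $|t|<1/\bar\sigma$). For large $|t|$, $\Phi_{n,k}$ may vanish, the ratio defining $Y_{n,N}$ is then ill-defined and $\E[Y_{n,N}]=1$ breaks down --- and you cannot simply restrict to small $|t|$, since convergence of characteristic functions on a neighbourhood of zero does not identify the limit law. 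The standard repair (and the one used in the proof of the cited theorem) is a second predictable truncation: multiply $\xi_{n,k}$ also by $\indic{\sigma^2_{n,k}\le\delta}$ with $\delta<2/t^2$; this indicator is again ${\cal F}_{n,k-1}$-measurable, and since you already established $\max_k\sigma^2_{n,k}\stackrel{p}{\rightarrow}0$, this extra truncation too changes the array only with vanishing probability. With both truncations in force one gets $|\Phi_{n,k}|\ge\exp(-t^2\sigma^2_{n,k})$, hence $|\prod_k\Phi_{n,k}|\ge\exp(-Ct^2)$ uniformly, $|Y_{n,N}|$ is bounded, and your dominated-convergence passage and L\'evy's theorem close the argument exactly as you describe.
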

\medskip

The following theorem (\cite{Shiryaev1997}, Theorem 1, VII, \S 4) is fundamental in the study of convergence of (sub)martingale, and can be thought of as a stochastic analogue of the monotone convergence theorem for real sequences.
\begin{theorem}[Doob]\label{Th:Doob}
Let $(\xi_n,{\cal F}_n)$ be a submartingale (i.e., $\E[\xi_{n+1}|{\cal F}_{n}]\geq \xi_{n}$ w.p.1), with $\sup_n \E[|\xi_n|]<\infty$. Then with probability 1, the limit $\lim_{n\rightarrow\infty}{\xi_n}=\xi_\infty$ exists and $\E[|\xi_{\infty}|]<\infty$.

\end{theorem}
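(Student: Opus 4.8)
The plan is to prove this (sub)martingale convergence theorem via Doob's upcrossing inequality, which controls the oscillations of the sample paths. First I would fix two rationals $a<b$ and define $U_n[a,b]$, the number of \emph{upcrossings} of the interval $[a,b]$ performed by the finite sequence $\xi_1,\dots,\xi_n$, i.e. the number of disjoint time windows during which the process passes from a value at or below $a$ to a value at or above $b$. The entire argument rests on the \emph{upcrossing inequality}
\[
(b-a)\,\E[U_n[a,b]] \,\le\, \E[(\xi_n-a)^+] \,\le\, \E[|\xi_n|]+|a|,
\]
which I would establish first and then leverage together with the hypothesis $\sup_n \E[|\xi_n|]<\infty$.

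To prove the inequality I would pass to the nonnegative submartingale $Y_t\triangleq(\xi_t-a)^+$ (the map $x\mapsto(x-a)^+$ being convex and nondecreasing), observing that upcrossings of $[a,b]$ by $\xi$ coincide with upcrossings of $[0,b-a]$ by $Y$. Introduce the stopping times $\sigma_1=\inf\{t:Y_t=0\}$, $\tau_1=\inf\{t>\sigma_1:Y_t\ge b-a\}$, and recursively the later $\sigma_k,\tau_k$, and let $H_t=\sum_k\indic{\sigma_k<t\le\tau_k}$ be the predictable $\{0,1\}$-valued ``buy-low-sell-high'' strategy (predictability follows since $\{\sigma_k<t\}$ and $\{t\le\tau_k\}$ are both $\mathcal{F}_{t-1}$-measurable). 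Since $H$ and the complementary strategy $1-H$ are both predictable, bounded and nonnegative, the transforms $(H\cdot Y)_n=\sum_{t=1}^n H_t(Y_t-Y_{t-1})$ and $((1-H)\cdot Y)_n$ are submartingales starting from $0$; from $Y_n-Y_0=(H\cdot Y)_n+((1-H)\cdot Y)_n$ and $\E[((1-H)\cdot Y)_n]\ge 0$ I would deduce $\E[(H\cdot Y)_n]\le\E[Y_n]-\E[Y_0]$. On the other hand each completed upcrossing contributes at least $b-a$ to $(H\cdot Y)_n$, while the final incomplete crossing contributes a nonnegative amount (we ``bought'' at $Y=0$ and $Y\ge 0$), so pathwise $(b-a)U_n[a,b]\le (H\cdot Y)_n$. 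Combining these gives $(b-a)\E[U_n[a,b]]\le\E[Y_n]\le\E[|\xi_n|]+|a|$. This martingale-transform step is the technical heart of the proof and the part I expect to require the most care, both in verifying the submartingale property of the transforms and in the pathwise lower bound.

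With the inequality in hand the remaining steps are standard. Since $U_n[a,b]$ is nondecreasing in $n$, monotone convergence yields $\E[U_\infty[a,b]]=\lim_n\E[U_n[a,b]]\le\frac{\sup_n\E[|\xi_n|]+|a|}{b-a}<\infty$ by hypothesis, so $U_\infty[a,b]<\infty$ almost surely. Taking the union over the countable set of rational pairs $a<b$, outside a single null set the number of upcrossings of every rational interval is finite; but on any event where $\liminf_n\xi_n<\limsup_n\xi_n$ one could interpose rationals $a<b$ strictly between them and thereby force infinitely many upcrossings of $[a,b]$, a contradiction. Hence $\liminf_n\xi_n=\limsup_n\xi_n$ almost surely, i.e. $\xi_\infty\triangleq\lim_n\xi_n$ exists in $[-\infty,+\infty]$ with probability one. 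Finally, Fatou's lemma gives $\E[|\xi_\infty|]=\E[\liminf_n|\xi_n|]\le\liminf_n\E[|\xi_n|]\le\sup_n\E[|\xi_n|]<\infty$, so $\xi_\infty$ is integrable and in particular finite almost surely, which completes the proof.
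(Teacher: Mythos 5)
Your proof is correct: the reduction to the nonnegative submartingale $(\xi_t-a)^+$, the predictable ``buy-low-sell-high'' transform yielding the upcrossing inequality, the countable union over rational intervals, and the final Fatou step are all carried out properly. The paper itself gives no proof of this statement --- it quotes it as a classical result from Shiryaev's \emph{Probability} (Theorem 1, Chapter VII, \S 4) --- and your argument is exactly the standard upcrossing proof found in that reference, so it matches the intended source rather than deviating from it.
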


The following theorem provides a characterisation of almost sure convergence.
\begin{theorem}[Th. 1, Section 10, Chap. 2, \cite{Shiryaev1997}]\label{w.p.1charact}
Let $\xi_1,\xi_2,\ldots$ be a sequence of random variables. A necessary and sufficient condition that $\xi_n\stackrel{w.p.1}{\rightarrow}\xi$ is that $\lim_{n\rightarrow\infty}\prob\left\{\sup_{k\geq n}|\xi_n-\xi|>\epsilon \right\}=0$ for every $\epsilon>0$.
\end{theorem}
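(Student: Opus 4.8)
The plan is to reduce the two-sided equivalence to a single continuity-of-probability identity, after rewriting the tail supremum as a countable union of elementary events. First I would fix $\epsilon>0$ and set $B_k(\epsilon)\triangleq\{|\xi_k-\xi|>\epsilon\}$ and $A_n(\epsilon)\triangleq\{\sup_{k\geq n}|\xi_k-\xi|>\epsilon\}$. The key elementary observation is that a supremum over a countable family strictly exceeds $\epsilon$ precisely when one of its members does, so that $A_n(\epsilon)=\bigcup_{k\geq n}B_k(\epsilon)$; in particular $A_n(\epsilon)$ is measurable, and since enlarging $n$ removes terms from the union, the sequence $\{A_n(\epsilon)\}_n$ is nonincreasing. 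Its limit is therefore the well-defined event $A(\epsilon)\triangleq\bigcap_{n}A_n(\epsilon)=\limsup_{k}B_k(\epsilon)$, and continuity of probability from above yields
$$\lim_{n\to\infty}\prob\{A_n(\epsilon)\}=\prob\{A(\epsilon)\}.$$

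Next I would translate almost-sure convergence into a statement about the events $A(\epsilon)$. A realisation $\omega$ fails to satisfy $\xi_n(\omega)\to\xi(\omega)$ exactly when there is some $\epsilon>0$ for which $|\xi_k(\omega)-\xi(\omega)|>\epsilon$ holds for infinitely many $k$, i.e.\ when $\omega\in A(\epsilon)$ for some $\epsilon>0$. Since $A(\epsilon)\subseteq A(\epsilon')$ whenever $\epsilon\geq\epsilon'$, restricting to the countable sequence $\epsilon=1/m$ loses nothing, and the divergence set can be written as the countable union $\bigcup_{m\geq 1}A(1/m)$. By monotonicity and countable subadditivity of $\prob$, this union has probability zero if and only if $\prob\{A(1/m)\}=0$ for every $m$, which, again by the monotonicity of $A(\cdot)$ in its argument, is equivalent to $\prob\{A(\epsilon)\}=0$ for every $\epsilon>0$.

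Finally I would assemble the two halves: $\xi_n\to\xi$ w.p.1 is equivalent to $\prob\{A(\epsilon)\}=0$ for all $\epsilon>0$ by the previous paragraph, while $\prob\{A(\epsilon)\}=0$ is equivalent to $\lim_{n}\prob\{A_n(\epsilon)\}=0$ by the displayed continuity-from-above identity applied at each fixed $\epsilon$; chaining these gives the stated necessary-and-sufficient condition. I do not anticipate a genuinely hard step here. The only points requiring care are the set-theoretic identity $A_n(\epsilon)=\bigcup_{k\geq n}B_k(\epsilon)$ (together with the attendant measurability of the tail supremum as a countable supremum of measurable maps) and the quantifier reduction from ``for all $\epsilon>0$'' to the countable family $\{1/m\}$. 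The latter is precisely the mechanism that lets the condition, stated for a single generic $\epsilon$, be equivalent to convergence, which involves all $\epsilon$ simultaneously.
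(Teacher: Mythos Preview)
Your argument is correct and is essentially the standard textbook proof (indeed the one in Shiryaev): rewrite the tail event as $A_n(\epsilon)=\bigcup_{k\ge n}\{|\xi_k-\xi|>\epsilon\}$, use continuity from above to identify $\lim_n\prob\{A_n(\epsilon)\}$ with $\prob\{\limsup_k B_k(\epsilon)\}$, and then characterise the complement of a.s.\ convergence as $\bigcup_{m}A(1/m)$. Note, however, that the paper does not supply its own proof of this statement; Theorem~\ref{w.p.1charact} appears in Appendix~\ref{usefulres} only as a quoted result from \cite{Shiryaev1997}, so there is no ``paper's proof'' to compare against beyond the reference itself.
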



\begin{thebibliography}{10}

\bibitem{abbasi2011regret}
{\sc Y.~Abbasi-Yadkori and {\relax Cs}.~Szepesv{\'a}ri}, {\em Regret bounds for
  the adaptive control of linear quadratic systems}, in Proceedings of the 24th
  Annual Conference on Learning Theory, JMLR Workshop and Conference
  Proceedings, 2011, pp.~1--26.

\bibitem{KonstGranichin2016}
{\sc K.~{Amelin} and O.~{Granichin}}, {\em Randomized control strategies under
  arbitrary external noise}, IEEE Transactions on Automatic Control, 61 (2016),
  pp.~1328--1333.

\bibitem{baggio2022finite}
{\sc G.~Baggio, A.~Car{\`e}, and G.~Pillonetto}, {\em Finite-sample guarantees
  for state-space system identification under full state measurements}, in 2022
  IEEE 61st Conference on Decision and Control (CDC), IEEE, 2022,
  pp.~2789--2794.

\bibitem{Baggio2022}
{\sc G.~Baggio, A.~Car{\`e}, A.~Scampicchio, and G.~Pillonetto}, {\em Bayesian
  frequentist bounds for machine learning and system identification},
  Automatica, 146 (2022), p.~110599,
  \url{https://doi.org/https://doi.org/10.1016/j.automatica.2022.110599}.

\bibitem{billingsley1995probability}
{\sc P.~Billingsley}, {\em Probability and measure, 3rd edition},
  Wiley-Interscience, 1995.

\bibitem{boczar2018finite}
{\sc R.~Boczar, N.~Matni, and B.~Recht}, {\em Finite-data performance
  guarantees for the output-feedback control of an unknown system}, in 2018
  IEEE Conference on Decision and Control (CDC), IEEE, 2018, pp.~2994--2999.

\bibitem{bullen2015dictionary}
{\sc P.~Bullen}, {\em Dictionary of inequalities}, CRC Press, 2nd~ed., 2015.

\bibitem{Calisti2017}
{\sc A.~Calisti, D.~Dardari, G.~Pasolini, M.~Kieffer, and F.~Bassi}, {\em
  Information diffusion algorithms over {WSN}s for non-asymptotic confidence
  region evaluation}, in 2017 IEEE International Conference on Communications
  (ICC), 2017, pp.~1--7, \url{https://doi.org/10.1109/ICC.2017.7997230}.

\bibitem{Campi2005}
{\sc M.~C. Campi and E.~Weyer}, {\em Guaranteed non-asymptotic confidence
  regions in system identification}, Automatica, 41 (2005), pp.~1751--1764.

\bibitem{care2021undermodelling}
{\sc A.~Car{\`e}, M.~C. Campi, B.~{\relax Cs}. Cs{\'a}ji, and E.~Weyer}, {\em
  Facing undermodelling in {S}ign-{P}erturbed-{S}ums system identification},
  Systems \& Control Letters, 153 (2021), p.~104936.

\bibitem{CareFiniteSample2018}
{\sc A.~{Car{\`e}}, B.~{\relax Cs}. {Cs{\'a}ji}, M.~C. {Campi}, and
  E.~{Weyer}}, {\em Finite-sample system identification: An overview and a new
  correlation method}, IEEE Control Systems Letters, 2 (2018), pp.~61--66.

\bibitem{Coutinho2009}
{\sc D.~F. Coutinho, C.~E. de~Souza, K.~A. Barbosa, and A.~Trofino}, {\em
  Robust linear ${H}_\infty$ filter design for a class of uncertain nonlinear
  systems: An {LMI} approach}, SIAM Journal on Control and Optimization, 48
  (2009), pp.~1452--1472, \url{https://doi.org/10.1137/060669504},
  \url{https://doi.org/10.1137/060669504}.

\bibitem{Csaji2012a}
{\sc B.~{\relax Cs}. Cs\'aji, M.~C. Campi, and E.~Weyer}, {\em Non-asymptotic
  confidence regions for the least-squares estimate}, in Proceedings of the
  16th IFAC Symposium on System Identification, 2012, pp.~227--232.

\bibitem{SPSPaper2}
{\sc B.~{\relax Cs}. Cs{\'a}ji, M.~C. Campi, and E.~Weyer}, {\em
  Sign-{P}erturbed {S}ums: A new system identification approach for
  constructing exact non-asymptotic confidence regions in linear regression
  models}, IEEE Transactions on Signal Processing, 63 (2015), pp.~169--181,
  \url{https://doi.org/10.1109/TSP.2014.2369000}.

\bibitem{csaji2019distribution}
{\sc B.~{\relax Cs}. Cs{\'a}ji and K.~B. Kis}, {\em Distribution-free
  uncertainty quantification for kernel methods by gradient perturbations},
  Machine Learning, 108 (2019), pp.~1677--1699.

\bibitem{Dabbene2014}
{\sc F.~Dabbene, M.~Sznaier, and R.~Tempo}, {\em Probabilistic optimal
  estimation with uniformly distributed noise}, IEEE Transasctions on Automatic
  Control, 59 (2014), pp.~2113--2127.

\bibitem{Dalai2007}
{\sc M.~Dalai, E.~Weyer, and M.~C. Campi}, {\em Parameter identification for
  non-linear systems:\,guaranteed confidence regions through {LSCR}},
  Automatica, 43 (2007), pp.~1418--1425.

\bibitem{dean2020sample}
{\sc S.~Dean, H.~Mania, N.~Matni, B.~Recht, and S.~Tu}, {\em On the sample
  complexity of the linear quadratic regulator}, Foundations of Computational
  Mathematics, 20 (2020), pp.~633--679.

\bibitem{evstifeev2019strength}
{\sc A.~D. Evstifeev, G.~A. Volkov, A.~A. Chevrychkina, and Y.~V. Petrov}, {\em
  Strength performance of 1230 aluminum alloy under tension in the quasi-static
  and dynamic ranges of loading parameters}, Technical Physics, 64 (2019),
  pp.~620--624.

\bibitem{FattahiMatniSojoudi2020}
{\sc S.~Fattahi, N.~Matni, and S.~Sojoudi}, {\em Efficient learning of
  distributed linear-quadratic control policies}, SIAM Journal on Control and
  Optimization, 58 (2020), pp.~2927--2951,
  \url{https://doi.org/10.1137/19M1291108},
  \url{https://doi.org/10.1137/19M1291108}.

\bibitem{foster2020learning}
{\sc D.~Foster, T.~Sarkar, and A.~Rakhlin}, {\em Learning nonlinear dynamical
  systems from a single trajectory}, in Learning for Dynamics and Control,
  Proceedings of Machine Learning Research, 2020, pp.~851--861.

\bibitem{garatti2004assessing}
{\sc S.~Garatti, M.~C. Campi, and S.~Bittanti}, {\em Assessing the quality of
  identified models through the asymptotic theory -- when is the result
  reliable?}, Automatica, 40 (2004), pp.~1319--1332.

\bibitem{Goldenshluger1998}
{\sc A.~Goldenshluger}, {\em Nonparametric estimation of transfer functions:
  rates of convergence and adaptation}, IEEE Transactions on Information
  Theory, 44 (1998), pp.~644--658.

\bibitem{granichin2021randomized}
{\sc N.~Granichin, G.~Volkov, Y.~Petrov, and M.~Volkova}, {\em Randomized
  approach to determine dynamic strength of ice}, Cybernetics and Physics, 10
  (2021), pp.~122--126.

\bibitem{Granichin2012}
{\sc O.~N. Granichin}, {\em The nonasymptotic confidence set for parameters of
  a linear control object under an arbitrary external disturbance}, Automation
  and Remote Control, 73 (2012), pp.~20--30.

\bibitem{han2018}
{\sc C.-Y. Han, M.~Kieffer, and A.~Lambert}, {\em Guaranteed confidence region
  characterization for source localization using {RSS} measurements}, Signal
  Processing, 152 (2018), pp.~104--117.

\bibitem{HANEBECK1999}
{\sc U.~D. Hanebeck, J.~Horn, and G.~Schmidt}, {\em On combining statistical
  and set-theoretic estimation}, Automatica, 35 (1999), pp.~1101--1109,
  \url{https://doi.org/https://doi.org/10.1016/S0005-1098(99)00011-4}.

\bibitem{hardt2018gradient}
{\sc M.~Hardt, T.~Ma, and B.~Recht}, {\em Gradient descent learns linear
  dynamical systems}, Journal of Machine Learning Research, 19 (2018),
  pp.~1--44.

\bibitem{KARIMSHOUSHTARINovara2020}
{\sc M.~Karimshoushtari and C.~Novara}, {\em Design of experiments for
  nonlinear system identification: A set membership approach}, Automatica, 119
  (2020), p.~109036.

\bibitem{MORAVEJKHORASANI2020Automatica}
{\sc M.~M. Khorasani and E.~Weyer}, {\em Non-asymptotic confidence regions for
  the parameters of {EIV} systems}, Automatica, 115 (2020), p.~108873.

\bibitem{MORAVEJKHORASANI2020TAC}
{\sc M.~M. Khorasani and E.~Weyer}, {\em Non-asymptotic confidence regions for
  the transfer functions of errors-in-variables systems}, IEEE Transactions on
  Automatic Control, 67 (2021), pp.~2373--2388.

\bibitem{KiefferWalter2011}
{\sc M.~Kieffer and E.~Walter}, {\em Guaranteed estimation of the parameters of
  nonlinear continuous-time models: Contributions of interval analysis},
  International Journal of Adaptive Control and Signal Processing, 25 (2011),
  pp.~191--207, \url{https://doi.org/https://doi.org/10.1002/acs.1194}.

\bibitem{KiefferWalter2013}
{\sc M.~Kieffer and E.~Walter}, {\em Guaranteed characterization of exact
  non-asymptotic confidence regions as defined by {LSCR} and {SPS}},
  Automatica, 49 (2013), pp.~507--512.

\bibitem{kolumban2015perturbed}
{\sc S.~Kolumb{\'a}n, I.~Vajk, and J.~Schoukens}, {\em Perturbed datasets
  methods for hypothesis testing and structure of corresponding confidence
  sets}, Automatica, 51 (2015), pp.~326--331.

\bibitem{ljung1971characterization}
{\sc L.~Ljung}, {\em Characterization of the concept of 'persistently exciting'
  in the frequency domain}, Tech. Report TFRT-3038, Department of Automatic
  Control, Lund Institute of Technology (LTH), 1971.

\bibitem{ljung1976consistency}
{\sc L.~Ljung}, {\em Consistency of the least squares identification method},
  IEEE Transactions on Automatic Control, 21 (1976), pp.~779--781.

\bibitem{Ljung1999}
{\sc L.~Ljung}, {\em System Identification: Theory for the User},
  Prentice-Hall, Upper Saddle River, 2nd~ed., 1999.

\bibitem{ManiaJordanRecht2022}
{\sc H.~Mania, M.~I. Jordan, and B.~Recht}, {\em Active learning for nonlinear
  system identification with guarantees}, Journal of Machine Learning Research,
  23 (2022), pp.~1--30.

\bibitem{milanese2013bounding}
{\sc M.~Milanese, J.~Norton, H.~Piet-Lahanier, and {\'E}.~Walter}, {\em
  Bounding approaches to system identification}, Springer Science \& Business
  Media, 2013.

\bibitem{MILANESENovara2011}
{\sc M.~Milanese and C.~Novara}, {\em Unified set membership theory for
  identification, prediction and filtering of nonlinear systems}, Automatica,
  47 (2011), pp.~2141 -- 2151.

\bibitem{MILANESETaragna2005}
{\sc M.~Milanese and M.~Taragna}, {\em ${H}_\infty$ set membership
  identification: A survey}, Automatica, 41 (2005), pp.~2019 -- 2032.

\bibitem{oymak2019non}
{\sc S.~Oymak and N.~Ozay}, {\em Non-asymptotic identification of {LTI} systems
  from a single trajectory}, in 2019 American control conference (ACC), IEEE,
  2019, pp.~5655--5661.

\bibitem{pereira2010learning}
{\sc J.~Pereira, M.~Ibrahimi, and A.~Montanari}, {\em Learning networks of
  stochastic differential equations}, Advances in Neural Information Processing
  Systems, 23 (2010).

\bibitem{Polterauer2015}
{\sc P.~Polterauer, H.~Kirchsteiger, and L.~del Re}, {\em State observation
  with guaranteed confidence regions through sign perturbed sums}, in 2015 54th
  IEEE Conference on Decision and Control (CDC), 2015, pp.~5660--5665,
  \url{https://doi.org/10.1109/CDC.2015.7403107}.

\bibitem{proakis1996digital}
{\sc J.~G. Proakis}, {\em Digital signal processing: principles, algorithms,
  and application-3/E.}, Prentice-Hall, 1996.

\bibitem{Quincampoix2004}
{\sc M.~Quincampoix and V.~M. Veliov}, {\em Optimal control of uncertain
  systems with incomplete information for the disturbances}, SIAM Journal on
  Control and Optimization, 43 (2004), pp.~1373--1399,
  \url{https://doi.org/10.1137/S0363012903420863}.

\bibitem{sarkar2021finite}
{\sc T.~Sarkar, A.~Rakhlin, and M.~A. Dahleh}, {\em Finite time {LTI} system
  identification}, Journal of Machine Learning Research, 22 (2021),
  pp.~1186--1246.

\bibitem{sattar2022non}
{\sc Y.~Sattar and S.~Oymak}, {\em Non-asymptotic and accurate learning of
  nonlinear dynamical systems}, Journal of Machine Learning Research, 23
  (2022), pp.~6248--6296.

\bibitem{shah2012linear}
{\sc P.~Shah, B.~N. Bhaskar, G.~Tang, and B.~Recht}, {\em Linear system
  identification via atomic norm regularization}, in 2012 IEEE 51st IEEE
  conference on decision and control (CDC), IEEE, 2012, pp.~6265--6270.

\bibitem{Shiryaev1997}
{\sc A.~N. Shiryaev}, {\em Probability}, Springer, 2~ed., 1995.

\bibitem{oymak2022}
{\sc Y.~Sun, S.~Oymak, and M.~Fazel}, {\em Finite sample identification of
  low-order {LTI} systems via nuclear norm regularization}, IEEE Open Journal
  of Control Systems, 1 (2022), pp.~237--254,
  \url{https://doi.org/10.1109/OJCSYS.2022.3200015}.

\bibitem{Szentpeteri2023}
{\sc {\relax Sz}.~Szentp{\'e}teri and B.~{\relax Cs}. Cs{\'a}ji}, {\em
  Non-asymptotic state-space identification of closed-loop stochastic linear
  systems using instrumental variables}, Systems \& Control Letters, 178
  (2023), p.~105565,
  \url{https://doi.org/https://doi.org/10.1016/j.sysconle.2023.105565}.

\bibitem{Trapitsin2018}
{\sc S.~Y. Trapitsin, O.~A. Granichina, and O.~N. Granichin}, {\em Social
  capital of professors as a factor of increasing the effectiveness of the
  university}, in 2018 IEEE International Conference "Quality Management,
  Transport and Information Security, Information Technologies" (IT\&QM\&IS),
  2018, pp.~739--742, \url{https://doi.org/10.1109/ITMQIS.2018.8524943}.

\bibitem{PappasCDC2019}
{\sc A.~Tsiamis and G.~J. Pappas}, {\em Finite sample analysis of stochastic
  system identification}, in 2019 IEEE 58th Conference on Decision and Control
  (CDC), 2019, pp.~3648--3654,
  \url{https://doi.org/10.1109/CDC40024.2019.9029499}.

\bibitem{TsiamisPappas2021}
{\sc A.~Tsiamis and G.~J. Pappas}, {\em Linear systems can be hard to learn},
  in 2021 60th IEEE Conference on Decision and Control (CDC), 2021,
  pp.~2903--2910, \url{https://doi.org/10.1109/CDC45484.2021.9682778}.

\bibitem{tsiamis2023statistical}
{\sc A.~Tsiamis, I.~Ziemann, N.~Matni, and G.~J. Pappas}, {\em Statistical
  learning theory for control: A finite-sample perspective}, IEEE Control
  Systems Magazine, 43 (2023), pp.~67--97.

\bibitem{vdVaart1998}
{\sc A.~W. van~der Vaart}, {\em Asymptotic Statistics}, Cambridge University
  Press, 1998.

\bibitem{verhaegen2007filtering}
{\sc M.~Verhaegen and V.~Verdult}, {\em Filtering and system identification: a
  least squares approach}, Cambridge university press, 2007.

\bibitem{Vidyasagar2003}
{\sc M.~Vidyasagar and R.~L. Karandikar}, {\em System identification: A
  learning theory approach}, in Control and Modeling of Complex Systems:
  Cybernetics in the 21st Century Festschrift in Honor of Hidenori Kimura on
  the Occasion of his 60th Birthday, K.~Hashimoto, Y.~Oishi, and Y.~Yamamoto,
  eds., Birkh{\"a}user Boston, Boston, MA, 2003, pp.~89--104,
  \url{https://doi.org/10.1007/978-1-4612-0023-9_6}.

\bibitem{vidyasagar2006learning}
{\sc M.~Vidyasagar and R.~L. Karandikar}, {\em A learning theory approach to
  system identification and stochastic adaptive control}, in Probabilistic and
  Randomized Methods for Design under Uncertainty, G.~Calafiore and F.~Dabbene,
  eds., Springer London, London, 2006, pp.~265--302,
  \url{https://doi.org/10.1007/1-84628-095-8_10}.

\bibitem{volkov2022randomized}
{\sc G.~A. Volkov, A.~A. Gruzdkov, and Y.~V. Petrov}, {\em A randomized
  approach to estimate acoustic strength of water}, in Mechanics and Control of
  Solids and Structures, Springer, 2022, pp.~633--640.

\bibitem{VolkovaGran2017}
{\sc M.~Volkova, O.~Granichin, Y.~Petrov, and G.~Volkov}, {\em Dynamic fracture
  tests data analysis based on the randomized approach}, Advances in Systems
  Science and Applications, 17 (2017), pp.~34--41.

\bibitem{volkova2018possibility}
{\sc M.~Volkova, O.~Granichin, G.~Volkov, and Y.~V. Petrov}, {\em On the
  possibility of using the method of sign-perturbed sums for the processing of
  dynamic test data}, Vestnik St. Petersburg University, Mathematics, 51
  (2018), pp.~23--30.

\bibitem{volpe2015sign}
{\sc V.~Volpe, B.~C. Cs{\'a}ji, A.~Car{\`e}, E.~Weyer, and M.~C. Campi}, {\em
  Sign-perturbed sums (sps) with instrumental variables for the identification
  of arx systems}, in 2015 54th IEEE Conference on Decision and Control (CDC),
  IEEE, 2015, pp.~2115--2120.

\bibitem{Weyer2000auto}
{\sc E.~Weyer}, {\em Finite sample properties of system identification of {ARX}
  models under mixing conditions}, Automatica, 36 (2000), pp.~1291--1299,
  \url{https://doi.org/https://doi.org/10.1016/S0005-1098(00)00039-X}.

\bibitem{AutomaticaSPS2017}
{\sc E.~Weyer, M.~C. Campi, and B.~{\relax Cs}. Cs{\'a}ji}, {\em Asymptotic
  properties of {SPS} confidence regions}, Automatica, 82 (2017), pp.~287 --
  294, \url{https://doi.org/https://doi.org/10.1016/j.automatica.2017.04.041}.

\bibitem{Weyer1996}
{\sc E.~Weyer, R.~C. Williamson, and I.~M. Mareels}, {\em Sample complexity of
  least squares identification of {FIR} models}, IFAC Proceedings Volumes, 29
  (1996), pp.~4664--4669,
  \url{https://doi.org/https://doi.org/10.1016/S1474-6670(17)58418-9}.
\newblock 13th World Congress of IFAC, 1996, San Francisco USA, 30 June - 5
  July.

\bibitem{weyer1999finite}
{\sc E.~Weyer, R.~C. Williamson, and I.~M. Mareels}, {\em Finite sample
  properties of linear model identification}, IEEE Transactions on Automatic
  Control, 44 (1999), pp.~1370--1383.

\bibitem{Zambianchi18}
{\sc V.~Zambianchi, F.~Bassi, A.~Calisti, D.~Dardari, M.~Kieffer, and
  G.~Pasolini}, {\em Distributed nonasymptotic confidence region computation
  over sensor networks}, IEEE Transactions on Signal and Information Processing
  over Networks, 4 (2018), pp.~308--324,
  \url{https://doi.org/10.1109/TSIPN.2017.2695403}.

\end{thebibliography}

\end{document}